\newcommand{\ALG}{\text{ALG}}
\newcommand{\OPT}{\text{OPT}}
\newcommand{\Ex}{\mathbb{E}}
\newtheorem{theorem}{Theorem}
\newtheorem{lemma}{Lemma}
\newtheorem{definition}{Definition}
\newtheorem{prop}{Proposition}
\newtheorem{remark}{Remark}
\newcommand{\nameOfProblem}{submodular $k$-secretary problem}
\newcommand{\nameOfProblemSL}{submodular $k$-secretary problem with shortlists}
\newcommand{\streamingProblem}{submodular random order streaming problem}
\newcommand{\nameOfProblemSLplus}{submodular $(k+1)$-secretary problem with shortlists}
\newcommand{\nameSec}{secretary problem}
\newcommand{\nameSecSL}{secretary problem with shortlists}
\newcommand{\scomment}[1]{}
\newcommand{\ccomment}[1]{}
\newcommand{\mcomment}[1]{}
\newcommand{\toRemove}[1]{}
\newcommand{\etaMacro}{$c\frac{\log(1/\epsilon)}{\epsilon^2}  {\frac{1}{\epsilon^6} \log(1/\epsilon)  \choose {\frac{1}{\epsilon^4} \log(1/\epsilon)}}$~}
\DeclarePairedDelimiter\floor{\lfloor}{\rfloor}
\title{Submodular Secretary Problem with Shortlists}
\date{ }
\begin{document}

\author{Shipra Agrawal  \thanks{Columbia University, \texttt{sa3305@columbia.edu}. Research supported in part by Google Faculty Research Awards 2017 and Amazon Research Awards 2017.
}	
\and 	Mohammad Shadravan \thanks{Columbia University, \texttt{ms4961@columbia.edu}} 	
\and 	Cliff Stein \thanks{Columbia University, \texttt{cliff@ieor.columbia.edu}. Research supported in part by NSF grants CCF-1421161 and CCF-1714818.} }





\maketitle


\begin{abstract} 
In \nameOfProblem, 
the goal is to select $k$ items in a randomly ordered input so as to maximize 
the expected value of a given monotone submodular function on the set of selected items. 
In this paper, we introduce a relaxation of this problem, which we refer to as \nameOfProblemSL.
In the proposed problem setting, the algorithm is allowed to choose more than $k$ items as part of a shortlist. Then, after seeing the entire input, the algorithm can choose a subset of size $k$ from the bigger set of items in the shortlist. We are interested in understanding to what extent this relaxation can improve the achievable competitive ratio for the \nameOfProblem. In particular, using an $O(k)$ shortlist, can an online algorithm achieve a competitive ratio close to the best achievable offline approximation factor for this problem? 

We answer this question affirmatively by giving a polynomial time algorithm that achieves a $1-1/e-\epsilon-O(k^{-1})$ competitive ratio for any constant $\epsilon>0$, using a shortlist of size $\eta_\epsilon(k)=O(k)$. This is especially surprising considering that the best known competitive ratio (in polynomial time) for the \nameOfProblem~is $(1/e-O(k^{-1/2}))(1-1/e)$ \cite{kesselheim}. Further, for the special case of $m$-submodular functions, we demonstrate an algorithm that achieves $1-\epsilon$ competitive ratio for any constant $\epsilon>0$, using an $O(1)$ shortlist.

The proposed algorithm also has significant implications for another important problem of submodular function maximization under random order streaming model and $k$-cardinality constraint. We show that our algorithm can be implemented in the streaming setting using a memory buffer of size $\eta_\epsilon(k)=O(k)$ to achieve a $1-1/e-\epsilon-O(k^{-1})$ approximation. 
This substantially improves upon \cite{norouzi}, which achieved the previously best known approximation factor of $1/2 + 8\times 10^{-14}$ using $O(k\log k)$  memory.

\toRemove{
In \nameOfProblem, 
the goal is to select $k$ items in a randomly ordered input so as to maximize 
the expected value of a given monotone submodular function on the set of selected items. 
For each element that arrives we have to irrevocably decide whether or not to select it.
The best known result (in polynomial time) is an algorithm with $(1/e-O(k^{-1/2}))(1-1/e)$ asymptotic competitive ratio~\cite{kesselheim}.\newline
In this paper, we introduce a relaxation of We relax the problem by allowing the algorithm to select more than $k$ elements, and return a subset of size $k$
out of selected elements at the end of algorithm. Our main result is that for any $R<1-1/e$ there is an algorithm with asymptotic competitive ratio
$R-O(k^{-1})$ that only selects $O(k)$  elements. The running time is linear in  $n$ the size of input.
\newline
Our algorithm can also be considered a single pass streaming algorithm on random order inputs. 
The best known streaming algorithm for adversarial order input is $1/2-\epsilon$ approximation using memory of size $O(\frac{1}{\epsilon}k\log k)$~\cite{Badanidiyuru2014StreamingFly}. No algorithm can achieve better than $1/2+o(1)$ in the adversarial model using $o(n)$ memory~\cite{norouzi}. 
Our algorithm substantially improves~\cite{norouzi}, which uses $O(k\log k)$ memory to get $1/2+8\times 10^{-14}$ approximation in the random order model. 
We use only $O(k)$  memory to get asymptotic $R$-approximation for any $R<1-1/e$.
Also our algorithm is highly parallel.

\newline
We also provide an upper bound showing that it is not possible to achieve any competitive ratio better than $7/8$ by selecting $o(n)$ elements even with unlimited computational power.
Furthermore, we provide two families of monotone submodular functions that we can asymptotically approach the optimal solution.
}

\end{abstract}

\section{Introduction}
 

In the classic \textit{secretary problem}, 
$n$ items appear in random order.  We know $n$, but don't know the value of an item until it appears.   
Once an item arrives we have to irrevocably and immediately decide whether or not to select it. Only one item is allowed to be selected, and the objective is to  select the most valuable item, or perhaps to 
maximize the expected value of the selected item  
~\cite{Dynkin:SovMath:1963,ferguson1989solved, 10.2307/2985407}. 
It is well known that the optimal policy is to observe the first $n/e$ items without making any selection and then select the first item whose value is larger than the value of the best item in the first $n/e$ items~\cite{Dynkin:SovMath:1963}. This algorithm, given by ~\citet{Dynkin:SovMath:1963},
is asymptotically optimal, and hires the best secretary with probability at least $1/e$. Hence it
is also $1/e$-competitive for the expected value of the chosen item, and
it can be shown that no algorithm can beat $1/e$-competitive ratio in expectation.

Many variants and generalizations of the secretary problem have been studied in the literature, see e.g., \cite{Ajtai:2001,WILSON1991325,vanderbei1980optimal, wilson1991optimal, kleinberg, Babaioff:2008}.  
\cite{kleinberg, Babaioff:2008} introduced a 
{multiple choice secretary problem}, where the goal is to select $k$ items in a randomly ordered input so as to maximize  the {\it sum} of their values; and  \citet{kleinberg} gave an algorithm with an asymptotic competitive ratio of $1-O(1/\sqrt{k})$. Thus as $k\to \infty$, the competitive ratio approaches  1. Recent literature studied several generalizations of this setting to multidimensional knapsacks \cite{moser1997algorithm}, and proposed algorithms for which the expected online solution approaches the best offline solution as the knapsack sizes becomes large~(e.g., \cite{FHKMS10, devanur-hayes, AgrawalWY14}). 

In another variant of multiple-choice secretary problem, \citet{Bateni} and ~\citet{Gupta:2010} introduce 
the \textit{\nameOfProblem}. In this secretary problem, the algorithm again selects $k$ items, but 
the value of the selected items is given by a monotone submodular function 
$f$. The algorithm has a value oracle access to the function, i.e., for any given set $T$, an algorithm can query an oracle to find its value $f(T)$~\cite{oracle}.
The algorithm can select at most $k$ items $a_1 \cdots, a_k$, from a randomly ordered sequence of $n$ items. The goal is to maximize $f(\{a_1,\cdots, a_k\})$.
Currently, the best result for this setting is due to ~\citet{kesselheim}, who achieve a $1/e$-competitive ratio in exponential time, or $\frac{1}{e}(1-\frac{1}{e})$ in polynomial time. In this case, the offline problem is NP-hard and hard-to approximate beyond the factor of $1-1/e$ achieved by the greedy algorithm \cite{nemhauser1978best}. 
However, it is unclear if a competitive ratio of $1-1/e$ can be achieved by an online algorithm for the {\nameOfProblem} even when $k$ is large. 

\paragraph{Our model: \nameSecSL.}
In this paper, we consider a relaxation of the secretary problem 
where the algorithm is allowed to  select a {\it shortlist} of items that is larger than the number of items that ultimately need to be selected. That is, in a multiple-choice secretary problem with cardinality constraint $k$, the algorithm is allowed to choose more than $k$ items as part of a shortlist. Then, after seeing the entire input, the algorithm can choose a subset of size $k$ from the bigger set of items in the shortlist. 

This new model is motivated by some practical applications of secretary problems, such as hiring (or assignment problems),  where in some cases it may be possible to tentatively accept a larger number of candidates (or requests), while deferring the choice of the final $k$-selections to after all the candidates have been seen. Since there may be a penalty for declining candidates who were part of the shortlist, one would prefer that the shortlist is not much larger than $k$.

Another important motivation is theoretical: we wish to understand to what extent this relaxation of the secretary problem can improve the achievable competitive ratio. This question is in the spirit of several other methods of analysis that allow an online algorithm to have additional power, such as {\em resource augmentation} \cite{KalyanasundaramP00,PhillipsSTW97}.

The potential of this relaxation is illustrated by the basic secretary problem, where the aim is to select the item of maximum value among randomly ordered inputs. There, it is not difficult to show that if an algorithm  picks every item that is better than the items seen so far, the true maximum will be found, while the expected number of items picked under randomly ordered inputs will be $\log(n)$. Further, we show that this approach can be easily modified to get the maximum with $1-\epsilon$ probability while picking at most  $O(\ln(1/\epsilon))$ items for any constant $\epsilon>0$. Thus, with just a constant size shortlist, we can break the $1/e$ barrier for the secretary problem and achieve a competitive ratio that is arbitrarily close to $1$! 

Motivated by this observation, we ask if a similar improvement can be achieved by relaxing the \nameOfProblem~to allow a shortlist. That is, instead of choosing $k$ items, the algorithm is allowed to chose  $\eta(k)$ items as part of a shortlist, for some function $\eta$;
and at the end of all inputs, the algorithm chooses $k$ items from the $\eta(k)$ selected items. Then, what is the relationship between $\eta(\cdot)$ and the competitive ratio for this problem? Can we achieve a solution close to the best offline solution when $\eta(k)$ is not much bigger than $k$, for example  when 
 $\eta(k)=O(k)$?  

 


In this paper, we answer this question affirmatively by giving a polynomial time algorithm that achieves $1-1/e-\epsilon-O(k^{-1})$ competitive ratio for the \nameOfProblem~using a shortlist of size $\eta(k)=O(k)$. This is surprising since $1-1/e$ is the best achievable approximation (in polynomial time) for the offline problem. Further, for some special cases of submodular functions, we demonstrate that an $O(1)$ shortlist allows us to achieve a $1-\epsilon$ competitive ratio. These results demonstrate the power of (small) shortlists for closing the gap between online and offline (polynomial time) algorithms. 

We also discuss connections of {\nameSecSL} to the related streaming settings. While a streaming algorithm does not qualify as an online algorithm (even when a shortlist is allowed), we show that our algorithm can in fact be implemented in a streaming setting to use $\eta(k)=O(k)$ memory buffer; and our results significantly  improve  the available results for the \streamingProblem.

\subsection{Problem Definition}


We now give a more formal definition.
Items from a set ${\cal U} = \{a_1, a_2, \ldots, a_n\}$  (pool of items) arrive in a uniformly random order over $n$ sequential rounds. The set ${\cal U}$ is apriori fixed but unknown to the algorithm, and the total number of items $n$ is known to the algorithm. In each round, the algorithm irrevocably decides whether to add the arriving item to a {\it shortlist} $A$ or not. 
The algorithm's value at the end of $n$ rounds is given by 
$$\ALG = \Ex[\max_{S\subseteq A, |S|\le k} f(S)]$$ 
where $f(\cdot)$ is a monotone submodular function. The algorithm has value oracle access to this function.

The optimal offline utility is given by
$$\OPT:=f(S^*), \text{ where } S^*=\arg \max_{S \subseteq [n], |S|\le k} f(S).$$ 
We say that an algorithm for this problem achieves a competitive ratio  $c$ using shortlist of size $\eta(k)$, if at the end of $n$ rounds, $|A|\le \eta(k)$ and $\frac{\ALG}{\OPT}\le c$.

Given the shortlist $A$, since the problem of computing the solution $\arg \max_{S\subseteq A, |S|\le k} f(S)$ can itself be computationally intensive, our algorithm will also track and output a subset $A^* \subseteq A, |A^*| \le k$. We will lower bound  the competitive ratio by bounding $\frac{f(A^*)}{f(S^*)}$.

The above problem definition has connections to some existing problems studied in the literature. The well-studied online \nameOfProblem~described earlier is obtained from the above definition by setting $\eta(k)=k$, i.e., it is same as the case when no extra items can be selected as part of a shortlist. 
Another related problem is {\it  \streamingProblem}~studied in~\cite{norouzi}. In this problem, items from a set $\cal U$ arrive online in random order and the algorithm aims to select a subset $S \subseteq {\cal U}, |S|\le k$ in order to maximize $f(S)$. The streaming algorithm is allowed to maintain a {\it buffer} of size $\eta(k)\ge k$. 
However, this streaming problem is distinct from the \nameOfProblemSL\ in several important ways. On one hand, since an item previously selected in the memory buffer can be discarded and replaced by a new items, a memory buffer of size $\eta(k)$ does not imply a shortlist of size at most $\eta(k)$. On the other hand, in the secretary setting, we are allowed to memorize/store more than $\eta(k)$ items without adding them to the shortlist. Thus an algorithm for \nameOfProblem with shortlist of size $\eta(k)$ may potentially use a buffer of size larger than $\eta(k)$. 
Our algorithms, as described in the paper, do use a large buffer, but we will show
that the algorithm presented in this paper can in fact be implemented to use only $\eta(k)=O(k)$ buffer, thus obtaining matching results for the streaming problem. 


\subsection{Our Results}
Our main result is an online algorithm  for \nameOfProblemSL\ that, for any constant $\epsilon>0$, achieves a competitive ratio of $1-\frac{1}{e} - \epsilon-O(\frac{1}{k})$ with $\eta(k) = O(k)$. 
Note that for \nameOfProblem\ there is an upper bound of $1-1/e$ on the achievable aproximation factor, even in the offline setting, and this upper bound applies to our problem for arbitrary size $\eta(\cdot)$ of shortlists. On the other hand for online monotone \nameOfProblem, i.e., when $\eta(k)=k$, the best competitive ratio achieved in the literature is $1/e-O(k^{-1/2})$~\cite{kesselheim} 
Remarkably, with only an $O(k)$ size shortlist, our online algorithm is able to achieve a competitive ratio that is arbitrarily close to the offline upper bound of $1-1/e$.

In the theorem statements below, big-Oh notation $O(\cdot)$ is used to represent asymptotic behavior with respect to $k$ and $n$. We assume the standard  value oracle model:  the only access to the submodular function is through a black box
returning $f(S)$ for a given set $S$, and  each such queary can be done in $O(1)$ time. 
\begin{theorem} \label{opttheorem}
For any constant $\epsilon>0$, there exists an online algorithm (Algorithm \ref{alg:main}) for the \nameOfProblemSL\ that achieves a competitive ratio of $1-\frac{1}{e} -\epsilon -O(\frac{1}{k})$, with shortlist of size $\eta_\epsilon(k)=O(k)$. Here,  $\eta_\epsilon(k)=O(2^{poly(1/\epsilon)}k)$.  
  The running time of this online algorithm is $O(n)$.
\end{theorem}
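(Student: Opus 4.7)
My plan is to simulate the offline greedy algorithm online, using the shortlist as slack to compensate for the fact that we cannot identify the true greedy pick in a round without seeing the future. Since greedy achieves $1-1/e$ offline, a faithful online simulation (up to a $(1-\epsilon)$ distortion) yields the claimed competitive ratio. More precisely, I would number the greedy positions $j=1,\ldots,k$, and for each $j$ try to add to our shortlist an item whose marginal value $f(S\cup\{a\})-f(S)$ is close to the maximum that greedy would have chosen at step $j$, where $S$ is the partial solution constructed so far.

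The key enabler is the single-choice observation highlighted in the introduction: under random arrival, one can select an item of (near-)maximum value with probability $1-\epsilon$ using a shortlist of size only $O(\log(1/\epsilon))$. I would like to deploy this as a subroutine at each greedy position $j$, which would give a total shortlist of size $O(k\cdot \mathrm{poly}(1/\epsilon))$, matching $\eta_\epsilon(k)=O(2^{\mathrm{poly}(1/\epsilon)}k)$. To make the subroutines independent enough to analyze, I would (i) use a short initial sample of the stream to learn a discretization of the scale of the top marginal gains; (ii) partition the remaining stream either explicitly into $k$ blocks or implicitly via thresholds on marginal gains; (iii) run, in each block/threshold level, a secretary-with-shortlist routine that accepts items whose current marginal gain exceeds the level. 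Accepted items go into the shortlist, and at the end we solve the offline cardinality-$k$ submodular maximization on the shortlist to obtain $A^*$.

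The main obstacle, and where the bulk of the work lies, is handling the coupling across positions: the marginal gain of an item at position $j$ depends on which items our online simulation committed to at positions $1,\ldots,j-1$, and these commitments were themselves made with only partial information. To control this, I would (a) show, via submodularity, that committing to a few extra candidates per level cannot hurt the marginal-gain landscape seen by later levels by more than an $\epsilon$-factor; (b) couple the online trajectory to an idealized ``greedy on a random sample'' process, using the random-order assumption to argue that the $k$ items producing OPT's largest marginal contributions arrive distributed across blocks in a balanced way up to a concentration error of $O(1/k)$; and (c) translate the $(1-\epsilon)$ per-level success probability of the shortlist-secretary subroutine into an overall $(1-\epsilon)$ loss, absorbing union-bound terms into the $2^{\mathrm{poly}(1/\epsilon)}$ factor in $\eta_\epsilon$.

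Finally, the $1-1/e$ factor would come from the standard submodular greedy analysis applied to the idealized process, the additive $\epsilon$ from discretization, sampling, and per-level failure probabilities, and the $O(1/k)$ from boundary effects (e.g., the sample prefix that is discarded, and concentration slack in how OPT elements distribute across blocks). The running time of $O(n)$ is consistent with per-round processing being amortized $O(1)$, which is plausible if the per-level acceptance rule is threshold-based and each arriving item is tested against only $O(1)$ thresholds with $O(1)$ oracle queries; I would verify this carefully when designing the subroutine. The hardest step, which I would tackle first, is formalizing claim (a) above — that the shortlist's extra commitments do not significantly perturb later marginal values — since this is the crux that lets the shortlist break the $1/e$ online barrier.
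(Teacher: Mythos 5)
There is a genuine gap. Your plan correctly identifies the paper's starting point (simulate greedy online, using the secretary-with-shortlist routine to grab near-maximum marginal-gain items), but it is missing the one idea that actually makes the analysis deliver $1-1/e$: a handle on the \emph{conditional} distribution of where the not-yet-collected items of $S^*$ sit, given the algorithm's history. Your step (b) asserts that the OPT items are spread across blocks in a balanced way "by the random-order assumption," but that statement is only true unconditionally; once you condition on the items the algorithm has observed and committed to in earlier blocks, the positions of the remaining OPT items are biased, and this is exactly the difficulty. The paper resolves it with a careful exchange argument (its Lemmas on $p_{is}$, independence across slots, and the slot-reassignment lemma) showing that, conditioned on the tracked history $T_{1,\ldots,w-1}$, every remaining OPT item is at least as likely to land in any future slot as anywhere else (probability at least $1/(k\beta)$), followed by a normalization ($Z_s$) making these appearances uniform and independent; this is also why the greedy step is allowed to pick from the previously tracked set $R_{1,\ldots,w-1}$ in addition to the current slot — without that relaxation the exchange argument breaks. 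Only with this does one get the per-step inequality $\Ex[\Delta_f] \ge \frac{1}{k}\bigl((1-\frac{\alpha}{k})f(S^*)-f(\text{current})\bigr)$, which feeds the standard greedy recursion and yields $1-1/e$. A threshold/level-based acceptance rule of the kind you sketch is the sieve-streaming paradigm, which is only known to give roughly $1/2$ even in random order; nothing in your outline shows how it would cross that barrier.

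Two further mismatches signal the same missing mechanism. First, your item (a) misplaces the crux: in the paper the extra shortlist items never perturb later marginal values, because the algorithm's tracked sets $S_w,R_w$ are defined by a deterministic greedy recursion over slots (computed at window end) and the final $k$-subset is chosen only after the stream; the shortlist entries are mere candidates, so no "perturbation" lemma is needed. Second, your predicted shortlist size $O(k\cdot\mathrm{poly}(1/\epsilon))$ does not match $\eta_\epsilon(k)=O(2^{\mathrm{poly}(1/\epsilon)}k)$: the exponential factor in the paper comes from enumerating all $\binom{\alpha\beta}{\alpha}$ slot-subsequences per window (the "guessing" of which slots carry OPT items), not from union-bounding failure probabilities, and your plan has no analogous mechanism for identifying, online, which positions within a block should contribute a greedy pick.
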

Specifically, we have $\eta_\epsilon(k)=$\etaMacro$k$ for some constant $c$.
The running time  of our algorithm is linear in $n$, the size of the input, which 
is significant as, until recently, it was not known if there exists a linear time algorithm achieving a $1-1/e-\epsilon$ approximation even for the offline monotone submodular maximization problem under cardinality constraint\cite{linear}. Another interesting aspect of our algorithm is that it is highly parallel. Even though the decision for each arriving item may take time that is exponential in $1/\epsilon$ (roughly  $\eta_\epsilon(k)/k$), it can be readily parallelized among multiple (as many as $\eta_\epsilon(k)/k$) processors. 



Further, we show an implementation of Algorithm 2 that uses a memory buffer of size at most $\eta_\epsilon(k)$ to get the following result for the problem of {\it \streamingProblem} described in the previous section. 
\begin{restatable}{theorem}{thmStreaming}
\label{thm:streaming}
For any constant $\epsilon\in (0,1)$, there exists an algorithm for the \streamingProblem  that achieves $1-\frac{1}{e} -\epsilon -O(\frac{1}{k})$ approximation to $\OPT$ while using a memory buffer of size at most $\eta_\epsilon(k)=O(k)$. Also, the number of objective  function evaluations for each item, amortized over $n$ items, is $O(1+\frac{k^2}{n})$.
\end{restatable}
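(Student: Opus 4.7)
The plan rests on showing that Algorithm \ref{alg:main}, although stated and analyzed in the \nameSecSL{} model (where arbitrary memorization of past items is allowed), in fact only ever needs to \emph{store} the items currently in its shortlist $A$, plus a constant amount of auxiliary state. The first step would be to re-examine the description of Algorithm \ref{alg:main} and verify that its decision on an arriving item $a_t$ depends only on (i) the current shortlist $A_{t-1}$, whose cardinality never exceeds $\eta_\epsilon(k)=O(k)$, (ii) a small collection of marginal-value thresholds, and (iii) marginal gains of the form $f(S\cup\{a_t\})-f(S)$ for subsets $S\subseteq A_{t-1}$ chosen by the algorithm. No rejected item is ever re-queried, so after the decision on $a_t$ is made we can discard $a_t$ if it is not added to $A$. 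This already bounds the memory footprint of the online phase by $|A|\le \eta_\epsilon(k)=O(k)$.

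The second step is to handle the sampling/threshold-setting phase, which in the secretary analysis observes the first $\Theta(\epsilon n)$ items and uses them to estimate $\OPT$ and fix thresholds; naively this phase stores $\Theta(\epsilon n)$ items, which violates the $O(k)$ buffer budget. I would show that these estimates can be computed incrementally in a streaming fashion: any quantity the algorithm extracts from the sample (e.g., a greedy value on the sample, or an estimate of $f(S^*)$) can be maintained online using only an $O(k)$-size candidate set, updated at each arrival and discarded otherwise. This preserves the statistical guarantees used in the proof of Theorem \ref{opttheorem}, because those guarantees depend only on the distributional properties of the random prefix, not on retaining the raw sample. The approximation guarantee of $1-\tfrac{1}{e}-\epsilon-O(1/k)$ then carries over verbatim, since we are running the same decision rule as Algorithm \ref{alg:main} on the same random order.

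For the runtime claim, I would separate the $n$ arrivals into the $\Theta(\epsilon n)$ items in the sample phase and the remaining items in the online phase. In the online phase, each item costs $O(1)$ oracle queries (a constant depending on $\epsilon$ but not on $k$ or $n$), since the algorithm essentially performs a bounded number of threshold comparisons against the current shortlist. In the sample phase, maintaining the incremental threshold structure (which behaves like a bounded-size greedy over the prefix) costs $O(k)$ oracle queries per arrival, contributing $O(k^2)$ queries in total (since the sample has size $O(k/\epsilon)$ after one replaces $\epsilon n$ by $\min(\epsilon n, O(k))$ in the relevant subroutine, or alternatively $O(\epsilon n \cdot k)$ queries over a full $\epsilon n$-sample bounded by the $O(k^2)$ term when $k=\Omega(\epsilon n)$). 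Amortizing over the $n$ arrivals yields the claimed $O(1+k^2/n)$ bound per item.

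The main obstacle will be the second step: verifying that every quantity the offline analysis of Algorithm \ref{alg:main} extracts from the sample prefix admits a streaming implementation with $O(k)$ memory and without hurting the competitive-ratio bound. In particular, any subroutine that relies on enumerating subsets of the sample, or on re-examining rejected items after later information becomes available, would need to be replaced by an incremental surrogate; checking this case-by-case against the structure of Algorithm \ref{alg:main} is the real content of the proof, while the shortlist bound and the amortized query count follow essentially by inspection once this streaming reformulation is in place.
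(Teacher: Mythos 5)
There is a genuine gap, and it sits exactly where you defer it. Your ``second step'' is aimed at a phase that does not exist: Algorithm \ref{alg:main} has no global $\Theta(\epsilon n)$-item sample used to estimate $\OPT$ or to set thresholds. The only observation period is the $N\delta/2$ prefix \emph{inside each slot} in Algorithm \ref{alg:SIIImax}, and that needs only a scalar running maximum, not stored items. The real obstruction to an $O(k)$ buffer --- the one the paper itself flags when it says the algorithm ``as described'' uses a large buffer --- is line \ref{li:Rw} of Algorithm \ref{alg:main}: $R_w$ and $S_w$ are defined through the greedy subsequences $\gamma(\tau)$ of \eqref{eq:gamma}--\eqref{eq:Sw}, whose argmax in \eqref{eq:ij} ranges over \emph{all} items of a slot (plus $R_{1,\ldots,w-1}$), and the dummy value $a_0$ likewise maximizes over $R_{1,\ldots,w-1}$; computed literally ``at the end of window $w$'' this presumes access to every item of the window. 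So your step-one claim that ``no rejected item is ever re-queried'' is not an inspection-level fact --- it is precisely the statement that must be established, either by computing each $\gamma(\tau)$ incrementally (keeping only running argmaxes and the $O_\epsilon(k)$ items of $R_{1,\ldots,w-1}$ in the buffer) or by restricting the argmax to the $O(\log(1/\epsilon))$ items that Algorithm \ref{alg:SIIImax} actually retained for each (slot,\,$\tau$) pair, and then re-running the competitive-ratio argument: one needs Proposition \ref{maxanalysis} (the true slot-argmax is retained with probability $1-\delta$) together with a Lemma \ref{online}/Lemma \ref{sample}-style argument to show the degradation is only a $(1-\delta)$ factor, as in Proposition \ref{prop:online}. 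Your proposal explicitly postpones this (``checking this case-by-case \ldots is the real content of the proof''), so the central step of Theorem \ref{thm:streaming} is missing, while the step you do elaborate (streaming estimation of $\OPT$ from a prefix sample) solves a problem the algorithm does not have.

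The query-count accounting is also not grounded in the actual algorithm. The $k^2$ term has nothing to do with a sample phase: each arriving item costs $O_\epsilon(1)$ evaluations (one marginal value per active subsequence $\tau$), contributing $O_\epsilon(n)$ in total, while the $O_\epsilon(k^2)$ term comes from the $O_\epsilon(k)$ occasions --- one per slot per subsequence $\tau$ over all $k/\alpha$ windows, plus the end-of-window selection of $\tau^*$ --- on which a maximization over the $O_\epsilon(k)$-size set $R_{1,\ldots,w-1}$ (e.g., the dummy value $a_0$) is performed at $O(k)$ evaluations each. Amortizing $O_\epsilon(n+k^2)$ over $n$ arrivals gives the claimed $O(1+\frac{k^2}{n})$; your derivation via ``replacing $\epsilon n$ by $\min(\epsilon n, O(k))$'' does not correspond to any quantity in Algorithm \ref{alg:main} and would need to be redone on the correct structure.
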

The above result significantly improves over the state-of-the-art results in random order streaming model \cite{norouzi}, which are an approximation ratio of $\frac 1 2+8\times 10^{-14}$ using a memory of size $O(k\log k)$. 



It is natural to ask whether these $k$-lists are, in fact, too powerful.  Maybe they could actually allow us to always match the best offline algorithm.  We give a negative result in this direction and 
show that even if we have unlimited computation power, for any function $\eta(k)=o(n)$, we can get no better than $7/8$-competitive algorithm using a shortlist of size $\eta(k)$. Note that with unlimited computational power, the offline problem can be solved exactly. This result demonstrates that 
having a shortlist does not make the online problem too easy - 
even with a shortlist (of size $o(n)$) there is an information theoretic gap between the online and offline problem. 

\begin{restatable}{theorem}{hardness}
\label{hardness}
No online algorithm (even with unlimited computational power)   can achieve a competitive ratio better than $7/8+o(1)$ for the \nameOfProblemSL, while using a shortlist of size $\eta(k)=o(n)$.
\end{restatable}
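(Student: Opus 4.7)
The strategy is Yao's minimax principle: exhibit a distribution over instances of the \nameOfProblemSL\ on which every deterministic online algorithm that shortlists at most $\eta(k) = o(n)$ items achieves expected ratio at most $7/8 + o(1)$.

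For the hard instance I would take a uniformly random partition of the pool $\mathcal{U}$ into three classes: a ``winning'' class $T_A$ and a ``rival'' class $T_B$, each of size $k$, and $n-2k$ null items. The monotone submodular $f$ is a coverage-type function on an auxiliary ground set of size $\Theta(k)$, calibrated so that: (a) $\OPT = f(T_A)$ and the offline optimum is attained at $T_A$ (or, by symmetry, $T_B$); (b) the value oracle is invariant under the $T_A \leftrightarrow T_B$ label swap on any subset whose intersections with $T_A$ and $T_B$ are both small, so that the algorithm cannot identify the true winning class without having accumulated linearly many witnesses of each type; and (c) any $k$-subset with roughly balanced counts from $T_A$ and $T_B$ has value at most $(7/8)\,\OPT$. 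A concrete realization places the ground set on a $k\times 8$ grid where each $T_A$ item covers one full row (8 cells) and each $T_B$ item covers the corresponding row minus one distinguished cell (7 cells), so OPT paints the entire grid while any balanced mix covers only a $7/8$ fraction.

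The core step couples the algorithm's trajectory on the true instance with its trajectory on the label-swapped instance. Because the shortlist is irrevocable and of size $o(n)$, and because oracle queries on items seen so far can only differentiate $T_A$ from $T_B$ after a symmetry-breaking event that requires $\Omega(n)$ items to have arrived, a union bound shows that with probability $1-o(1)$ the two coupled trajectories agree throughout the first $(1-o(1))\,n$ rounds. This forces the shortlist to be class-balanced in expectation: $\bigl|\Ex[|A\cap T_A|] - \Ex[|A\cap T_B|]\bigr| = o(k)$. The best $k$-subset $A^*$ extractable from a balanced shortlist cannot beat the $7/8$ threshold given by property~(c), even with unlimited offline computation. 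Averaging over the random labeling via Yao converts this into the claimed upper bound for randomized algorithms against worst-case inputs.

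The principal obstacle is the construction itself: one must engineer the coverage pattern so that symmetry between $T_A$ and $T_B$ is genuinely preserved by \emph{every} oracle query whose set touches only $o(k)$ items of each class, while simultaneously enforcing the sharp $7/8$ separation for balanced $k$-selections. A secondary technical difficulty is the coupling argument, where one must verify that no adaptively-chosen sequence of oracle queries on seen-but-not-shortlisted items can anticipate the symmetry-breaking event and thereby funnel disproportionately many $T_A$ items into the shortlist.
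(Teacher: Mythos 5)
Your high-level plan (Yao's principle, a symmetric two-class instance, and a coupling argument showing the shortlist must be class-balanced) is reasonable in spirit, but the concrete construction you give does not work, and the failures are structural rather than cosmetic. First, the instance is not oracle-symmetric: a singleton query $f(\{i\})$ returns $8$ for a $T_A$ item and $7$ for a $T_B$ item, so the algorithm learns the class of every arriving item the moment it arrives. The claim that symmetry-breaking requires $\Omega(n)$ arrivals is simply false for this $f$, and with it the coupling and the balance conclusion collapse. Second, and more fundamentally, the theorem must defeat \emph{every} shortlist size $\eta(k)=o(n)$, which is allowed to be much larger than $k$ (e.g.\ $3k$ or $\sqrt{n}$). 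In your instance only $2k$ items have nonzero value, so an algorithm can shortlist every item with positive singleton marginal value and output $T_A$ at the end, achieving ratio $1$. Any valid hard distribution must hide the optimum among $\Omega(n)$ items that the value oracle cannot distinguish at the time a shortlisting decision has to be made, so that a budget of $o(n)$ necessarily forfeits a constant fraction of the value; no construction with only $O(k)$ ``non-null'' items can prove the theorem. Third, even granting class balance, the quantitative step fails twice: a balanced shortlist of size $\geq 2k$ may contain all of $T_A$ together with all of $T_B$ (which is perfectly balanced) and hence recover $\OPT$ exactly; and even a forced $50/50$ mix of exactly $k$ items in your grid covers $8(k/2)+7(k/2)=7.5k$ cells, a $15/16$ fraction of $\OPT$, not $7/8$.

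So the proposal as written has genuine gaps at the construction, the indistinguishability claim, and the final accounting. Note that the place your argument is weakest is exactly where the real difficulty of this theorem lies: the adversarial distribution must make $\Theta(n)$ items look like equally plausible members of $S^*$ under all value-oracle queries the algorithm can form from items already seen, so that an $o(n)$ shortlist provably misses the items it needs with constant probability, and the constant $7/8$ must then emerge from that information-theoretic case analysis rather than from a balance-versus-$\OPT$ computation. If you want to salvage your approach, start by making the set of a-priori-plausible optimal items have size $\Theta(n)$ and verify oracle-symmetry for \emph{all} queries (including singletons and sets mixing seen items of both hidden types) before revisiting the coupling and the extraction bound for $A^*$.
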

Finally, for some special cases of monotone submodular functions, we can asymptotically approach the optimal solution.
The first one is the family of functions we call $m$-submdular. 
A function $f$ is $m$-submodular if it is submodular and there exists a submodular function $F$ such that for all $S$:
\[ 
f(S)= \max_{T\subseteq S, |T|\le m} F(T) \ .
\]
\scomment{This needs to be rewritten, I don't have time right now, so removing: Example of $m$-submodular functions are   maximum node weighted bipartite matching and  maximum edge weighted bipartite matching defined on $G=(X\times Y)$ with $|Y|=m$.  (the assignments will be done at the end of algorithm after all the selections are made).}

\begin{theorem}
\label{thm:msub}
If $f$ is an $m$-submodular function, there exists an online algorithm for the \nameOfProblemSL~that achieves a competitive ratio of $1-\epsilon$ with shortlist of size $\eta_{\epsilon,m}(k)=O(1)$. Here, 
$\eta_{\epsilon,m}(k) = (2m+3) \ln(2/\epsilon)$.
\end{theorem}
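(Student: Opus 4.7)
The key structural observation is that for an $m$-submodular function $f$, $\OPT = f(S^*) = F(T^*)$ for some $T^* \subseteq S^*$ with $|T^*| \le m$. Since $m$ is a constant independent of $k$ and $n$, the task reduces to capturing (or approximating) an $m$-subset under $F$ from a random-order stream; the cardinality constraint $k$ plays no binding role. The introduction already sketches how, for $m=1$, a shortlist of $O(\log(1/\epsilon))$ suffices via an ``observe first $\epsilon n$ items, then accept any running record'' strategy; the plan is to lift that argument to general $m$, paying an extra factor roughly proportional to $m$.

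My proposed algorithm has a \emph{sample phase} consisting of the first $\tau n$ items (for $\tau := \epsilon/(2m)$) which are only observed, and a \emph{selection phase} in which, at each arrival $a_{\sigma(i)}$, the algorithm maintains $T^*_i := \arg\max_{T \subseteq S_i,\,|T|\le m} F(T)$ where $S_i = \{a_{\sigma(1)},\ldots,a_{\sigma(i)}\}$, and adds $a_{\sigma(i)}$ to the shortlist $A$ whenever $a_{\sigma(i)} \in T^*_i$ (or, more robustly to synergies, whenever $a_{\sigma(i)}$ enters some top $m$-subset of $S_i$ under a sample-calibrated threshold). The algorithm stops adding once $|A|$ reaches $\eta_{\epsilon,m}(k) = (2m+3)\ln(2/\epsilon)$, and outputs $\arg\max_{T \subseteq A,\,|T|\le k} f(T)$.

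The analysis splits into two failure events. (E1) Some element of a fixed optimal $T^*$ lies in the sample and is therefore never shortlisted: by a union bound, $\Pr[\text{E1}] \le m\tau \le \epsilon/2$. (E2) The shortlist cap is hit prematurely. For (E2), I would exploit random-order exchangeability: conditional on $S_i$, the last arrival is uniform over $S_i$, so $\Pr[a_{\sigma(i)} \in T^*_i \mid S_i] \le |T^*_i|/i \le m/i$. Summing yields $\Ex[|A|] \le m \ln(1/\tau) = m\ln(2m/\epsilon)$, and a Bernstein/Freedman martingale tail on the indicators $\mathbf{1}[a_{\sigma(i)} \in T^*_i]$ (which are conditionally Bernoulli with parameter at most $m/i$) keeps $|A|$ below $(2m+3)\ln(2/\epsilon)$ with probability at least $1-\epsilon/2$. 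Under $\neg\text{E1} \land \neg\text{E2}$ the shortlist contains $T^*$ (or an $F$-equivalent alternative supplied by a submodular exchange argument), so the output achieves $F(T^*) = \OPT$; combining the failure probabilities by union bound gives competitive ratio at least $1-\epsilon$.

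The main obstacle I anticipate is ensuring the acceptance criterion actually admits every element of $T^*$ at its arrival time. The naive rule ``add iff $a_{\sigma(i)} \in T^*_i$'' can fail due to synergies: an element of $T^*$ may carry little value in isolation and only become useful once its partners arrive, so at its arrival time it is absent from the current best $m$-subset and is rejected. I expect the fix is a broader criterion using a threshold calibrated from the sample -- accept $a_{\sigma(i)}$ whenever some $T \ni a_{\sigma(i)}$ with $|T|\le m$ has $F(T)$ exceeding a constant fraction of the sample-estimated $\OPT/m$ -- together with a standard submodular exchange argument showing that the collected shortlist remains rich enough to approximate $T^*$. Tightening the expected-size bound $m\ln(2m/\epsilon)$ into the stated $(2m+3)\ln(2/\epsilon)$ via martingale concentration is a routine but constant-sensitive last step.
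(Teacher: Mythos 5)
There is a genuine gap, and it is exactly the one you flag at the end: your acceptance rule is not just fragile, it is wrong, and the patch you sketch does not repair it. The rule ``add $a_{\sigma(i)}$ iff $a_{\sigma(i)}\in T^*_i$'' (or ``iff it improves $f$ of the prefix'') can permanently discard elements of the true optimum even though $F$ is submodular, because submodularity only limits absolute complementarity, not \emph{relative} synergy. Concretely, take $m=2$ and a coverage function: item $a$ covers weight $1.5$ but overlaps heavily with $c$ and $d$ (each of weight $1$, disjoint from each other), while $b$ covers disjoint weight $1$. With arrival order $c,d,a,b$, at $a$'s arrival the strictly best pair of the prefix is $\{c,d\}$ (value $2$ versus $1.8$ for any pair containing $a$), so $a$ is rejected; yet the global optimum is $\{a,b\}$ with value $2.5$, and once $a$ is gone the best pair available to the shortlist has value $2$, a constant-factor loss rather than $1-\epsilon$. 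Your proposed fix --- accept whenever the item lies in \emph{some} $m$-set whose $F$-value exceeds a sample-calibrated fraction of $\OPT/m$ --- resolves neither issue: the sample (an $\epsilon/(2m)$ fraction) gives no reliable estimate of $\OPT$ (the optimum may hinge on items unlikely to be sampled), and such a threshold rule has no $O(m\log(1/\epsilon))$ bound on the number of acceptances (e.g., modular $F$ with $\Theta(n)$ items each of value about $\OPT/m$ makes essentially every item pass, so the cap is hit and correctness fails). The ``standard submodular exchange argument'' you invoke to show the shortlist stays rich enough is precisely the non-routine heart of the matter, and it is missing.

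A secondary, quantitative problem: handling your event E1 by a union bound forces $\tau=\epsilon/(2m)$, which drives the expected number of acceptances (even under your own rule) to about $m\ln(2m/\epsilon)$; for fixed $\epsilon$ and moderately large $m$ this already exceeds the claimed cap $(2m+3)\ln(2/\epsilon)$, so the martingale step ``$|A|$ stays below the cap with probability $1-\epsilon/2$'' cannot hold as stated --- the cap can sit below the mean. The $\ln m$ inflation is avoidable: one should not demand that \emph{all} of $T^*$ survive the sample phase, but only that each relevant element is captured with probability $1-\delta$ for $\delta=\Theta(\epsilon)$, and then pass to expectations via the sampling lemma for monotone submodular functions (Lemma~\ref{sample}), which yields a $(1-\delta)$ multiplicative loss without any union bound over the $m$ elements. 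So both the acceptance criterion and the parameter/concentration bookkeeping need to be redone before this argument establishes Theorem~\ref{thm:msub}.
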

A proof of Theorem~\ref{thm:msub} along with the relevant algorithm (Algorithm \ref{alg:SIII}) appears in the appendix.

Another special case  is  monotone submodular functions $f$ satisfying the following property:
$f(\{a_1,\cdots, a_i+\alpha,\cdots, a_k\})  \ge  f(\{a_1, \cdots, a_i, \cdots, a_k \})$, for any $\alpha >0$ and $1\le i \le k$.
We can show that the algorithm by \citet{kleinberg} asymptotically approaches optimal solution for such functions, but we omit the details.

\scomment{The susection "Our techniques" was describing analysis and algorithm design techniques that are not being used in the proof anymore, there is no time to revise it, so I am removing it. Most of this intuition appears in algorithm description and proof overview anyway.}
\toRemove{
\subsection{Our techniques}
First we design a simple algorithm for the classic secretary problem  (finding the maximum element) that achieves a competitive ratio  of $1-\epsilon$ (for any $\epsilon>0$)  using a shortlist of size $O(\log(1/\epsilon))$.
The algorithm  ignores an $\epsilon$-fraction of the input and then selects an item if it is greater than the maximum element seen so far. 
We show that with probability $1-\epsilon$, the total number of selections made by this algorithm is at most $O(\log (1/\epsilon))$.
We will use this online algorithm as a subroutine in our proposed algorithm for \nameOfProblemSL, for repeatedly finding (with probability $1-\epsilon$) the  item with maximum marginal value with respect to a subset of items, under submodular function $f$.  
The main idea of the algorithm for \nameOfProblemSL\ is to divide the input into some blocks that we call them $(\alpha,\beta)$ \textit{windows}.  In this procedure, we partition the input into slots, where the sizes of the slots follow a balls-and-bins distribution.  We then group these slots into windows.
Applying concentration inequalities for each {window}, and show that each window  has  roughly $\alpha$ elements of $OPT$ (the optimal solution), w.h.p.,
and that the fraction of elements of $OPT$ in a window that lie in different slots is at least $1-1/\beta$.
Therefore by choosing $\beta$ large enough most of the items in a window are in different slots, roughly speaking.\newline
For each window $w$ the algorithm \textit{guesses} the slots in which elements of $OPT$ in  $w$ lie in. 
By \text{guess} we mean that the algorithm enumerates over all subsets of size $\alpha$ of all $\alpha\beta$  slots in  $w$, and choose 
the one with the maximum marginal gain with respect to previously selected items. This can be done in an online manner.\newline
In the analysis of the algorithm for each window $w$,
we define an event $T_{1,\cdots, w-1}$ which conditions on the elements selected by the algorithm and also the positions in which they get selected by the algorithm. 
By conditioning on $T_{1,\cdots, w-1}$, we prove a lower bound for the expected marginal gain in the next window $w$.
Suppose $S_{1,\cdots, w-1}$ is output of the algorithm in windows $1,\cdots ,w-1$.
The crucial idea is that we show  conditioned on $T_{1,\cdots, w-1}$, each element $e\in OPT\setminus S_{1,\cdots, w-1}$
is more likely to appear in a slot in $w$ than in a slot in $1,\cdots, w-1$. 

We then have a normalization step in which we make all the elements of $OPT\setminus S_{1,\cdots, w-1}$ "appear" with the same probability in $w$. 
Therefore given that one slot in window $w$ contains an element of $OPT$, it can be any of $OPT\setminus S_{1,\cdots, w-1}$ with probability at least $1/k$. Hence the marginal gain for that slot is at least $\frac{1}{k}(OPT-F(S))$. 
We then repeat the argument for all the slots in window $w$  containing elements of the normalized sample.
\newline
The algorithm we describe uses $O(n)$ memory, in addition to the shortlist, but we can show how to modify the algorithm so that the amount of memory used is roughly the same as the size of the shortlist, and therefore the algorithm can be implemented in a streaming model.
} 

\subsection{Comparison to related work}
We compare our results (Theorem \ref{opttheorem} and Theorem \ref{thm:streaming}) to the best known results for {\it \nameOfProblem}~and {\it \streamingProblem}, respectively.

The best known algorithm so far for \nameOfProblem~is by \citet{kesselheim}, with asymptotic competitive ratio of $1/e-O(k^{-1/2})$. 
In their algorithm, after observing each element, they use an oracle to compute optimal offline solution on the elements seen so far.
Therefore it requires exponential time in $n$. The best competitive ratio that they can get in polynomial time is 
$\frac{1}{e}(1-\frac{1}{e})-O(k^{-1/2})$.
In comparison, by using a shortlist of size $O(k)$ our (polynomial time) algorithm achieves a competitive ratio of $1-\frac{1}{e}-\epsilon-O(k^{-1})$. In addition to substantially improves the above-mentioned results for \nameOfProblem, this closely matches the best possible offline approximation ratio of $1-1/e$ in polynomial time. Further, our algorithm is linear time. Table \ref{table:t1} summarizes this comparison. 
Here, $O_\epsilon(\cdot)$ hides the dependence on the constant $\epsilon$. The hidden constant in $O_{\epsilon}(.)$ is 
\etaMacro for some absolute constant $c$.

\begin{table}[h!]
\centering
\begin{tabular}{ |c c c c c| }
\hline
 & \#selections & Comp ratio & Running time & Comp ratio in poly(n) \\
\hline
\cite{kesselheim} & $k$ & $1/e-O(k^{-1/2})$ & $exp(n)$ & $\frac{1}{e}(1-1/e)$ \\ 
this & $O_{\epsilon}(k)$ & $1-1/e-\epsilon-O(1/k) $ & $O_{\epsilon}(n)$ &  $1-1/e-\epsilon-O(1/k)$ \\
\hline    

\end{tabular}
\caption{\nameOfProblem~settings}
\label{table:t1}
\end{table}
In the streaming setting, \citet{Chakrabarti2015} provided a single pass streaming algorithm for monotone submodular function maximization under $k$-cardinality constraint, that achieves a $0.25$ approximation under adversarial ordering of input. Further, their algorithm requires $O(k)$ function evaluations per arriving item and $O(k)$ memory.
The currently best known approximation  under  adversarial order streaming model is by~\citet{Badanidiyuru2014StreamingFly}, who achieve a $1/2-\epsilon$ approximation with a memory of size $O(\frac{1}{\epsilon}k\log k)$. 
There is an upper bound of $1/2+o(1)$ on the competitive ratio achievable by any streaming algorithm for this problem under adversarial order, while using $o(n)$ memory~\cite{norouzi}.

\citet{Hess} initiated the study of  \streamingProblem.
Their algorithm uses $O(k)$ memory and a total of $n$ function evaluations to achieve $0.19$ approximation. 
The state of the art result in the random order input model is due to \citet{norouzi} who achieve a $1/2+8\times 10^{-14}$ approximation, while using a memory buffer of size $O(k\log k)$.
Table~\ref{table:t2} provides a detailed comparison of our result in Theorem \ref{thm:streaming} to the 
above-mentioned results for  \streamingProblem, showing that our algorithm substantially improves the existing results on most aspects of the problem. 

\begin{table}[h!]
\centering
\begin{tabular}{ |c c c c c | }
\hline
 & Memory size & Approximation ratio & Running time & update time \\ 
 \hline
\cite{Hess} & $O(k)$ &  $0.19$ & $O(n)$ & O(1) \\ 
\cite{norouzi} & $O(k\log k)$ &  $1/2+8\times 10^{-14}$ & $O(n\log k)$ & $O(\log k)$ \\ 
\cite{Badanidiyuru2014StreamingFly} & $O(\frac{1}{\epsilon}k\log k)$ & $1/2-\epsilon$ & $poly(n,k, 1/\epsilon)$ & $O(\frac{1}{\epsilon}\log k)$ \\ 
this & $O_{\epsilon}(k)$ & $1-1/e-\epsilon-O(1/k)$ & $O_{\epsilon}(n)$ & amortized $O_{\epsilon}(1+\frac{k^2}{n})$ \\
\hline    
\end{tabular}
\caption{\streamingProblem
}
\label{table:t2}
\end{table}

There is also a line of work studying the online variant of the submodular welfare maximization problem (e.g., \cite{vahab,swm,Kapralov:2013}). In this problem, 
the items arrive online, and each arriving item should be allocated  to one of $m$ agents with a submodular valuation functions $w_i(S_i)$ where $S_i$ is the subset of items allocated to $i$-th agent). The goal is to partition the arriving items into $m$ sets to be allocated to $m$ agents, so that the sum of valuations over all agents is maximized. This setting is incomparable with the \nameOfProblem~setting considered here.
 
\subsection{Organization}
The rest of the paper is organized as follows. Section~\ref{sec:alg} describes our main algorithm (Algorithm \ref{alg:main}) for the \nameOfProblemSL, and demonstrates that its shortlist size is bounded by $\eta_\epsilon(k)=O(k)$. In Section \ref{sec:analysis}, we analyze the competitive ratio of this algorithm to prove Theorem~\ref{opttheorem}. In Section \ref{sec:streaming}, we provide an alternate implementation of Algorithm \ref{alg:main} that uses a memory buffer of size at most $\eta_\epsilon(k)$, in order to prove Theorem \ref{thm:streaming}. Finally, in Section \ref{sec:hardness}, we provide a proof of our impossibility result stated in Theorem \ref{hardness}. The proof of Theorem~\ref{thm:msub} along with the relevant algorithm appears in the appendix.

\section{Algorithm description}
\label{sec:alg}

Before giving our algorithm for \nameOfProblemSL, we describe a simple technique for \nameSecSL~
that achieves a  $1-\delta$ competitive ratio for  with shortlists of size logarithmic in $1/\delta$.
Recall that in the \nameSec, the aim is to select an item with expected value close to the maximum among a pool of items $I=(a_1, \ldots, a_N)$ arriving sequentially in a uniformly random order. 
We will consider the variant with shortlists, where we now want to pick a shortlist which contains an item with expected value close to the maximum.
We propose the following simple algorithm. For the first $n\delta/2$ rounds, don't add any items to the shortlist, but just keep track of the maximum value seen so far.
For all subsequent rounds, 
for any arriving item $i$ that has a value $a_i$ greater than or equal to the maximum value seen so far, add it to the shortlist if the size of shortlist is less than or equal to $L=4\ln(2/\delta)$. This algorithm is summarized as Algorithm \ref{alg:SIIImax}. Clearly, for contant $\delta$, this algorithm uses a shortlist of size $L=O(1)$. Further, under a uniform random ordering of input, we can show that the maximum value item will be part of the shortlist with probability $1-\delta$. (See Proposition \ref{maxanalysis} in Section \ref{sec:analysis}.)
\begin{algorithm*}[ht]
  \caption{~\bf{Algorithm for secretary with shortlist} (finding max online)}
  \label{alg:SIIImax} 
\begin{algorithmic}[1]
\State Inputs: number of items $N$, items in $I=\{a_1, \ldots, a_N\}$ arriving sequentially, $\delta \in (0,1]$. 
\State Initialize: $A\leftarrow \emptyset$,  $u=n\delta/2$, $M = -\infty$ 
\State $L \leftarrow 4\ln(2/\delta)$ 

\For {$i= 1$ to $N$}
\If {$ a_i > M$}
\State $M \leftarrow  a_i$
\If {$i \geq u$ and $|A|<L$}
\State $A\leftarrow A\cup \{a_i\}$
\EndIf
\EndIf
\EndFor


\State return $A$, and $A^*:= \max_{i\in A} a_i$
\end{algorithmic}
\end{algorithm*}

\begin{algorithm*}[h!]
  \caption{~\bf{Algorithm for {\bf submodular} $k$-secretary with shortlist}}
  \label{alg:main} 
  \label{alg:tmp} 
\begin{algorithmic}[1]
\State Inputs: set $\bar I=\{\bar{a}_1,\ldots, \bar{a}_n\}$ of $n$ items arriving sequentially, submodular function $f$, parameter $\epsilon \in (0,1]$. 
\State Initialize: $S_0 \leftarrow \emptyset, R_0 \leftarrow \emptyset, A \leftarrow \emptyset, A^* \leftarrow \emptyset$, constants $\alpha \ge 1, \beta \ge 1$ which depend on the constant $\epsilon$.
\State Divide indices $\{1,\ldots, n\}$ into $(\alpha, \beta)$ windows as prescribed by Definition \ref{def:windows}.
\For {window $w= 1, \ldots, k/\alpha$} 

 \For {every slot $s_j$ in window $w$, $j=1,\ldots, \alpha\beta$}
  \State Concurrently for all subsequences of previous slots $\tau\subseteq \{s_1, \ldots, s_{j-1}\}$ of length $|\tau|<\alpha$ \label{li:subb}   \hspace{0.44in} in window $w$, call the online algorithm in Algorithm \ref{alg:SIIImax} with the following inputs: 
  \begin{itemize}[leftmargin=0.7in]
  \item   number of items $N=|s_j|+1$, $\delta=\frac{\epsilon}{2}$, and
  \item item values $I=(a_0, a_1, \ldots, a_{N-1})$, with 
     \begin{eqnarray*} 
  a_0 & := & \max_{x\in R_{1,\ldots, w-1}} \Delta(x|S_{1,\ldots,w-1} \cup \gamma(\tau)) \\
     a_\ell & := & \Delta(s_j(\ell)| S_{1,\ldots,w-1} \cup \gamma(\tau) ),  \forall \ell=1,\ldots, N-1
     \end{eqnarray*}
 where $s_j(\ell)$ denotes the $\ell^{th}$ item in the slot $s_j$. 
  \end{itemize}
\State Let $A_{j}(\tau)$ be the shortlist returned by  Algorithm \ref{alg:SIIImax} for slot $j$ and subsequence $\tau$. Add 
\label{li:tmp} \hspace{0.44in} all items except the dummy item $0$ to the shortlist $A$. 
 That is, \label{li:sube}
 $$A\leftarrow A\cup  (A(j)\cap s_j)$$
 \EndFor
 \State \label{li:Rw} After seeing all items in window $w$, compute $R_w, S_w$ as defined in \eqref{eq:Rw} and \eqref{eq:Sw} respectively.
 \State $A^* \leftarrow A^*\cup (S_w \cap A)$
\EndFor
\State return $A$, $A^*$. 
\end{algorithmic}
\end{algorithm*}


  \toRemove{
  \begin{algorithmic}[1]
  \For {every slot $s_j$ in window $w$, $j=1,\ldots, \alpha\beta$}
  \State Concurrently for all subsequences of previous slots $\tau\subseteq \{s_1, \ldots, s_{j-1}\}$ in window $w$, of length $|\tau|<\alpha$, call the online algorithm in Algorithm \ref{alg:SIIImax} with the following inputs: number of items $N=|s_j|+1$, $\delta = \frac{\epsilon}{{\alpha \beta \choose \alpha}}$, \mcomment{why do you divide by... the errors do not add up} and values of arriving items $(a_0, a_1, \ldots, a_{N-1})$ defined as 
  $$a_0:=\max_{x\in R_{1,\ldots, w-1}} f(S_{1,\ldots,w-1} \cup \gamma(\tau) \cup \{x\}) - f(S_{1,\ldots,w-1} \cup \gamma(\tau)\}$$
    $$a_i :=f(S_{1,\ldots,w-1} \cup \gamma(\tau) \cup \{i\}) - f(S_{1,\ldots,w-1} \cup \gamma(\tau)\})$$
 where $i$ denotes the $i^{th}$ item in slot $s_j$.
  \State Let $A_{j}(\tau)$ be the shortlist returned by  Algorithm \ref{alg:SIIImax} for slot $j$ and subsequence $\tau$. Add all items except the dummy item $0$ to $H_w$, i.e, for all $\tau$,
  $$H_w \leftarrow H_w\cup (A_{j}(\tau) \cap s_j)$$
 \EndFor
\State return $H_w$
\end{algorithmic}
}

There are two main difficulties in extending this idea to the \nameOfProblemSL. First, instead of one item, here we aim to select a set $S$ of $k$ items using an $O(k)$ length shortlist. Second, the contribution of each new item $i$ to the objective value, as given by the submodular function $f$, depends on the set of items selected so far. 

The first main concept we introduce to handle these difficulties  is that of dividing the input into sequential blocks that we refer to as $(\alpha, \beta)$ windows. Below is the precise construction of $(\alpha, \beta)$ windows, for any postivie integers $\alpha$ and $\beta$, such that $k/\alpha$ is an integer.

We use a set of random variables $X_1,\ldots,X_m$ defined in the following way.  Throw $n$ balls into $m$ bins uniformly at random.  Then set $X_j$ to be the number of balls in the $j$th bin.  We call the resulting $X_j$'s a {\em $(n,m)$-ball-bin random set}.

\begin{definition}[$(\alpha,\beta)$ windows]
\label{def:windows}
Let $X_1,\ldots,X_{k\beta}$ be a $(n,k\beta)$-ball-bin random set.
Divide the indices $\{1,\ldots, n\}$ into $k\beta$ slots, where the $j$-th slot, $s_j$, consists of $X_j$ consecutive indices in the natural way, that is, slot $1$ contains the first $X_1$ indices, slot $2$ contains the next $X_2$, etc.
 Next, we define $k/\alpha$ windows, where window $i$ consists of $\alpha \beta$ consecutive slots, in the same manner as we assigned slots.
\end{definition}
Thus, $q^{th}$ slot is composed of indices $\{\ell, \ldots, r\}$, where $\ell=X_1+...+X_{q-1}+1$ and  $r=X_1+...+X_q$. 
Further, if the ordered the input is $\bar a_1, \dots, \bar a_n$ then we say that the items inside the slot $s_q$ are $\bar a_{\ell}, \bar a_{\ell+1}, \dots, \bar a_{r}$ 
To reduce notation, when clear from context, we will use $s_q$ and $w$ to also indicate the {\it set of items} in the slot $s_q$ and window $w$ respectively.

When $\alpha$ and $\beta$ are large enough constants, some useful properties can be obtained from the construction of these windows and slots. First, roughly $\alpha$ items from the optimal set $S^*$ are likely to lie in each of these windows; and further, it is unlikely that two items from $S^*$ will appear in the same slot. (These statements will be made more precise in the analysis where precise setting of $\alpha,\beta$ in terms of $\epsilon$ will be provided.) Consequently, our algorithm can focus on identifying a constant number (roughly $\alpha$) of optimal items from each of these windows, with at most one item coming from each of the $\alpha \beta$ slots in a window. The core of our algorithm is a subroutine that accomplishes this task in an online manner using a shortlist of constant size in each window. 

To implement this idea, we use a greedy selection method that considers all possible $\alpha$ sized subsequences of the $\alpha\beta$ slots in a window, and aims to identify the subsequence that maximizes the increment over the best items identified so far. 
More precisely, for any subsequence $\tau=(s_1,\ldots, s_\ell)$ of the $\alpha\beta$ slots in window $w$, we define a `greedy' subsequence $\gamma(\tau)$ of items as:
\begin{equation}
\label{eq:gamma}
\gamma(\tau):=\{i_1, \ldots, i_\ell\}
\end{equation}
where
\begin{equation}
\label{eq:ij}
i_j := \arg \max_{i\in s_j \cup R_{1, \ldots, w-1}} f(S_{1, \ldots,w-1} \cup \{i_1,\ldots, i_{j-1}\} \cup \{i\}) - f(S_{1, \ldots, w-1} \cup \{i_1\ldots, i_{j-1}\}).
\end{equation}
In ({\ref{eq:ij}) and in the rest of the paper, we use shorthand $S_{1,\ldots, w}$ to denote $S_1\cup \cdots \cup S_{w}$, and $R_{1,\ldots, w}$ to denote $R_1\cup \cdots \cup R_{w}$, etc.  We also will take unions of subsequences, which we interpret as the union of the elements in the subsequences.
We also define $R_w$ to be  the union of all greedy subsequences of length $\alpha$, and $S_w$to be  the best subsequence among those. That is,
\begin{equation}
\label{eq:Rw}
R_w = \cup_{\tau: |\tau|=\alpha} \gamma(\tau)
\end{equation}
and 
\begin{equation}
\label{eq:Sw}
S_w=\gamma(\tau^*),
\end{equation} where
\begin{equation}
\label{eq:taustar}
\tau^*:=\arg \max_{\tau: |\tau|=\alpha} f(S_{1,\ldots,w-1} \cup \gamma(\tau)) - f(S_{1,\ldots,w-1}).
\end{equation}
Note that $i_j$ (refer to \eqref{eq:ij}) can be set as  either an item in slot $s_j$ or an item {\it from a previous greedy subsequence} in $R_1\cup \cdots \cup R_{w-1}$. The significance of the latter relaxation will become clear in the analysis.

As such, identifying the sets $R_w$ and $S_w$ involves looking forward in a slot $s_j$ to find the best item (according to the given criterion in \eqref{eq:ij}) among all the items in the slot. To obtain an online implementation of this procedure, we use an online subroutine that employs the algorithm (Algorithm \ref{alg:SIIImax}) for the basic secretary problem described earlier. This online procedure will result in selection of a set $H_w$ potentially larger than $R_w$, while ensuring that each element from $R_w$ is part of $H_w$ with a high probability $1-\delta$ at the cost of adding extra $\log(1/\delta)$ items to the shortlist. Note that $R_w$ and $S_w$ can be computed {\it exactly at the end} of window $w$. 

Algorithm \ref{alg:tmp} summarizes the overall structure of our algorithm. 
In the algorithm, for any item $i$ and set $V $, we define $\Delta_f(i|V):=f(V\cup \{i\})- f(V)$.

The algorithm returns both the shortlist $A$ which we show to be  of size $O(k)$ in the following proposition, as well as a set $A^*=\cup_w (S_w \cap A)$ of size at most $k$ to compete with $S^*$. In the next section, we will show that $\Ex[f(A^*)] \ge (1-\frac{1}{e} - \epsilon -O(\frac{1}{k})) f(S^*)$ to provide a bound on the competitive ratio of this algorithm. 

\begin{prop}
\label{prop:size}
Given $k,n$, and any constant $\alpha, \beta$ and $\epsilon$, the size of shortlist $A$ selected by Algorithm~\ref{alg:main} is at most $4k \beta {\alpha \beta \choose \alpha}\log(2/\epsilon) = O(k)$. 
\end{prop}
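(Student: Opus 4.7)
The plan is a straightforward counting argument that tracks, for each invocation of Algorithm \ref{alg:SIIImax}, how many items it can add to the shortlist $A$, and then sums over all invocations.

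First I would observe that Algorithm \ref{alg:SIIImax}, when called with parameter $\delta = \epsilon/2$, maintains at most $L = 4\ln(2/\delta) = 4\ln(4/\epsilon)$ items in its returned shortlist by construction (the cap $|A| \le L$ on line 7 of that algorithm). So every inner call inside Algorithm \ref{alg:main} contributes at most $L$ items (in fact at most $L$, even after discarding the dummy item $a_0$) to the global shortlist $A$.

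Next I would count the number of such inner calls. The outer loop runs over $k/\alpha$ windows. Inside each window, Algorithm \ref{alg:main} iterates over $\alpha\beta$ slots, and for each slot $s_j$ it concurrently invokes Algorithm \ref{alg:SIIImax} once for every subsequence $\tau \subseteq \{s_1,\ldots,s_{j-1}\}$ of length $|\tau| < \alpha$. The number of such subsequences is at most
\[
\sum_{\ell=0}^{\alpha-1}\binom{\alpha\beta}{\ell} \;\le\; \binom{\alpha\beta}{\alpha},
\]
after a trivial bound (or $\alpha\binom{\alpha\beta}{\alpha}$ in a looser version, which still gives $O(1)$). Multiplying these factors, the total number of invocations of Algorithm \ref{alg:SIIImax} over the whole run is at most
\[
\frac{k}{\alpha}\cdot \alpha\beta \cdot \binom{\alpha\beta}{\alpha} \;=\; k\beta \binom{\alpha\beta}{\alpha}.
\]

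Combining the two bounds, the total number of items added to $A$ is at most $k\beta\binom{\alpha\beta}{\alpha}\cdot L \le 4 k\beta\binom{\alpha\beta}{\alpha}\log(2/\epsilon)$, which gives the claimed bound. Since $\alpha$, $\beta$ and $\epsilon$ are all constants (independent of $k$ and $n$), the factor $\beta\binom{\alpha\beta}{\alpha}\log(2/\epsilon)$ is a constant, and therefore $|A| = O(k)$.

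I do not anticipate any real obstacle here: the only subtlety is making sure that (i) the $L$ bound on the per-call shortlist of Algorithm \ref{alg:SIIImax} is hard (not in expectation), which is immediate from the cap in its code, and (ii) that the counting above uses the correct notion of ``subsequence'' (subset of earlier slots of length less than $\alpha$), so that the binomial sum is indeed bounded by $\binom{\alpha\beta}{\alpha}$ up to an absolute constant. Both are routine.
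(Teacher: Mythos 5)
Your proof is correct and follows essentially the same argument as the paper: bound the number of invocations of Algorithm~\ref{alg:SIIImax} by $\frac{k}{\alpha}\cdot\alpha\beta\cdot\binom{\alpha\beta}{\alpha}$ and multiply by the hard cap $L$ on each call's shortlist. Your extra care about the subsequence count ($|\tau|<\alpha$) and about $L=4\ln(4/\epsilon)$ versus the paper's $4\log(2/\epsilon)$ only affects unimportant constants and does not change the $O(k)$ conclusion.
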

\begin{proof}
For each window $w=1,\ldots, k/\alpha$, and for each of the $\alpha\beta$ slots in this window, lines \ref{li:subb} through \ref{li:sube} in  Algorithm \ref{alg:main} runs  Algorithm \ref{alg:SIIImax} for ${\alpha \beta \choose \alpha}$ times (for all $\alpha$ length subsequences). By construction of  Algorithm \ref{alg:SIIImax}, for each run it will add at most $L\le 4\log(2/\epsilon)$ items each time to the shortlist. 
Therefore, over all windows, Algorithm \ref{alg:main} adds at most $ \frac{k}{\alpha} \times \alpha\beta  {\alpha \beta \choose \alpha} L = O(k)$ items to  the shortlist. 
\end{proof}

\toRemove{
\section{Bounding the size of shortlist}
\label{sec:size}
Given the algorithm description, it is not very difficult to show that 
our algorithm uses a shortlist of size at most $O(k)$ if $\alpha, \beta$ are constants. 
The number of items selected by Algorithm \ref{alg:SIIImax} is at most $L$.
By choosing appropriate size $L$ for shortlist $A$,
w.h.p. we can gaurantee  that $A$ contains the maximum element $M$.
\scomment{Change $\epsilon,\delta$ to get max with probability $1-\epsilon$.}


Now, we can deduce the desired bound on the size of shortlist returned by Algorithm \ref{alg:tmp}
\begin{prop}
The size of shortlist $A$ selected by Algorithm~\ref{alg:tmp} is $O(k)$.
\end{prop}
\begin{proof}
Note that for each window $w=1,\ldots, k/\alpha$, and for each of the $\alpha\beta$ slots in this window, the subroutine in Algorithm \ref{alg:tmpWindow} runs  Algorithm \ref{alg:SIIImax} for ${\alpha \beta \choose \alpha}$ times (for all $\alpha$ length subsequences) to add at most $\log(1/\delta)$ items each time to . Therefore, the above algorithm selects $ k \beta {\alpha \beta \choose \alpha} = O(k)$ items as part of the shortlist.
\end{proof}

}
\toRemove{
\subsection{Bounding the size of shortlist for Algorithm \ref{alg:SIIImax}}
\scomment{Can we add a lemma showing that expected number of items selected will be $\log(1/\delta)$ when $u=N\delta$}
\mcomment{same as previous comment}
\scomment{Also we need to discuss that when we say shortlist of size $\eta(k)$, is that expected number of items or high probability? If it is high probability, we need to define what it means to say shortlist of size $\eta(k)$. Does it mean $\eta(k)$ with prescribed probability $1-\epsilon'$?}
\scomment{the results here will move to the previous section, and proofs for high prob bound will go to appendix}

\begin{theorem}
$\mathbb{E}[|S|] = \ln n$.
\end{theorem}
\begin{proof}
Suppose $S_i=\{a_1, \cdots, a_i\}$.
Consider all permutations of $S_i$, an element will be selected at position $i$ if it is equal to $\max_{a\in S_i} F(a)$, thus the probability that it gets selected is $1/i$. Therefore the expected number of selections will be at most $\sum_{i=1}^{n} \frac{1}{i}  = \ln n $
\end{proof}


\scomment{What is $\delta$ in the theorem below? The proof below seems to suggest it is high probability statement, not exact.}
\begin{theorem}\label{maxanalysis}
Algorithm~\ref{alg:SIIImax}, with parameter $u=n\epsilon$,  selects a set $S$ with 
$$|S|<\ln (1/\epsilon)+\ln(1/\delta)+\sqrt{\ln^2{1/\delta}+2m\ln(1/\delta)\ln(1/\epsilon)}$$ 
and $\mathbb{E}[\max_{a\in S} a]=(1-\epsilon-\delta)OPT$, where $OPT$ is the max element in the input.
\end{theorem}
\begin{proof}
We use Freedman's inequality.
 If $\{a_1,\cdots, a_i\}$ has a unique maximum, 
define $Y_i$ to be a random variable indicating whether the algorithm has selected $a_i$  or not, where $Y_i=1-\frac{1}{i}$ if $a_i$ is selected and $Y_i=-\frac{1}{i}$ otherwise. 
 If it has not unique solution define $Y_i=0$. ($a_i$ will not be selected)
 Also define $\mathcal{F}_i=\{Y_{n},Y_{n-1}, \cdots, Y_{n-i+1}\}$.

Let $X_i=\sum_{j=n-i+1}^{n} Y_j$,
then $\{X_i\}$ is a martingle, 
because $E[X_{i+1}|\mathcal{F}_{i}] = X_i+E[Y_{n-i}|\mathcal{F}_i]$.
 If $\{a_1,\cdots, a_i\}$ has a unique maximum element, $E[Y_{n-i}|\mathcal{F}_i]=(1/i)(1-1/i)+(1-1/i)(-1/i)=0$,
 otherwise $E[Y_{n-i}|\mathcal{F}_i]=0$. So in both cases $E[X_{i+1}|\mathcal{F}_{i}] =X_i$.
As in the Freedman's inequality, let $L=\sum_{i=n\epsilon}^{n} Var(Y_i| F_{i-1})$. 
\begin{align*}
L  =  \sum_{i=n\epsilon}^{n} \frac{1}{i}  (1-\frac{1}{i})^2 + (1-\frac{1}{i}) (\frac{1}{i})^2 
< \sum_{i=n\epsilon}^{n} \frac{1}{i} =\ln (1/\epsilon)
\end{align*}
Therefore,
\[
Pr(X_{n-n\epsilon}\ge \alpha \text{ and }  L\le \ln (1/\epsilon) ) \le exp(-\frac{\alpha^2}{\ln (1/\epsilon)+ 2\alpha })   < \delta 
\]
Thus we get $\alpha > \ln(1/\delta)+\sqrt{\ln^2{1/\delta}+2\ln(1/\delta)\ln(1/\epsilon)}$.
Also $|S| = X_{n-n\epsilon} + \ln(1/\epsilon)$. Therefore 
\[
Pr(|S| \ge  \ln (1/\epsilon)+\ln(1/\delta)+\sqrt{\ln^2{1/\delta}+2\ln(1/\delta)\ln(1/\epsilon)}  )  \le \delta
\]
So with probability $(1-\delta)$, $|S| \le \ln (1/\epsilon)+\ln(1/\delta)+\sqrt{\ln^2{1/\delta}+2\ln(1/\delta)\ln(1/\epsilon)} $.  
Also $\mathbb{P}(OPT \in\{a_{n\epsilon},\cdots, a_n \}  ) =(1-\epsilon)$.
Therefore $E[\max_{a\in S} a] \ge (1-\epsilon)OPT-\delta OPT$. 
\end{proof}

\begin{theorem}
Any online algorithm needs to select at least $\frac{1}{2}\log(1/\epsilon)-\frac{1}{2}$ elements, in expectation, to select the maximum element  with probability at least $(1-\epsilon)$ in a random permutation. (we assume $n> 1/\epsilon$)
\end{theorem}
\begin{proof}
Let $I_i=\{a_1,\cdots, a_{n/2^{i-1}}\}$, $T_i=\{a_{n/2^i+1}, \cdots, a_{n/2^{i-1}}\}$, and $R_i=I_1\setminus I_i$, for $i=1, \cdots, \log(1/\epsilon)$. Suppose $M_i$ is the maximum element in $I_i$. Let $S$  be the set of selected elements by algorithm at the end of execution.
Suppose $\epsilon_i= E[M_i\notin S| M_i\in T_i]$,
then $E[|S\cap T_i|] \ge \frac{1}{2} (1-\epsilon_i)$.
Therefore $E[|S|] \ge \sum_{i=1}^{\log(1/\epsilon)} \frac{1}{2} (1-\epsilon_i) $. Also w.p. $\frac{1}{2^{i}}$, $M_1 \in T_i$, 
thus $\sum_{i=1}^{\log(1/\epsilon)} \frac{1}{2^{i}} \epsilon_i  \le \epsilon$. 
(Note that we use the fact $ E[M_i\notin S| M_i\in T_i \text{ and }  M_i=M_1] \le \epsilon_i$, i.e, if algorithm selects one element it will select 
it even if we increase its value and keep the rest untouched) 
Now $E[|S|]$ is minimized under above constraint if $\frac{1}{2^{\log(1/\epsilon)}}\epsilon_{\log(1/\epsilon)} = \epsilon$ and the rest are zero.
Hence $E[|S|] \ge \frac{1}{2}\log(1/\epsilon)-\frac{1}{2}$.
\end{proof}


}

\section{Bounding the competitive ratio (Proof of Theorem \ref{opttheorem})}
\label{sec:analysis}
\newcommand{\settinga}{\mbox{$k \ge \alpha\beta$}, \settingb}
\newcommand{\settingb}{\mbox{$\beta\ge \frac{8}{(\delta')^2}$}, $\alpha\ge 8\beta^2 \log(1/\delta')$}

In this section we show that for any $\epsilon \in (0,1)$, Algorithm \ref{alg:tmp} with an appropriate choice of constants $\alpha, \beta$, achieves the competitive ratio claimed in Theorem \ref{opttheorem} for the \nameOfProblemSL.  

Recall the following notation defined in the previous section.  For any collection of sets $V_1, \ldots, V_\ell$, we use $V_{1,\ldots, \ell}$ to denote $V_1 \cup \cdots \cup V_\ell$. Also, recall that for any item $i$ and set $V $, we denote $\Delta_f(i|V):=f(V\cup \{i\})- f(V)$.

\paragraph{Proof overview.} The proof is divided into two parts. We first  show a lower bound on the ratio $\Ex[f(\cup_w S_w)]/{\OPT}$ in Proposition~\ref{prop:first}, where $S_w$ is the subset of items as defined in \eqref{eq:Sw} for every window $w$. Later in Proposition~\ref{prop:online}, we use the said  bound to derive a lower bound on the ratio $\Ex[f(A^*)]/\OPT$, where $A^*=A\cap (\cup_w S_w)$ is the subset of shortlist returned by Algorithm~\ref{alg:main}. 

Specifically, in Proposition \ref{prop:first}, we provide settings of parameters $\alpha, \beta$ such that of $\Ex[f(\cup_w S_w)] \ge \left(1-\frac{1}{e}-\frac{\epsilon}{2} - O(\frac{1}{k})\right)\OPT$.  A central idea in the proof of this result is to show that for every window $w$, given $R_{1,\ldots, w-1}$,  the items tracked from the previous windows, any of the $k$ items from the optimal set $S^*$  has at least $\frac{\alpha}{k}$ probability to appear either in window $w$, or among the tracked items $R_{1,\ldots, w-1}$. Further, the items from $S^*$ that appear in window $w$, appear  independently, and in a uniformly at random slot in this window. (See Lemma \ref{lem:pijBound}.)  This observation allows us to show that, in each window, there exists a subsequence $\tilde \tau_w$ of close to $\alpha$ slots, such that the greedy sequence of items $\gamma(\tilde \tau_w)$ will be almost ``as good as" a randomly chosen sequence of $\alpha$ items from $S^*$. More precisely, denoting $\gamma(\tilde \tau_s)=(i_1,\ldots, i_t)$, in Lemma \ref{lem:asGoodas}, for all $j=1,\ldots, t$, we lower bound the increment in function value $f(\cdots)$ on adding $i_j$ over the items in $S_{1, \ldots, w-1} \cup i_{1,\ldots, j-1}$ as:
{\small $$\Ex[\Delta_f(i_j|S_{1,\ldots, w-1} \cup \{i_1, \ldots, i_{j-1}\})|T_{1,\ldots, w-1}, i_1, \ldots, i_{j-1}] \ge \frac{1}{k}\left((1-\frac{\alpha}{k})f(S^*)-f(S_{1,\ldots, w-1} \cup \{i_1, \ldots, i_{j-1}\})\right)\ . $$}
We then deduce (using standard techniques for the analysis of greedy algorithm for submodular functions) that 
{\small $$\Ex[\left(1-\frac{\alpha}{k}\right) f(S^*) - f(S_{1,\ldots, w-1}\cup \gamma(\tilde \tau_w)) | S_{1,\ldots, w-1}] \le e^{-t/k} \left(\left(1-\frac{\alpha}{k}\right) f(S^*)-f(S_{1,\ldots, w-1})\right)\ .$$}
Now, since the length $t$ of $\tilde \tau_w$ is close to $\alpha$ (as we show in Lemma \ref{lem:lengthtau}) and since  $S_w=\gamma(\tau^*)$ with $\tau^*$ defined as the ``best"  subsequence of length $\alpha$ (refer to definition of $\tau^*$ in \eqref{eq:taustar}), we can show that a similar inequality holds for {\small $S_w=\gamma(\tau^*)$, i.e.,
$$\left(1-\frac{\alpha}{k}\right) f(S^*) - \Ex[f(S_{1,\ldots, w-1}\cup S_w)|S_{1,\ldots, w-1}] \le e^{-\alpha/k} \left(1-\delta'\right)\left(\left(1-\frac{\alpha}{k}\right) f(S^*)-f(S_{1,\ldots, w-1})\right)\ ,$$}
where $\delta'\in (0,1)$ depends on the setting of $\alpha, \beta$. (See   Lemma~\ref{lem:Sw}.)  Then repeatedly applying this inequality for $w=1, \ldots, k/\alpha$, and setting $\delta, \alpha, \beta$ appropriately in terms of $\epsilon$, we can obtain $\Ex[f(S_{1,\ldots, W})] \ge \left(1-\frac{1}{e} - \frac{\epsilon}{2} -\frac{1}{k}\right) f(S^*)$, completing  the proof of Proposition \ref{prop:first}

However, a remaining difficulty is that while the algorithm keeps a track of the set $S_w$ for every window $w$, it may not have been able to add all the items in $S_w$ to the shortlist $A$ during the online processing of the inputs in that window. In the proof of Proposition \ref{prop:online}, we show that in fact the algorithm will add most of the items in $\cup_w S_w$ to the short list. More precisely, we show that given that an item $i$ is in $S_w$, it will be in shortlist $A$ with probability $1-\delta$, where $\delta$ is the parameter used while calling Algorithm \ref{alg:SIIImax} in Algorithm \ref{alg:main}. Therefore, using properties of submodular functions it follows  that with $\delta=\epsilon/2$, $\Ex[f(A^*)] = \Ex[f(\cup_w S_w \cap A)] \ge (1-\frac{\epsilon}{2})\Ex[f(\cup_w S_w)]$ (see Proposition \ref{prop:online}). Combining this with the lower bound $\frac{\Ex[f(\cup_w S_w)]}{\OPT} \ge (1-\frac{1}{e}-\frac{\epsilon}{2} - O(\frac{1}{k}))$ mentioned earlier, we complete the proof of competitive ratio bound stated in Theorem \ref{opttheorem}.

\subsection{Preliminaries}
 The following properties of submodular functions are well known (e.g., see~\cite{Buchbinder:2014,Feige:2011,Feldman:2015}). 
\begin{lemma}\label{marginalsum}
Given a monotone submodular function $f$, and subsets $A,B$ in the domain of $f$, we use $\Delta_f(A|B)$ to denote $f(A\cup B)-f(A)$. 
For any set $A$ and $B$, $\Delta_f(A|B) \le \sum_{a\in A\setminus B} \Delta_f(a|B)$
\end{lemma}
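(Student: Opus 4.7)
The plan is to prove the standard submodular inequality by writing the left-hand side as a telescoping sum and bounding each marginal contribution using submodularity. Note that, to match the right-hand side (which sums marginals of elements of $A\setminus B$ taken relative to $B$), I will interpret $\Delta_f(A|B)$ as $f(A\cup B)-f(B)$, consistent with the earlier pointwise definition $\Delta_f(i|V)=f(V\cup\{i\})-f(V)$; the form $f(A\cup B)-f(A)$ as literally written appears to be a typographical slip, and the proof for the intended statement is immediate.

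First, I would enumerate $A\setminus B = \{a_1, a_2, \ldots, a_m\}$ in an arbitrary order, and let $A_i := \{a_1,\ldots,a_i\}$ with $A_0 := \emptyset$. Then write the telescoping identity
\[
f(A\cup B) - f(B) \;=\; f(B \cup A_m) - f(B \cup A_0) \;=\; \sum_{i=1}^{m}\bigl[f(B \cup A_i) - f(B \cup A_{i-1})\bigr],
\]
so each term on the right is exactly $\Delta_f(a_i \mid B \cup A_{i-1})$.

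Next, since $B \subseteq B \cup A_{i-1}$, submodularity of $f$ yields $\Delta_f(a_i \mid B \cup A_{i-1}) \le \Delta_f(a_i \mid B)$ for every $i$. Summing over $i=1,\ldots,m$ gives
\[
f(A\cup B) - f(B) \;\le\; \sum_{i=1}^{m} \Delta_f(a_i \mid B) \;=\; \sum_{a \in A\setminus B} \Delta_f(a \mid B),
\]
which is the claimed inequality. Monotonicity is not needed for the argument; only submodularity is used.

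There is essentially no obstacle here: the result is the classical consequence of submodularity obtained by the ``swap and drop'' telescoping trick, and the only subtlety is reconciling the notation $\Delta_f(A|B)$ with the pointwise definition used earlier in the paper. I would simply remark on this interpretation before giving the two-line proof above.
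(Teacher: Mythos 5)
Your proof is correct: the paper does not prove this lemma at all (it is cited as a well-known property of submodular functions), and your telescoping argument with the submodularity bound $\Delta_f(a_i \mid B \cup A_{i-1}) \le \Delta_f(a_i \mid B)$ is precisely the standard proof the cited references use. Your reading of $\Delta_f(A|B)$ as $f(A\cup B)-f(B)$ is the right correction of the statement's typographical slip, and your observation that monotonicity is not needed is accurate.
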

\begin{lemma}\label{sample}
Denote by $A(p)$ a random subset of $A$ where each element has probability atleast $p$ to appear in $A$ (not necessarily independently). Then $E[f(A(p))] \ge (1-p) f(\emptyset) + (p)f(A)$
\end{lemma}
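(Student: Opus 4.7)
The plan is to prove the bound directly by constructing, from the submodularity of $f$, a pointwise linear lower bound on $f(S)$ and then taking expectations. This route bypasses multilinear/Lovász extensions and handles arbitrary joint distributions of the elements of $A(p)$ in one shot, since only the marginals of $A(p)$ appear after taking expectations.

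First I would fix any ordering $a_1, \ldots, a_m$ of the elements of $A$ and define the greedy marginal contributions along this ordering,
\[
y_{a_i} \;:=\; \Delta_f\bigl(a_i \,\big|\, \{a_1, \ldots, a_{i-1}\}\bigr), \qquad i = 1, \ldots, m.
\]
Monotonicity of $f$ gives $y_{a_i} \ge 0$, and the definition telescopes to $\sum_{a \in A} y_a = f(A) - f(\emptyset)$.

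The heart of the argument is the pointwise claim that for every $S \subseteq A$,
\[
f(S) \;\ge\; f(\emptyset) \;+\; \sum_{a \in S} y_a.
\]
To prove this, enumerate $S = \{a_{i_1}, \ldots, a_{i_k}\}$ with $i_1 < i_2 < \cdots < i_k$ and expand $f(S) - f(\emptyset)$ as the telescoping sum $\sum_{j=1}^k \Delta_f\bigl(a_{i_j} \,\big|\, \{a_{i_1}, \ldots, a_{i_{j-1}}\}\bigr)$. Since $\{a_{i_1}, \ldots, a_{i_{j-1}}\} \subseteq \{a_1, \ldots, a_{i_j-1}\}$, diminishing marginal returns (i.e.\ submodularity) gives that each such marginal is at least $\Delta_f(a_{i_j} \mid \{a_1, \ldots, a_{i_j-1}\}) = y_{a_{i_j}}$, and summing over $j$ proves the claim.

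Taking expectation of the pointwise inequality over $A(p)$, and using linearity of expectation together with $\Pr[a \in A(p)] \ge p$ and $y_a \ge 0$, yields
\[
\Ex\bigl[f(A(p))\bigr] \;\ge\; f(\emptyset) + \sum_{a \in A} \Pr[a \in A(p)] \, y_a \;\ge\; f(\emptyset) + p \sum_{a \in A} y_a \;=\; (1-p)\, f(\emptyset) + p\, f(A),
\]
which is the desired bound. The only step that requires real attention is the pointwise claim, but once the greedy ordering is set up it reduces to one application of submodularity per element of $S$; no independence assumption on $A(p)$ enters, which is exactly what is needed since in the paper's analysis the elements of $A(p)$ come from a random-order stream and are generally correlated.
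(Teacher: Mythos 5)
Your proof is correct. Note that the paper does not actually prove this lemma; it is quoted as a known fact with citations (it is essentially Lemma 2.2 of Feige--Mirrokni--Vondr\'ak), where the standard argument proceeds by induction over the elements of $A$, conditioning on each element's presence one at a time and invoking submodularity at every step. Your route is different and arguably cleaner for this setting: you build a pointwise linear lower bound $f(S) \ge f(\emptyset) + \sum_{a\in S} y_a$ from the greedy marginals $y_a$ along a fixed ordering (one application of diminishing returns per element), and then a single expectation step finishes the proof using only the marginal probabilities $\Pr[a\in A(p)]\ge p$, so the arbitrary correlations in $A(p)$ never need to be handled. One point worth making explicit: your argument uses monotonicity twice --- to get $y_a\ge 0$ and hence to exploit the one-sided hypothesis ``probability at least $p$'' --- and monotonicity is genuinely necessary for the lemma as stated with ``at least $p$'' (for non-monotone $f$ the classical statement requires the appearance probabilities to be controlled from above, and the inductive proof is the one that adapts to that case). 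Since the paper applies the lemma only to its monotone objective, this is consistent with the intended use, but it would be good to state the monotonicity assumption in your write-up.
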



We will use the following well known deviation inequality for martingales (or supermartingales/submartingales).

\begin{lemma}[Azuma-Hoeffding inequality]
\label{lem:azuma}
Suppose $\{ X_k : k = 0, 1, 2, 3, ... \}$ is a martingale (or super-martingale) and ${\displaystyle |X_{k}-X_{k-1}|<c_{k},\,} $
almost surely. Then for all positive integers N and all positive reals $r$,
$${\displaystyle P(X_{N}-X_{0}\geq r)\leq \exp \left(\frac{-r^{2}}{2\sum _{k=1}^{N}c_{k}^{2}}\right).} $$
And symmetrically (when $X_k$ is a sub-martingale):
$$
{\displaystyle P(X_{N}-X_{0}\leq -r)\leq \exp \left(\frac{-r^{2}}{2\sum _{k=1}^{N}c_{k}^{2}}\right).} 
$$
\end{lemma}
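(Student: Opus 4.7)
The plan is to prove Azuma–Hoeffding by a standard exponential moment / Chernoff-type argument applied to the martingale differences. Set $Y_k := X_k - X_{k-1}$ for $k=1,\ldots,N$, so that $X_N - X_0 = \sum_{k=1}^N Y_k$ and (in the super-martingale case) $\mathbb{E}[Y_k \mid \mathcal{F}_{k-1}] \le 0$, with $|Y_k| \le c_k$ almost surely. For every $\lambda>0$, Markov's inequality gives
\[
P(X_N - X_0 \ge r) \;=\; P\!\left(e^{\lambda(X_N-X_0)} \ge e^{\lambda r}\right) \;\le\; e^{-\lambda r}\,\mathbb{E}\!\left[\prod_{k=1}^N e^{\lambda Y_k}\right].
\]
The goal is then to bound the right-hand side by something of the form $\exp\!\big(-\lambda r + \tfrac{\lambda^2}{2}\sum_k c_k^2\big)$ and optimize over $\lambda$.

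The workhorse is the conditional Hoeffding lemma: if $Z$ is a random variable with $\mathbb{E}[Z\mid\mathcal{G}]\le 0$ and $|Z|\le c$, then $\mathbb{E}[e^{\lambda Z}\mid \mathcal{G}] \le e^{\lambda^2 c^2/2}$. I would prove this by convexity, writing $e^{\lambda z}$ for $z\in[-c,c]$ as a convex combination of $e^{\lambda c}$ and $e^{-\lambda c}$, taking conditional expectations, and then bounding the resulting function of $\lambda$ by $e^{\lambda^2 c^2/2}$ using a Taylor expansion of its logarithm (or the elementary inequality $\tfrac12(e^{\lambda c}+e^{-\lambda c}) \le e^{\lambda^2 c^2/2}$ combined with monotonicity). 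Applying this to $Y_k$ conditioned on $\mathcal{F}_{k-1}$ yields $\mathbb{E}[e^{\lambda Y_k}\mid\mathcal{F}_{k-1}] \le e^{\lambda^2 c_k^2/2}$.

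Next I would iterate using the tower property: condition on $\mathcal{F}_{N-1}$, pull out the factors that are $\mathcal{F}_{N-1}$-measurable, apply the lemma to the innermost $\mathbb{E}[e^{\lambda Y_N}\mid \mathcal{F}_{N-1}]$, and repeat. This gives
\[
\mathbb{E}\!\left[\prod_{k=1}^N e^{\lambda Y_k}\right] \;\le\; \exp\!\left(\tfrac{\lambda^2}{2}\sum_{k=1}^N c_k^2\right),
\]
so that $P(X_N - X_0 \ge r) \le \exp\!\big(-\lambda r + \tfrac{\lambda^2}{2}\sum_k c_k^2\big)$. Optimizing by choosing $\lambda = r / \sum_{k}c_k^2$ yields the stated bound. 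The symmetric sub-martingale inequality follows by applying the same argument to $-X_k$, which is a super-martingale with the same increment bound $c_k$.

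The only nontrivial step is the Hoeffding lemma itself; everything else is a routine Chernoff/tower-property assembly. Since this is a textbook inequality quoted for later use in the submodular secretary analysis, I would in practice just cite it, but the outline above is the self-contained argument I would supply if needed.
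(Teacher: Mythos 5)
Your outline is correct: the conditional Hoeffding lemma via convexity, the tower-property iteration of $\Ex[e^{\lambda Y_k}\mid\mathcal{F}_{k-1}]\le e^{\lambda^2 c_k^2/2}$, and the optimization $\lambda = r/\sum_k c_k^2$ together give exactly the stated bound, and passing to $-X_k$ handles the sub-martingale direction. The paper itself offers no proof of this lemma --- it is quoted as a standard deviation inequality for (super/sub)martingales --- so your decision to cite it, with the standard exponential-moment argument held in reserve, matches the paper's treatment.
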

\begin{lemma}[Chernoff bound for Bernoulli r.v.]
\label{lem:Chernoff}
Let $X = \sum_{i=1}^N X_i$, where $X_i = 1$ with probability $p_i$ and $X_i = 0$ with
probability $1 - p_i$, and all $X_i$ are independent. Let $\mu = \Ex(X) = \sum_{i=1}^N p_i$. Then, 
$$P(X \geq (1 + \delta)\mu) \le e^{{-\delta^2\mu}/{(2+\delta)}} $$
for all $\delta>0$, and
$$P(X \leq (1 - \delta)\mu) \le e^{-\delta^2\mu/2}$$
for all $\delta\in(0,1)$. 
\end{lemma}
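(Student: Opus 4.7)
The plan is to use the classical Chernoff--Cram\'er exponential method: apply Markov's inequality to the moment generating function $e^{tX}$, exploit independence to factor the MGF, bound each factor using the elementary inequality $1 + x \le e^x$, then optimize over the free parameter $t$.

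For the upper tail, I would first write $P(X \ge (1+\delta)\mu) = P(e^{tX} \ge e^{t(1+\delta)\mu})$ and apply Markov's inequality to get $P(X \ge (1+\delta)\mu) \le \Ex[e^{tX}] / e^{t(1+\delta)\mu}$ for any $t > 0$. By independence of the $X_i$'s, $\Ex[e^{tX}] = \prod_{i=1}^{N} \Ex[e^{tX_i}] = \prod_{i=1}^{N} (1 + p_i(e^t-1)) \le \exp(\mu(e^t-1))$, where the last step uses $1+x \le e^x$ applied termwise together with $\sum_i p_i = \mu$. Choosing $t = \ln(1+\delta)$ to minimize the resulting expression yields the standard form $P(X \ge (1+\delta)\mu) \le \left(\frac{e^\delta}{(1+\delta)^{1+\delta}}\right)^\mu$. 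The remaining task is to verify $\left(\frac{e^\delta}{(1+\delta)^{1+\delta}}\right)^\mu \le e^{-\delta^2\mu/(2+\delta)}$, which after taking logarithms reduces to proving $\delta - (1+\delta)\ln(1+\delta) \le -\delta^2/(2+\delta)$ for all $\delta > 0$. This one-variable inequality can be checked by defining $g(\delta)$ as the difference, noting $g(0) = 0$, and confirming $g'(\delta) \le 0$ via a short direct calculation.

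For the lower tail, I would analogously apply Markov's inequality to $e^{-tX}$ with $t > 0$: $P(X \le (1-\delta)\mu) = P(e^{-tX} \ge e^{-t(1-\delta)\mu})$. Factoring and bounding as in the upper-tail case yields $\Ex[e^{-tX}] \le \exp(\mu(e^{-t}-1))$. Optimizing with $t = -\ln(1-\delta) > 0$ produces the bound $\left(\frac{e^{-\delta}}{(1-\delta)^{1-\delta}}\right)^\mu$. The final simplification to $e^{-\delta^2\mu/2}$ then follows from the inequality $-\delta - (1-\delta)\ln(1-\delta) \le -\delta^2/2$ for $\delta \in (0,1)$, which can be verified by Taylor-expanding $\ln(1-\delta) = -\delta - \delta^2/2 - \delta^3/3 - \cdots$ and comparing coefficients term by term, or again by elementary calculus on the one-variable difference.

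The main obstacle, such as it is, will be the final algebraic verification: the Chernoff--Cram\'er optimization produces a cleaner but more complicated closed form, and reducing it to the convenient exponents $-\delta^2\mu/(2+\delta)$ and $-\delta^2\mu/2$ requires a short but careful calculus argument. Apart from that, this is essentially a textbook application of the exponential moment method and should present no conceptual difficulty; the paper in fact cites it as a known result and only needs it as a black box for the concentration arguments in later sections.
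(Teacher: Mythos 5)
Your proposal is correct: the standard Chernoff--Cram\'er exponential moment method (Markov on $e^{tX}$, factor by independence, bound via $1+x\le e^x$, optimize $t=\ln(1+\delta)$ resp. $t=-\ln(1-\delta)$, then reduce the resulting $\bigl(e^{\delta}/(1+\delta)^{1+\delta}\bigr)^{\mu}$ and $\bigl(e^{-\delta}/(1-\delta)^{1-\delta}\bigr)^{\mu}$ forms to the stated exponents) gives exactly this lemma; the only small care needed is in the upper-tail reduction, where the cleanest route is the inequality $\ln(1+\delta)\ge 2\delta/(2+\delta)$ (or a second-derivative argument, since $g'(\delta)$ is not immediately seen to be nonpositive). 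The paper itself gives no proof --- it states the bound as a well-known preliminary and uses it as a black box --- so your argument simply supplies the standard textbook derivation consistent with how the paper invokes it.
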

\subsection{Some useful properties of $(\alpha, \beta)$ windows}
We first prove some useful properties of $(\alpha, \beta)$ windows, defined in Definition~\ref{def:windows} and used in Algorithm \ref{alg:main}. 
The first observation is that every item will appear uniformly at random in one of the $k\beta$ slots in $(\alpha,\beta)$ windows. 

\begin{definition}
For each item $e\in I$, define $Y_e\in [k\beta]$ as the random variable indicating the slot in which $e$ appears. We call vector $Y\in [k\beta]^n$ a \textit{configuration}.
\end{definition}

\begin{lemma} \label{lem:indep}
Random variables $\{Y_e\}_{e\in I}$ are i.i.d. with uniform distribution on all $k\beta$ slots. 
\end{lemma}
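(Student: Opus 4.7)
The plan is to directly compute the joint probability $\Pr\bigl(Y_{e_1}=j_1,\ldots,Y_{e_n}=j_n\bigr)$ for an arbitrary target $(j_1,\ldots,j_n)\in[k\beta]^n$ and show that it equals $(k\beta)^{-n}$, which is exactly the joint mass function of $n$ i.i.d.\ uniform random variables on $[k\beta]$. Once this is established the conclusion is immediate.

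First I would set up notation to exploit the two independent sources of randomness: the uniformly random arrival permutation $\pi$ (so that item $e$ arrives at position $\pi(e)\in[n]$), and the $(n,k\beta)$-ball-bin random set $(X_1,\ldots,X_{k\beta})$ which determines the slot boundaries. These are independent by construction. Given $X$, there is a deterministic function $f_X:[n]\to[k\beta]$ assigning each position to its slot, namely $f_X(p)=j$ iff $X_1+\cdots+X_{j-1}<p\le X_1+\cdots+X_j$. Then $Y_e=f_X(\pi(e))$.

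The key observation is that conditional on $X$, the vector $(Y_{e_1},\ldots,Y_{e_n})=(f_X(\pi(e_1)),\ldots,f_X(\pi(e_n)))$ is a uniformly random ordering of the multiset that contains $j$ exactly $X_j$ times (since $\pi$ is a uniformly random permutation of $[n]$ and $f_X$ takes value $j$ on exactly $X_j$ positions). Hence for any target $(j_1,\ldots,j_n)$ with type vector $n_j:=|\{i:j_i=j\}|$,
\[
\Pr\bigl(Y_{e_1}=j_1,\ldots,Y_{e_n}=j_n \mid X\bigr)
=\mathbf{1}[X_j=n_j\ \forall j]\cdot\frac{\prod_j X_j!}{n!}.
\]

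Finally I would combine this with the multinomial mass function of the ball-bin process, $\Pr(X_j=n_j\ \forall j)=\frac{n!}{\prod_j n_j!}(k\beta)^{-n}$. Only the term with $X_j=n_j$ contributes to the unconditional probability, and the factorials cancel exactly, leaving
\[
\Pr\bigl(Y_{e_1}=j_1,\ldots,Y_{e_n}=j_n\bigr)=\frac{n!}{\prod_j n_j!}(k\beta)^{-n}\cdot\frac{\prod_j n_j!}{n!}=(k\beta)^{-n}.
\]
Since this holds for every $(j_1,\ldots,j_n)$, the $Y_e$'s are jointly i.i.d.\ uniform on $[k\beta]$, which is the desired claim. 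There is no real obstacle here; the only thing to be careful about is explicitly invoking the independence of $\pi$ and $X$ in the paper's model so that the two factors can be multiplied, and keeping track of the bijection between the conditional ordering view and the multinomial coefficient so that the cancellation is transparent.
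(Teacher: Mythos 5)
Your proof is correct and follows essentially the same route as the paper: both rest on the observation that drawing slot sizes from the $(n,k\beta)$ balls-in-bins process and then placing the uniformly random arrival order into those slots is distributionally identical to assigning each item independently and uniformly to one of the $k\beta$ slots. Your explicit joint-pmf computation, with the multinomial factor cancelling against the conditional ordering probability, is just a worked-out verification of that equivalence, so there is nothing to add.
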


This follows from the uniform random order of arrivals, and the use of the balls in bins process to determine the number of items in a slot during the construction of $(\alpha,\beta)$ windows. A proof is provided in Appendix \ref{app:windows}.

Next, we make important observations about the probability of assignment of items in $S^*$ in the slots in a window $w$, given the sets $R_{1,\ldots, w-1}, S_{1,\ldots, w-1}$ (refer to \eqref{eq:Rw}, \eqref{eq:Sw} for definition of these sets). 
To aid analysis, we define the following new random variable $T_w$ that will track all the useful information from a window $w$.  
\begin{definition}
Define $T_w:= \{(\tau, \gamma(\tau))\}_\tau$, for  all $\alpha$-length subsequences $\tau=(s_1,\ldots, s_\alpha)$ of the $\alpha\beta$ slots in window $w$. Here, $\gamma(\tau)$ is a sequence of items as defined in \eqref{eq:gamma}. Also define  $Supp(T_{1,\cdots ,w}) :=\{e| e\in\gamma(\tau) \text{ for some } (\tau, \gamma(\tau))\in T_{1,\cdots, w} \} $ (Note that $Supp(T_{1,\cdots, w})=R_{1,\ldots, w}$).
\end{definition}

\begin{lemma}
\label{config}
For any window $w\in [W]$,  $T_{1,\ldots, w}$ and $S_{1, \ldots, w}$ \toRemove{are uniquely defined for each configuration $Y$.}
are independent of the ordering of elements within any slot, and 
are determined by the configuration $Y$.
\end{lemma}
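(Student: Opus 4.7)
My plan is to prove both claims simultaneously by induction on $w$, exploiting the observation that the key combinatorial object $\gamma(\tau)$ is defined via an $\arg\max$ over the \emph{set} of items available (in the slots and in $R_{1,\ldots,w-1}$), so it is insensitive to the internal ordering of any slot. The configuration $Y$ assigns each item to a slot, which determines the \emph{set} of items in every slot, and that is all the information the definitions of $T_w$ and $S_w$ will end up consulting.

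For the base case $w=1$, by convention $S_{1,\ldots,0}=R_{1,\ldots,0}=\emptyset$, so both are trivially determined by $Y$ and independent of within-slot order. For any $\alpha$-length subsequence $\tau=(s_{j_1},\ldots,s_{j_\alpha})$ of slots in window $1$, the defining recursion \eqref{eq:ij} selects $i_j\in s_{j} \cup R_{1,\ldots,w-1}$ maximizing a marginal $\Delta_f(\cdot\,|\,S_{1,\ldots,w-1}\cup\{i_1,\ldots,i_{j-1}\})$. Because this is an $\arg\max$ over a set, the identity of $i_j$ depends only on which items lie in $s_j$ (i.e., on $Y$) and on the previously chosen items $i_1,\ldots,i_{j-1}$, not on how the items in $s_j$ are ordered. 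Assuming a fixed deterministic tie-breaking rule (e.g., using a canonical item index), the whole sequence $\gamma(\tau)$, and hence $T_1=\{(\tau,\gamma(\tau))\}_\tau$, is a deterministic function of $Y$. Then $\tau^*$ from \eqref{eq:taustar} is determined by $T_1$ (again using the same tie-breaking), so $S_1=\gamma(\tau^*)$ is too.

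For the inductive step, suppose the statement holds for window $w-1$, so $T_{1,\ldots,w-1}$ and $S_{1,\ldots,w-1}$ are determined by $Y$ and independent of within-slot orderings. Then $R_{1,\ldots,w-1}=\mathrm{Supp}(T_{1,\ldots,w-1})$ is likewise determined by $Y$. Now repeat the base-case argument inside window $w$: each $\gamma(\tau)$ for $\tau$ a subsequence of the $\alpha\beta$ slots in $w$ is built by successive $\arg\max$'s over $s_{j}\cup R_{1,\ldots,w-1}$, and the values being compared are $\Delta_f$ marginals that only read which items are available and what has been committed. None of this consults the order of items inside any slot, and everything on the right-hand side is $Y$-measurable. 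Hence $T_w$ and $S_w=\gamma(\tau^*)$ are determined by $Y$ alone.

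The only mild obstacle I anticipate is handling ties in the $\arg\max$ cleanly — if two items achieve the same marginal, the definitions must pin down a unique choice for the statement to be meaningful. I would dispose of this up front by fixing a deterministic tie-breaking rule based on item identity (which is part of the input ${\cal U}$, not of the random ordering); once that is done, each step above is a straightforward verification. No probabilistic argument is required beyond noting that $Y$ captures exactly the randomness that affects the set of items in each slot.
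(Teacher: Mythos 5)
Your proposal is correct and follows essentially the same route as the paper: both rest on the observation that the $\arg\max$ defining $\gamma(\tau)$ (and hence $T_w$, $\tau^*$, $S_w$) consults only the \emph{set} of items in each slot, which is exactly what $Y$ determines, so within-slot order is irrelevant; your induction on $w$ and explicit tie-breaking convention are just a more careful writeup of the paper's brief argument.
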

\begin{proof}
Given the assignment of items to each slot, it follows from the definition of $\gamma(\tau)$ and $S_w$ (refer to \eqref{eq:gamma} and \eqref{eq:Sw}) that $T_{1,\ldots, w}$ and $S_{1, \ldots, w}$ are independent of the ordering of items within a slot. Now, since  the assignment of items to slot are determined by the configuration $Y$, we obtain the desired lemma statement.
\end{proof}
Following the above lemma, given a configuration $Y$, we will some times use the notation $T_{1,\ldots, w}(Y)$ and $S_{1, \ldots, w}(Y)$ to make this mapping explicit.

\begin{lemma}
\label{lem:pijBound}
For any item $i\in S^*$, window $w \in \{1,\ldots, W\}$, and slot $s$ in window $w$, define
\begin{equation}
\label{eq:pijBound}
p_{is}:=\mathbb{P}(i \in s \cup Supp(T) | T_{1,\ldots, w-1}=T).
\end{equation}
Then, for any pair of slots $s',s''$ in windows $w, w+1, \ldots, W$,
\begin{equation}
p_{is'}=p_{is''} \ge \frac{1}{k\beta} \ .
\end{equation}
\end{lemma}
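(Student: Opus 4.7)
The plan is to case split on whether $i\in Supp(T)$. If $i\in Supp(T)$, the conditioning on $T_{1,\ldots,w-1}=T$ already forces $i\in s\cup Supp(T)$, so $p_{is}=1$ for every slot $s$ and both claims hold trivially. Henceforth assume $i\notin Supp(T)$, so that $p_{is}=\mathbb{P}(Y_i=s\mid T_{1,\ldots,w-1}=T)$; by Lemma~\ref{lem:indep} the base measure on configurations $Y\in[k\beta]^n$ is uniform.

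A straightforward induction on windows (unwinding the definitions of $\gamma(\tau)$, $R_{w'}$, and $T_{w'}$) shows that $T_{1,\ldots,w-1}$ depends on $Y$ only through the assignments of items that fall in slots of windows $1,\ldots,w-1$. This immediately yields the equality: the involution on $[k\beta]^n$ that swaps every occurrence of $s'$ with $s''$ (with $s',s''$ both in windows $w,\ldots,W$) fixes the restriction of $Y$ to early windows and hence preserves $T_{1,\ldots,w-1}$, while exchanging the events $\{Y_i=s'\}$ and $\{Y_i=s''\}$. By uniformity, $\mathbb{P}(Y_i=s',T)=\mathbb{P}(Y_i=s'',T)$, hence $p_{is'}=p_{is''}$.

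For the lower bound, I will couple to an auxiliary process $P'$ in which item $i$ is deleted, leaving $n-1$ items $\{Y_e\}_{e\ne i}$ i.i.d.~uniform on $[k\beta]$; write $T'$ for the analog of $T_{1,\ldots,w-1}$ computed in $P'$ and $M(T)$ for the number of $y^{-i}\in[k\beta]^{n-1}$ with $T'(y^{-i})=T$. Two observations link the processes under $i\notin Supp(T)$. First, if $Y_i$ lies in some window $w'\ge w$, item $i$ contributes nothing to early slots, so $T_{1,\ldots,w-1}(Y)=T'(Y^{-i})$ deterministically, whence $\#\{Y:Y_i=s,\,T_{1,\ldots,w-1}(Y)=T\}=M(T)$ for every late $s$. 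Second, if $Y_i=s^*$ lies in some window $w'<w$, then $T_{1,\ldots,w-1}(Y)=T$ together with $i\notin Supp(T)$ forces $i$ never to be the $\arg\max$ in any $\gamma(\tau)$; an induction over windows $1,\ldots,w-1$ then shows that under this hypothesis deleting $i$ leaves every option set $s_j\cup R_{1,\ldots,w'-1}$, every $\gamma(\tau)$, and every $R_{w'}$ unchanged, so $T'(Y^{-i})=T$, giving $\#\{Y:Y_i=s^*,\,T_{1,\ldots,w-1}(Y)=T\}\le M(T)$. Summing $\mathbb{P}(Y_i=s,T)=\#\{\cdots\}/(k\beta)^n$ over all $k\beta$ slots yields $\mathbb{P}(T)\le k\beta\cdot M(T)/(k\beta)^n$, while for late $s$ the numerator equals $M(T)/(k\beta)^n$ exactly; dividing gives $p_{is}\ge 1/(k\beta)$.

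The main obstacle is the inductive deletion argument in the second observation: one has to check carefully that, once $i$ is never selected in any window of the with-$i$ process, $i$ also never enters any $R_{w'}$, so the option sets $s_j\cup R_{1,\ldots,w'-1}$ and the resulting $\arg\max$ values agree step-by-step with their counterparts in $P'$, keeping the two processes in lockstep throughout windows $1,\ldots,w-1$. Everything else is bookkeeping with the uniform measure on $[k\beta]^n$.
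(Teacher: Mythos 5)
Your proposal is correct and follows essentially the same route as the paper's proof: the crux in both is that for $i\notin Supp(T)$, deleting item $i$ from (or moving it out of) an early slot leaves $T_{1,\ldots,w-1}$ unchanged because $i$ is never the argmax in any $\gamma(\tau)$, combined with the i.i.d.\ uniformity of the slot assignments (Lemma~\ref{lem:indep}); your swap involution and $M(T)$-counting are just an explicit bookkeeping version of the paper's configuration-mapping plus Bayes'-rule argument.
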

\begin{proof}
If $i\in Supp(T)$ then the statement of the lemma is trivial, so consider $i\notin Supp(T)$. For such $i$, $p_{is}=\mathbb{P} (Y_i=s  | T_{1,\ldots, w-1}=T)$. 

We show that for any pair of slots $s,s'$, where $s$ is a slot in first $w-1$ windows and $s'$ is a slot in window $w$, 
\begin{equation}\label{eq1}
\mathbb{P}(T_{1,\ldots, w-1}=T|Y_i=s) \le \mathbb{P}(T_{1,\ldots, w-1}=T|Y_i=s') \ .
\end{equation}
And, for any 
pair of slots $s', s''$ in windows $\{w, w+1 ,\cdots, W\}$, 
\begin{equation}
\label{eq:afterw}
\mathbb{P} ( T_{1,\ldots, w-1}=T | Y_i=s' ) =\mathbb{P} ( T_{1,\ldots, w-1}=T | Y_i=s'').
\end{equation}
To see \eqref{eq1}, suppose for a configuration $Y$ we have $Y_i=s$ and $T_{1,\cdots, w-1}(Y)=T $. 
Since $i\notin Supp(T)$, then by definition of $T_{1,\ldots, w-1}$ we have that $i\notin \gamma(\tau)$ for any  $\alpha$ length subsequence $\tau$ of slots in any of the windows $1,\ldots, w-1$. Therefore, if we remove $i$ from windows ${1,\cdots, w-1}$ (i.e., consider another configuration where $Y_i$ is in windows $\{w, \ldots, W\}$) then $T_{1,\cdots, w-1}$ would not change. This is because $i$ is not the output of argmax in definition of $\gamma(\tau)$ (refer to  \eqref{eq:gamma}) for any $\tau$, so that its removal will not change the output of argmax. 
Also by adding $i$ to slot $s'$, $T_{1,\cdots, w-1}$ will not change since $s'$ is not in window $1,\cdots, w-1$.
Suppose configuration $Y'$ is a new configuration obtained from $Y$ by changing $Y_i$ from $s$ to $s'$. 
Therefore $T_{1,\cdots ,w-1}(Y') = T$. 
Also remember that from lemma~\ref{eqprob}, $\mathbb{P} (Y) = \mathbb{P}(Y')$.
This mapping shows that $\mathbb{P}(T_{1,\ldots, w-1}=T|Y_i=s) \le \mathbb{P}(T_{1,\ldots, w-1}=T|Y_i=s')$. 

The proof for \eqref{eq:afterw} is similar.

By applying Bayes' rule to \eqref{eq1} we have 
\[
\mathbb{P} (Y_i=s  | T_{1,\ldots, w-1}=T) \frac{ \mathbb{P}(T_{1,\ldots, w-1}=T) }{\mathbb{P}(Y_i=s) } 
\le \mathbb{P} (Y_i=s'  | T_{1,\ldots, w-1}=T) \frac{ \mathbb{P}(T_{1,\ldots, w-1}=T) }{\mathbb{P}(Y_i=s') } \ .
\]
Also from Lemma~\ref{lem:indep}, $\mathbb{P}(Y_i=s)  = \mathbb{P}(Y_i=s')$ thus 
\[
\mathbb{P} (Y_i=s  | T_{1,\ldots, w-1}=T) 
\le \mathbb{P} (Y_i=s'| T_{1,\ldots, w-1}=T)  \ .
\]
Now, for any pair of slots $s',s''$ in  windows $w, w+1, \cdots, W$, by applying Bayes' rule to the equation \eqref{eq:afterw},
we have $p_{is'}=\mathbb{P} (Y_i=s'  | T_{1,\ldots, w-1}=T) =\mathbb{P} (Y_i=s''  | T_{1,\ldots, w-1}=T)=p_{is''}$.
That is, $i$ has as much probability to appear in $s'$ or $s''$ as any of the other (at most $k\beta$) slots in windows $w, w+1, \ldots, W$. 
As a result $p_{is''}=p_{is'} \ge \frac{1}{k\beta}$.

\end{proof}

\begin{lemma}
\label{lem:ijindep}
For any window $w$,  $i,j\in S^*, i\ne j$ and $s,s'\in w$, the random variables $\mathbf{1}(Y_i=s|T_{1,\cdots, w-1}=T)$ and $ \mathbf{1}(Y_j=s'|T_{1,\cdots, w-1}=T)$ are independent. That is, given $T_{1,\cdots, w-1}=T$,  items $i,j\in S^*, i\ne j$ appear in any slot $s$ in $w$ independently.
\end{lemma}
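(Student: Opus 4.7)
The plan is to prove the stronger conditional-independence statement $\mathbb{P}(Y_i = s,\, Y_j = s' \mid T_{1,\ldots,w-1} = T) = \mathbb{P}(Y_i = s \mid T = T)\cdot \mathbb{P}(Y_j = s' \mid T = T)$, which immediately gives the lemma. Since by Lemma~\ref{lem:indep} the prior on configurations $(Y_e)_e$ is uniform on $[k\beta]^n$, it suffices to show that the set of configurations consistent with $T_{1,\ldots,w-1} = T$ has a product structure in the coordinates $(Y_i, Y_j)$.

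First, I handle the degenerate case $i \in Supp(T)$ (and symmetrically $j \in Supp(T)$) exactly as in Lemma~\ref{lem:pijBound}: unrolling the recursive definition of $\gamma(\tau)$ shows that $Y_i$ must lie in a slot of some window strictly before $w$, so $\mathbb{P}(Y_i = s \mid T = T) = 0$ for $s$ in window $w$ and both sides of the factorization vanish.

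The substantive case is $i, j \notin Supp(T)$. I will define $V_i(T) \subseteq [k\beta]$ as the set of slots $y_i$ such that placing $Y_i = y_i$ leaves every $\gamma(\tau)$ for $\tau$ in windows $1,\ldots,w-1$ unchanged (with the remaining items at any configuration compatible with $T$), and $V_j(T)$ analogously. Two claims drive the argument. (a)~$V_i(T)$ depends only on $T$ and $i$, not on the placements of the other items. This is because whether $i$'s inclusion in slot $s_{j_k}$ overturns the argmax $i_k = \gamma(\tau)_k$ reduces to comparing $\Delta_f(i \mid S_{1,\ldots,w'-1}\cup\{i_1,\ldots,i_{k-1}\})$ against $\Delta_f(i_k \mid \cdot)$, and both quantities are determined by $T$ (which records $\gamma(\tau)$ for every $\tau$ and thus fixes $S_{1,\ldots,w'-1}$ and $R_{1,\ldots,w'-1}$) together with $i$ alone; every slot in windows $w,\ldots,W$ belongs to $V_i(T)$ trivially by the localization property that items placed in later windows cannot affect $T_{1,\ldots,w-1}$. (b)~Joint validity factorizes: for any $y_{-i,j}$ with $T(y_{-i,j},\,Y_i \in \{w,\ldots,W\},\,Y_j \in \{w,\ldots,W\}) = T$, we have $T(y_{-i,j},\,y_i,\,y_j) = T$ iff $y_i \in V_i(T) \wedge y_j \in V_j(T)$. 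The only non-obvious sub-case is $y_i = y_j$ in some earlier-window slot, but the joint argmax there remains the specified $i_k$ exactly when neither $i$ nor $j$ individually beats $i_k$ --- which is $y_i \in V_i \wedge y_j \in V_j$.

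With (a) and (b), the count of configurations with $T = T,\, Y_i = s,\, Y_j = s'$ factorizes as $|U(T)|\cdot \mathbf{1}[s \in V_i(T)]\cdot \mathbf{1}[s' \in V_j(T)]$, where $U(T)$ is the set of $y_{-i,j}$ compatible with $T$; dividing gives $\mathbb{P}(Y_i = s,\, Y_j = s' \mid T = T) = \tfrac{\mathbf{1}[s \in V_i]\,\mathbf{1}[s' \in V_j]}{|V_i|\,|V_j|}$, which factorizes into $\mathbb{P}(Y_i = s \mid T = T) \cdot \mathbb{P}(Y_j = s' \mid T = T)$ exactly as required. The main obstacle is claim~(a): verifying that the threshold $i$'s marginal value must not exceed is a function of $T$ and $i$ alone. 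This hinges on the fact that $T$ encodes every pair $(\tau, \gamma(\tau))$ and hence determines the full history of argmax decisions in windows $1,\ldots,w-1$, so the question ``would $i$ have displaced $i_k$ in slot $s_{j_k}$?'' has an answer that does not depend on the specific placement of the other items.
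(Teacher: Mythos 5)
Your proof is correct, and it rests on the same structural fact as the paper's, but you execute it differently and more explicitly. The paper proves the lemma by an exchange argument: it reduces everything to Lemma~\ref{addtoslot} (if some configuration compatible with $T$ has $Y_j=s'$, then moving $j$ to $s'$ in \emph{any} compatible configuration preserves $T$), which is proved by contradiction using the definition of $\gamma(\tau)$, and then asserts independence from the resulting exchangeability. Your claim~(a) is exactly the content of Lemma~\ref{addtoslot}, but you prove it directly rather than by exchange-and-contradiction: since $T$ records every pair $(\tau,\gamma(\tau))$ and hence determines $S_{1,\ldots,w'-1}$, $R_{1,\ldots,w'-1}$ and every prefix $\{i_1,\ldots,i_{k-1}\}$, the question of whether $i$ placed in a given earlier slot would displace a recorded argmax is a comparison of marginal values determined by $(T,i)$ alone. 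On top of this you add claim~(b) and an explicit product-set count over uniform configurations (justified by Lemma~\ref{lem:indep}), which yields the factorization $\mathbb{P}(Y_i=s,\,Y_j=s'\mid T)=\mathbf{1}[s\in V_i]\mathbf{1}[s'\in V_j]/(|V_i|\,|V_j|)$; this makes rigorous the counting step that the paper leaves implicit (its proof only exhibits one feasible configuration with $Y_i=s$, $Y_j=s''$), and your treatment of the same-slot case $y_i=y_j$ is the one genuinely new wrinkle, handled correctly. Your handling of the degenerate case $i\in Supp(T)$ matches the paper's convention in Lemma~\ref{lem:pijBound}. One shared caveat: both you and the paper gloss over tie-breaking in the argmax defining $\gamma(\tau)$; for the ``does not displace'' criterion to be a function of $(T,i)$ alone one should fix a deterministic, order-independent tie-breaking rule, but this affects the paper's Lemma~\ref{addtoslot} identically and is not a gap specific to your argument.
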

\begin{proof}
To prove this, we show that $\mathbb{P}(Y_i=s|T_{1,\cdots, w-1}=T)=\mathbb{P}(Y_i=s|T_{1,\cdots, w-1}=T \text{ and } Y_j =s')$. Suppose $Y'$ is a configuration such that $Y'_i=s$ and $Y'_j=s'$, and $T_{1,\cdots, w-1}(Y')=T$. Assume there exists another feasible slot assignment of $j$, i.e., there is another configuration $Y''$ such that $T_{1,\cdots, w-1} (Y'') =T$ and $Y''_j=s''$ where $s''\ne s'$. (If no such configuration $Y''$ exists, then $\mathbf{1}(Y_j=s')|T$ is always $1$, and the desired lemma statement is trivially true.) Then, we prove the desired independence by showing that there exists a feasible configuration where slot assignment of $i$ is $s$, and $j$ is $s''$. This is obtained by changing $Y_j$ from $s'$ to $s''$ in $Y'$, to obtain another configuration $\bar{Y}$. In Lemma~\ref{addtoslot}, we show that this change will not effect $T_{1,\cdots, w-1}$, i.e., $T_{1,\cdots, w-1} (\bar Y)=T $. 
Thus configuration $\bar Y$ satisfies the desired statement.
\end{proof}

\begin{lemma} \label{addtoslot}
Fix a slot $s'$, $T$, and $j\notin Supp(T)$. Suppose that there exists  some configuration  $Y'$ such that $T_{1,\cdots, w-1} (Y') =T$ and $Y_j'=s'$. Then, given any configuration $Y''$ with $T_{1,\ldots, w-1}(Y'')=T$, we can replace  $Y''_j$ with $s'$ to obtain a new configuration $\bar Y$ that also satisfies $T_{1,\ldots, w-1}(\bar Y)=T$.
\toRemove{
Suppose there exists at least one configuration $\bar{Y}$ such that   $T_{1,\cdots, w-1} (\bar{Y}) =T$ and $\bar{Y}_j=s$ for $j\notin Supp(T)$.
Then for any configuration $Y$ with  $T_{1,\cdots, w-1} (Y) =T$, by setting $Y_j=s$, we get a new configuration $Y'$
such that $T_{1,\cdots, w-1} (Y') =T$.}
\end{lemma}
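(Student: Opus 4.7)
The plan is to induct on the window index $w'' = 1,\ldots,w-1$, showing that for every $\alpha$-length subsequence $\tau$ of slots in window $w''$, the greedy sequence $\gamma(\tau)$ coincides in the two configurations $\bar Y$ and $Y''$. Together with $T_{1,\ldots,w-1}(Y'')=T$, this yields $T_{1,\ldots,w-1}(\bar Y)=T$. The central fact used throughout is that since $j\notin \mathrm{Supp}(T)$, the item $j$ is never chosen as the argmax in any greedy computation realizing $T$, in particular in neither $Y'$ nor $Y''$. Let $s_0:=Y''_j$, so $\bar Y$ differs from $Y''$ only in that $j$ has been removed from slot $s_0$ and inserted into slot $s'$.

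Fix a window $w''\le w-1$ and assume inductively that $T_{1,\ldots,w''-1}(\bar Y)=T_{1,\ldots,w''-1}(Y'')=T_{1,\ldots,w''-1}$, so in particular $R_{1,\ldots,w''-1}(\bar Y)=R_{1,\ldots,w''-1}(Y'')=R_{1,\ldots,w''-1}(Y')=:R$ and $S_{1,\ldots,w''-1}$ agrees across the three configurations. For a subsequence $\tau=(s_1,\ldots,s_\alpha)$ in window $w''$, I would induct a second time on the position $p\in\{1,\ldots,\alpha\}$, assuming the first $p-1$ items of $\gamma(\tau)$ agree in $\bar Y$, $Y''$ and $Y'$ (the latter because $T_{1,\ldots,w-1}(Y')=T$). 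This makes the baseline $S_{1,\ldots,w''-1}\cup\{i_1,\ldots,i_{p-1}\}$ used in the marginal-gain comparison identical in all three configurations, so the marginal gain of each item depends only on the item itself, not on the configuration.

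The argmax at position $p$ is over the pool $s_p\cup R$, and I split on the identity of $s_p$. If $s_p\notin\{s_0,s'\}$, the slot contents are identical in $\bar Y$ and $Y''$ and the argmax is trivially unchanged. If $s_p=s_0$, the $\bar Y$ pool is obtained from the $Y''$ pool by deleting $j$; since the $Y''$ argmax $i^*=T(\tau)_p$ is not $j$ (as $j\notin\mathrm{Supp}(T)$), deleting $j$ leaves $i^*$ as argmax. If $s_p=s'$, the $\bar Y$ pool is obtained from the $Y''$ pool by inserting $j$; here I compare against $Y'$. Since $T_{1,\ldots,w-1}(Y')=T_{1,\ldots,w-1}(Y'')$, the argmax at position $p$ in $\tau$ is the same element $i^*=T(\tau)_p\neq j$ in both $Y'$ and $Y''$. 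In $Y'$ the item $j$ lies in $s'_{Y'}$ (since $Y'_j=s'$) and was beaten by $i^*$, so $\Delta_f(i^*\mid\cdot)\ge \Delta_f(j\mid\cdot)$ under the fixed baseline; moreover $i^*\in s'_{Y''}\cup R$ (either $i^*\in R$, or $i^*\in s'_{Y''}$ because $i^*$ is chosen from $s'$ in the $Y''$ greedy), and this set is contained in the enlarged $\bar Y$ pool $s'_{Y''}\cup\{j\}\cup R$. Under a fixed tie-breaking rule that disfavors $j$, $i^*$ therefore remains the argmax in $\bar Y$.

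The main obstacle is the careful bookkeeping between the three configurations $Y',Y'',\bar Y$ and the two layers of induction (over windows and over positions within a subsequence), in particular the step that pins down the argmax $i^*$ as literally the same element of $\mathcal{U}$ in $Y'$ and $Y''$: this uses the hypothesis $T_{1,\ldots,w-1}(Y')=T_{1,\ldots,w-1}(Y'')=T$ to identify $i^*$ with $T(\tau)_p$ in both configurations, which is what makes the Case~$s_p=s'$ comparison between $Y'$ and $\bar Y$ meaningful despite the two configurations possibly differing in many other slots.
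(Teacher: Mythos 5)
Your proof is correct and takes essentially the same route as the paper's: since $j\notin Supp(T)$ it was never a greedy winner, so removing it changes nothing, and the comparison with the witness configuration $Y'$ under the identical baseline $S_{1,\ldots,w''-1}\cup\{i_1,\ldots,i_{p-1}\}$ (which is pinned down because $T$ records the greedy picks) shows the recorded winner still beats $j$ after inserting it into $s'$. If anything, your double induction is slightly more careful than the paper's write-up, which argues by contradiction only for subsequences $\tau$ ending in $s'$, while you explicitly handle $s'$ appearing at interior positions of $\tau$.
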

\begin{proof}
Suppose the slot $s'$ lies in window $w'$.
If $w' \ge w$ then the statement is trivial. So suppose $w' < w$.
Create an intermediate configuration by {\it removing} the item $j$ from $Y''$, call it $Y^-$. Since $j\notin Supp(T_{1,\cdots, w-1}(Y'')) = Supp(T)$ we have $T_{1,\cdots, w-1} (Y^-) =T$. In fact, for every subsequence $\tau$, the greedy subsequence for $Y''$, will be same as that for $Y^-$, i.e.,   $\gamma_{Y''}(\tau) = \gamma_{Y^-}(\tau)$.
Now add item $j$ to slot $s'$ in $Y^-$, to obtain configuration $\bar Y$. We claim $T_{1,\cdots, w-1} (\bar Y) =T$.

By construction of $T_{1,\ldots, w}$, we only need to show that $j$ will not be part of the greedy subsequence $\gamma_{\bar Y}(\tau)$ for any subsequence $\tau, |\tau| = \alpha$ containing the slot $s'$ when the input is in configuration $\bar Y$. To prove by contradiction, suppose that $j$ is part of greedy subsequence for some $\tau$ ending in the slot $s'$. \toRemove{We only need to show that $j$ will not get selected in slot $s$ in configuration $Y'$.
Suppose $j$ gets selected in slot $s$ for some subsequence $\tau$ ending in slot $s$.}
\toRemove{Suppose that $\gamma_{Y^-}(\tau)  := \{i_1, \cdots, i_{t-1}, i_t\} = \gamma_{Y''}(\tau) $.} 
For this $\tau$, let $\gamma_{Y^-}(\tau)  := \{i_1, \cdots, i_{\alpha-1}, i_\alpha\} = \gamma_{Y''}(\tau) $. Note that since the items in the  slots  before $s'$ are identical for $\bar Y$ and $Y^-$, \toRemove{and $T_{1,\cdots, w-1} (Y^-) =T$,} we must have that $\gamma_{\bar Y}(\tau) = \{i_1, \cdots, i_{\alpha-1}, j\}$, i.e.,
$\Delta_f(j | S_{1,\ldots,w'-1} \cup \{i_1,\ldots, i_{\alpha-1}\} ) \ge \Delta_f(i_\alpha | S_{1,\ldots,w'-1} \cup \{i_1,\ldots, i_{\alpha-1}\} ) $.
On the other hand, since $T_{1,\cdots, w'-1} (Y') = T_{1, \cdots, w'-1}(Y'') = T (\text{restricted to $w'-1$ windows})$,  we have that $\gamma_{Y'} (\tau) =\{i_1, \cdots, i_\alpha\}$.
However, $Y'_j=s'$. Therefore $j$ was not part of the greedy subsequence $\gamma_{Y'}(\tau)$ even though it was in the last slot in $\tau$, implying $\Delta_f(j | S_{1,\ldots,w'-1} \cup \{i_1,\ldots, i_{t-1}\} ) < \Delta_f(i_t | S_{1,\ldots,w'-1} \cup \{i_1,\ldots, i_{t-1}\} ) $. This  contradicts the earlier observation.

\end{proof}

\subsection{Bounding $\Ex[f(\cup_w S_w)]/\OPT$}
 


In this section, we use the observations from the previous sections to show the existence of a random subsequence of slots $\tilde \tau_w$ of window $w$ such that we can lower bound $f(S_{1,\ldots, w-1}\cup \gamma(\tilde \tau_w))- f(S_{1,\ldots, w-1})$ in terms of $\OPT -f(S_{1,\ldots, w-1})$. This will be used to lower bound  increment $\Delta_f(S_w|S_{1,\ldots, w-1}) = f(S_{1,\ldots, w-1}\cup \gamma(\tau^*)) -f(S_{1,\ldots, w-1})$ in every window.

\begin{definition}[$Z_s$ and $\tilde \gamma_w$]
\label{def:tauw}
Create sets of items $Z_s, \forall s\in w$  as follows: for every slot $s$, add every item from $i\in S^*\cap s$ independently with probability $\frac{1}{k \beta p_{is}}$ to $Z_s$. Then, for every item $i\in S^*\cap T$, with probability $\alpha/k$, add $i$ to $Z_s$ for a randomly chosen slot $s$ in $w$. Define subsequence $\tilde \tau_w$ as the sequence of 
slots with $Z_s\ne \emptyset$. 

\end{definition}
\begin{lemma}
\label{lem:Zs}
Given any $T_{1,\ldots, w-1}=T$, for any slot $s$ in window $w$, all $i, i' \in S^*, i\ne i'$ will appear in $Z_s$ independently with probability $\frac{1}{k\beta}$. Also, given $T$, for every $i \in S^*$, the probability to appear in $Z_s$ is equal for all slots $s$ in window $w$. Further, each $i\in S^*$ occurs in  $Z_s$ of at most one slot $s$.
\end{lemma}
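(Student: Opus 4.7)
}

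My plan is to verify each of the three claims by unpacking Definition~\ref{def:tauw} and splitting on whether $i \in Supp(T)$ or $i \notin Supp(T)$. First, I will show the marginal bound $\Pr[i \in Z_s \mid T_{1,\ldots,w-1}=T] = \tfrac{1}{k\beta}$ for every $i \in S^*$ and every slot $s$ in window $w$. If $i \in Supp(T)$, then by construction $i$ is placed in $Z_s$ iff the Bernoulli coin of bias $\alpha/k$ fires and the uniformly chosen slot in $w$ equals $s$; since there are $\alpha\beta$ slots, this probability is $\tfrac{\alpha}{k}\cdot \tfrac{1}{\alpha\beta} = \tfrac{1}{k\beta}$. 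If $i \notin Supp(T)$, then $i \in Z_s$ iff $Y_i = s$ (probability $p_{is}$ given $T$, by the definition of $p_{is}$ in \eqref{eq:pijBound}) and the independent Bernoulli of bias $\tfrac{1}{k\beta p_{is}}$ fires; using Lemma~\ref{lem:pijBound} to guarantee $p_{is} \ge \tfrac{1}{k\beta}$ so that the bias is a valid probability, the product is again exactly $\tfrac{1}{k\beta}$. This simultaneously establishes the equal-marginal claim across slots in window $w$.

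Next, claim~3 follows directly from inspecting the construction: an item $i \in Supp(T)$ is assigned to a single, uniformly random slot of $w$, so it enters $Z_s$ for at most one $s$; an item $i \notin Supp(T)$ has a single realized location $Y_i$, so it can only enter $Z_{Y_i}$. Thus each $i \in S^*$ appears in $Z_s$ for at most one $s$.

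For claim~1, the independence of $\{\mathbf{1}[i \in Z_s]\}_{i \in S^*}$ for a fixed slot $s$, I will split on the $Supp(T)$-status of each pair $i \ne i'$. If both $i, i' \in Supp(T)$, independence is immediate because the coin flips and uniform slot choices used in the construction are drawn independently across items. If exactly one of them, say $i'$, is in $Supp(T)$, then $\mathbf{1}[i' \in Z_s]$ is a function of auxiliary coins independent of the configuration $Y$, while $\mathbf{1}[i \in Z_s]$ depends only on $Y_i$ and an independent Bernoulli coin, so they are independent given $T$. The genuinely substantive case is $i, i' \notin Supp(T)$: here I will invoke Lemma~\ref{lem:ijindep}, which asserts that $\mathbf{1}[Y_i=s]$ and $\mathbf{1}[Y_{i'}=s]$ are independent given $T_{1,\ldots,w-1}=T$, and combine it with the independence of the two fresh Bernoulli coins of biases $\tfrac{1}{k\beta p_{is}}$ and $\tfrac{1}{k\beta p_{i's}}$.

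The main obstacle I anticipate is handling the mixed-support case cleanly, because the randomness of $Z_s$ comes from two very different sources (the slot configuration $Y$ versus the externally drawn coins of Definition~\ref{def:tauw}), and one has to argue that conditioning on $T$ does not introduce hidden dependence between them. I would address this by explicitly noting that the coins added in the construction of $Z_s$ are drawn independently of $Y$ (they are part of the definition of $Z_s$, not of the input), so conditioning on $T$ (which is a function of $Y$ alone, by Lemma~\ref{config}) leaves those coins independent of everything measurable with respect to $Y$.
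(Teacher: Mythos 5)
Your proposal is correct and follows essentially the same route as the paper: case-split on $i \in Supp(T)$ versus $i \notin Supp(T)$, compute the marginal $\tfrac{\alpha}{k}\cdot\tfrac{1}{\alpha\beta} = p_{is}\cdot\tfrac{1}{k\beta p_{is}} = \tfrac{1}{k\beta}$, get independence from the independent auxiliary coins together with Lemma~\ref{lem:ijindep}, and read off the at-most-one-slot claim from the construction. Your extra remarks (that Lemma~\ref{lem:pijBound} makes the bias $\tfrac{1}{k\beta p_{is}}$ a valid probability, and that the auxiliary coins are independent of the configuration $Y$ which determines $T$) are sound elaborations of steps the paper leaves implicit.
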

\begin{proof}
First consider $i\in S^*\cap Supp(T)$. Then, $\Pr(i\in Z_s|T) = \frac{\alpha}{k}\times \frac{1}{\alpha \beta} = \frac{1}{k\beta}$ by construction. Also, the event $i\in Z_s|T$ is independent from $i'\in Z_s|T$ for any $i'\in S^*$ as $i$ is independently assigned to a $Z_s$ in construction. Further, every $\in S^*\cap T$ is assigned with equal probability to every slot in $s$.

Now, consider $i\in S^*, i\notin Supp(T)$. Then, for all slots $s$ in window $w$,
$$\Pr(i \in Z_s|T) =\Pr(Y_i=s | T) \frac{1}{p_{is}k\beta} =  p_{is}\times \frac{1}{p_{is}k\beta} = \frac{1}{k\beta},$$
where $p_{is}$ is defined in \eqref{eq:pijBound}. We used that 
$p_{is}=\Pr(Y_i=s|T)$ for $i\notin Supp(T)$. 
Independence of events  $i\in Z_s|T$ for items in $S^*\backslash Supp(T)$ follows from Lemma \ref{lem:ijindep}, which ensures $Y_i=s|T$ and  $Y_j=s|T$ are independent for $i\ne j$; and from independent selection among items with $Y_i=s$ into $Z_s$. 

The fact that every $i\in S^*$ occurs in at most one $Z_s$ follows from construction: $i$ is assigned to $Z_s$ of only one slot if $i\in Supp(T)$; and for $i\notin Supp(T)$, it can only appear in $Z_s$ if $i$ appears in slot $s$.
\end{proof}



\begin{lemma}
\label{lem:asGoodas}
Given the sequence $\tilde \tau_w=(s_1, \ldots, s_t)$ defined in Definition \ref{def:tauw}, let $\gamma(\tilde \tau_s)=(i_1,\ldots, i_t)$, with $\gamma(\cdot)$ as defined in \eqref{eq:gamma}. Then, for all $j=1,\ldots, t$,
{\small
$$\Ex[\Delta_f(i_j|S_{1,\ldots, w-1} \cup \{i_1, \ldots, i_{j-1}\})|T_{1,\ldots, w-1}, i_{1,\ldots, j-1}] \ge \frac{1}{k}\left((1-\frac{\alpha}{k})f(S^*)-f(S_{1,\ldots, w-1} \cup \{i_1, \ldots, i_{j-1}\})\right)\ .$$}
\end{lemma}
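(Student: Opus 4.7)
The plan is to combine the greedy definition of $i_j$ with the independence and symmetry properties of $Z_{s_j}$ established in Lemma~\ref{lem:Zs}. The proof will proceed via an averaging argument over $S^*$, with a small correction factor to absorb the conditioning on $i_1,\ldots,i_{j-1}$.

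First, by inspection of Definition~\ref{def:tauw}, any element added to $Z_s$ belongs either to $S^*\cap s$ or to $S^*\cap Supp(T_{1,\ldots,w-1}) \subseteq R_{1,\ldots,w-1}$; in particular $Z_{s_j}\subseteq s_j\cup R_{1,\ldots,w-1}$. Since $i_j$ is defined in \eqref{eq:ij} as the maximizer of $\Delta_f(\,\cdot\,\mid U)$ over exactly this set (where $U := S_{1,\ldots,w-1}\cup\{i_1,\ldots,i_{j-1}\}$), we get $\Delta_f(i_j\mid U)\ge \Delta_f(i\mid U)$ for every $i\in Z_{s_j}$. Because $s_j\in\tilde\tau_w$ forces $Z_{s_j}\ne\emptyset$, I would then introduce an auxiliary random variable $I^*$ drawn uniformly from $Z_{s_j}$ and conclude $\Delta_f(i_j\mid U)\ge \Delta_f(I^*\mid U)$ almost surely.

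The heart of the proof is identifying the conditional law of $I^*$. By Lemma~\ref{lem:Zs}, conditionally on $T_{1,\ldots,w-1}$ the events $\{i\in Z_s\}_{i\in S^*,\,s\in w}$ are independent across items of $S^*$ and each has probability $1/(k\beta)$, so the joint law of the slot-assignments is i.i.d.\ on $S^*$. A simple permutation-of-items symmetry then gives $\Pr[I^*=i\mid T_{1,\ldots,w-1}] = 1/k$ for every $i\in S^*$. I would then argue that adding the conditioning on $i_1,\ldots,i_{j-1}$ degrades this only mildly: $\Pr[I^*=i\mid T_{1,\ldots,w-1},\,i_{1,\ldots,j-1}]\ge (1-\alpha/k)/k$ for every $i\in S^*$, since the picks $i_{1,\ldots,j-1}$ can reveal at most that some items of $S^*\cap Supp(T_{1,\ldots,w-1})$ have already been ``spent'' into earlier $Z_{s_{j'}}$'s --- and each item of $S^*$ is spent across the whole window with total probability only $\alpha/k$.

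Once this probability bound is in place, the lemma follows from a standard submodular averaging argument: summing over $i\in S^*$ and using Lemma~\ref{marginalsum} (applied with $A=S^*$, $B=U$) together with monotonicity of $f$ gives $\sum_{i\in S^*}\Delta_f(i\mid U)\ge f(S^*\cup U)-f(U)\ge f(S^*)-f(U)$, so
$$\Ex\!\left[\Delta_f(I^*\mid U)\,\middle|\, T_{1,\ldots,w-1},\,i_{1,\ldots,j-1}\right] \;\ge\; \frac{1-\alpha/k}{k}\bigl(f(S^*)-f(U)\bigr) \;\ge\; \frac{1}{k}\bigl((1-\alpha/k)f(S^*)-f(U)\bigr),$$
which combined with $\Delta_f(i_j\mid U)\ge\Delta_f(I^*\mid U)$ yields the stated lemma. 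The main obstacle will be the per-item probability bound above: the greedy picks $i_1,\ldots,i_{j-1}$ can be items outside $S^*$, breaking the clean item-level symmetry of Lemma~\ref{lem:Zs}, and one must show the distortion costs only a $(1-\alpha/k)$ factor per item. I expect this to require an exchange argument in the spirit of Lemma~\ref{addtoslot}, coupling configurations that differ only in the slot-assignment of a single item of $S^*$.
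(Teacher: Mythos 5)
Your opening moves match the paper's: $Z_{s_j}\subseteq s_j\cup R_{1,\ldots,w-1}$, so the greedy choice $i_j$ dominates a uniformly random element $I^*$ of $Z_{s_j}$, and the final averaging via Lemma~\ref{marginalsum} and monotonicity is standard. The gap is in your central claim that $\Pr[I^*=i\mid T_{1,\ldots,w-1},\,i_{1,\ldots,j-1}]\ge (1-\alpha/k)/k$ for \emph{every} $i\in S^*$. This is false as stated: if some earlier greedy pick satisfies $i_{j'}=i$ with $i\notin Supp(T)$, the conditioning reveals that $i$ lies in slot $s_{j'}$, hence $i\notin Z_{s_j}$ and the conditional probability is $0$. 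In that particular case $\Delta_f(i\mid U)=0$ rescues the weighted sum, but the deeper problem remains: conditioning on the greedy picks $i_1,\ldots,i_{j-1}$ conveys information about the \emph{composition of the earlier slots} through marginal-value comparisons involving all $n$ items (not just $S^*$), and by Bayes this tilts low-marginal items of $S^*$ toward the already-seen slots, pushing their probability of membership in $Z_{s_j}$ below any fixed floor while their marginals are still positive. An exchange argument in the spirit of Lemma~\ref{addtoslot} does not obviously repair this, because moving an $S^*$ item between slots can change the greedy picks themselves, so the coupling you propose is not measurability-preserving with respect to your conditioning.

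The paper avoids exactly this trap by never asserting a per-item bound over all of $S^*$ conditioned on the picks. Instead it conditions on the realized sets $Z_{s_1},\ldots,Z_{s_{j-1}}$, restricts the symmetry claim to the \emph{unspent} items $S^*\setminus(Z_{s_1}\cup\cdots\cup Z_{s_{j-1}})$ (for which Lemma~\ref{lem:Zs} gives identical probabilities, and nonemptiness of $Z_{s_j}$ gives the floor $1/k$), and then recovers the $(1-\alpha/k)$ factor not as a conditional probability but in expectation: it lower-bounds $\sum_{i \text{ unspent}}\Delta_f(i\mid U)$ by $f(S^*\setminus\cup_{s'\in w}Z_{s'})-f(U)$ via monotonicity, and applies the sampling Lemma~\ref{sample} using the fact that each element of $S^*$ lands in $\cup_{s'\in w}Z_{s'}$ with probability $\alpha/k$. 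If you want to complete your route, you would have to replace your uniform per-item bound with this kind of restricted-symmetry-plus-sampling accounting, which essentially is the paper's proof.
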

\begin{proof}
For any slot $s'$ in window $w$, let $\{s:s \succ_w s'\}$ denote all the slots $s'$ in the sequence of slots in window $w$. 

Now, using Lemma \ref{lem:Zs}, for any slot $s$ such that $s \succ_w s_{j-1}$, 
we have that the random variables  $\mathbf{1}(i\in Z_s | Z_{s_1} \cup \ldots \cup Z_{s_{j-1}})$ are i.i.d. for all $i \in S^*\backslash \{Z_{s_1} \cup \ldots \cup Z_{s_{j-1}}\}$. 
Next, we show that the probabilities $\Pr(i\in Z_{s_j}| Z_{s_1} \cup \ldots \cup Z_{s_{j-1}})$ are identical for all $i\in S^*\backslash \{Z_{s_1} \cup \ldots \cup Z_{s_{j-1}}\}$:
{\small \begin{eqnarray*}
\Pr(i\in Z_{s_j}| Z_{s_1} \cup \ldots \cup Z_{s_{j-1}}) & = & \sum_{s:s\succ_w s_{j-1}} \Pr(i\in Z_s, s=s_j | Z_{s_1} \cup \ldots \cup Z_{s_{j-1}})\\
& = & \sum_{s: s\succ_w s_{j-1}} \Pr(i\in Z_s| s=s_j, Z_{s_1} \cup \ldots \cup Z_{s_{j-1}}) \Pr(s=s_j|Z_{s_1} \cup \ldots \cup Z_{s_{j-1}}) \ .\\
\end{eqnarray*}}
Now, from Lemma \ref{lem:Zs}, the probability $\Pr(i\in Z_s| s=s_j, Z_{s_1} \cup \ldots \cup Z_{s_{j-1}}) $ must be  identical for all $i\notin  Z_{s_1} \cup \ldots \cup Z_{s_{j-1}}$. Therefore, from above we have that for all $i, i' \in S^*\backslash \{Z_{s_1} \cup \ldots \cup Z_{s_{j-1}}\}$,
\begin{equation} 
\label{eq:pk}
\Pr(i\in Z_{s_j}| Z_{s_1} \cup \ldots \cup Z_{s_{j-1}}) = \Pr(i'\in Z_{s_j}| Z_{s_1} \cup \ldots \cup Z_{s_{j-1}})  \ge \frac{1}{k} \ .
\end{equation}
The lower bound of $1/k$ followed from the fact that at least one of the items from $S^*\backslash \{Z_{s_1} \cup \ldots \cup Z_{s_{j-1}}\}$ must appear in $Z_{s_j}$ for $s_j$ to be included in $\tilde \tau_w$. Thus, each of these probabilities is at least $1/k$. 
In other words, if an item is randomly picked from $Z_{s_j}$, it will be $i$ with probability at least $1/k$, for all $i\in S^*\backslash \{Z_{s_1} \cup \ldots \cup Z_{s_{j-1}}\}$.

Now, by definition of $\gamma(\cdot)$ (refer to \eqref{eq:gamma}), $i_j$ is chosen greedily to maximize the increment $\Delta_f(i|S_{1,\ldots, w-1} \cup i_{1,\ldots, s-1})$ over all $i\in s_j \cup Supp(T_{1,\ldots, w-1}) \supseteq Z_{s_j}$. Therefore, we can lower bound the increment provided by $i_j$ by that provided by a randomly picked item from $Z_{s_j}$.
\begin{eqnarray*} 
& & \Ex[\Delta_f(i_j|S_{1,\ldots, w-1} \cup \{i_1, \ldots, i_{j-1}\}|T_{1,\ldots, w-1}=T, i_1, \ldots, i_{j-1}] \\
(\text{using }\eqref{eq:pk}) & \ge & \frac{1}{k} \Ex[\sum_{i\in S^*\backslash \{Z_1,\ldots Z_{s_{j-1}}\}} \Ex[\Delta_f(i|S_{1,\ldots, w-1} \cup \{i_1, \ldots, i_{j-1}\}|T, i_1, \ldots, i_{j-1}]] \\
(\text{using  Lemma~\ref{marginalsum}, monotonicity of $f$} ) & \ge & \frac{1}{k} \Ex[ \left(f(S^*\backslash  \{Z_1,\ldots Z_{s_{j-1}}\})-f(S_{1,\ldots, w-1} \cup i_{1,\ldots, s-1})\right) | T]\\
(\text{using  monotonicity of }f) & \ge & \frac{1}{k}  \Ex[\left(f(S^*\backslash  \cup_{s'\in w} Z_{s'})-f(S_{1,\ldots, w-1} \cup i_{1,\ldots, s-1})\right)| T]\\
(\text{using  Lemma~\ref{lem:Zs} and Lemma~\ref{sample}})  & \geq & \frac{1}{k} \left(\left(1-\frac{\alpha}{k}\right)f(S^*) -f(S_{1,\ldots, w-1} \cup i_{1,\ldots, s-1})\right)
\end{eqnarray*}
The last inequality uses the observation from Lemma \ref{lem:Zs} that given $T$, every $i\in S^*$ appears in $\cup_{s'\in w} Z_{s'}$ independently with probability $\alpha/k$, so that every $i\in S^*$ appears in $S^*\backslash\cup_{s'\in w} Z_{s'}$ independently with probability $1-\frac{\alpha}{k}$; along with Lemma \ref{sample} for submodular function $f$. 
\end{proof}
Using standard techniques for the analysis of greedy algorithm, the following corollary of the previous lemma can be derived: given any $T_{1,\ldots, w-1}=T$:
\begin{lemma}
\label{cor:asGoodas}
$$\Ex\left[\left(1-\frac{\alpha}{k}\right) f(S^*) - f(S_{1,\ldots, w-1}\cup \gamma(\tilde \tau_w)) | T\right]\le \Ex\left[e^{-\frac{|\tilde \tau_w|}{k}} \left|\right. T\right] \left(\left(1-\frac{\alpha}{k}\right) f(S^*)-f(S_{1,\ldots, w-1})\right)$$
\end{lemma}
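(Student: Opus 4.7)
The plan is to treat this as a standard analysis of the greedy algorithm for submodular maximization, iterating the one-step marginal-gain guarantee from Lemma \ref{lem:asGoodas} and then averaging over the random length of $\tilde\tau_w$. Set $C := (1-\alpha/k)f(S^*)$, $t := |\tilde\tau_w|$, write $\gamma(\tilde\tau_w) = (i_1, \ldots, i_t)$, and let $G_j := f(S_{1,\ldots,w-1} \cup \{i_1,\ldots,i_j\})$ for $0 \le j \le t$ (so $G_0 = f(S_{1,\ldots,w-1})$). Rewriting Lemma \ref{lem:asGoodas} in terms of the ``deficit'' $C - G_j$: for each $j \in \{1,\ldots,t\}$,
$$\Ex[C - G_j \mid T, i_1, \ldots, i_{j-1}] \le \left(1 - \tfrac{1}{k}\right)(C - G_{j-1}).$$
This is the one-step contraction that drives the whole argument.

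Next, I would fix an integer $\ell \ge 0$ and condition on the event $\{t = \ell\}$. Conditional on $T$ and $t = \ell$, the greedy sequence has length exactly $\ell$, and I claim the one-step contraction continues to hold for each $j \in \{1,\ldots,\ell\}$. This is because conditioning on $t = \ell$ on top of $T$ does not break the exchangeability of $S^*$-elements established in Lemma \ref{lem:Zs}: the event $\{t = \ell\}$ is symmetric under permutations of $S^*$, so the probability that a random pick from $Z_{s_j}$ equals any fixed $i \in S^* \setminus (Z_{s_1} \cup \cdots \cup Z_{s_{j-1}})$ remains at least $1/k$, and the argument of Lemma \ref{lem:asGoodas} goes through verbatim. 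Iterating the contraction $\ell$ times via the tower property then yields
$$\Ex[C - G_\ell \mid T, t = \ell] \le \left(1 - \tfrac{1}{k}\right)^\ell (C - G_0).$$

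Finally, I would average over $\ell$ using the law of total expectation:
$$\Ex[C - G_t \mid T] \;=\; \sum_{\ell \ge 0} \Pr(t=\ell \mid T)\,\Ex[C - G_\ell \mid T, t=\ell] \;\le\; (C - G_0)\sum_{\ell \ge 0} \Pr(t=\ell \mid T)\left(1-\tfrac{1}{k}\right)^\ell \;=\; \Ex\!\left[\left(1-\tfrac{1}{k}\right)^{t}\,\bigm|\, T\right](C-G_0),$$
and the elementary bound $(1-1/k)^{t} \le e^{-t/k}$ gives the stated inequality once we unfold $C - G_t = (1-\alpha/k)f(S^*) - f(S_{1,\ldots,w-1} \cup \gamma(\tilde\tau_w))$ and $C - G_0 = (1-\alpha/k)f(S^*) - f(S_{1,\ldots,w-1})$.

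The hard part will be the justification in the second paragraph that the one-step contraction from Lemma \ref{lem:asGoodas} survives the additional conditioning on $\{t = \ell\}$; this is the place where a sloppy writeup could go wrong. A tempting alternative is to form the supermartingale $N_j := (1-1/k)^{-j}(C - G_j)$ stopped at $t$ and apply optional stopping, but that only produces $\Ex[(1-1/k)^{-t}(C - G_t)\mid T] \le C - G_0$, which is not equivalent to the target inequality because $(1-1/k)^{-t}$ and $C - G_t$ are correlated under the conditional law. Conditioning on $t$ and invoking the $S^*$-symmetry of the construction of the $Z_s$'s therefore seems to be the cleanest route.
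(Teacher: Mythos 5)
Your proposal is correct and follows the same skeleton as the paper: both start from the one-step contraction of Lemma \ref{lem:asGoodas}, iterate it, and then must account for the random length $t=|\tilde\tau_w|$. The difference is precisely in that last step. The paper defines $\pi_j := (1-\alpha/k)f(S^*) - \Ex[f(S_{1,\ldots,w-1}\cup\{i_1,\ldots,i_j\})\mid T, i_1,\ldots,i_{j-1}]$, obtains $\Ex[\pi_j\mid T, i_1,\ldots,i_{j-2}] \le (1-1/k)\pi_{j-1}$, and then invokes the martingale stopping theorem (using that $t\le\alpha\beta$ is bounded) to pass directly to $\Ex[\pi_t\mid T]\le \Ex[(1-1/k)^t\mid T]\,\pi_0$. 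You instead condition on $\{t=\ell\}$, argue that this event is a label-symmetric function of the placement of $S^*$-items into the slots of window $w$ (which, by Lemma \ref{lem:Zs}, is an exchangeable, independent assignment given $T$), so the $\ge 1/k$ exchangeability bound and hence the contraction survive the extra conditioning, and then average over $\ell$. Your closing observation is apt: optional stopping applied to the natural supermartingale $(1-1/k)^{-j}(C-G_j)$ yields $\Ex[(1-1/k)^{-t}(C-G_t)\mid T]\le C-G_0$, which does not immediately factor into $\Ex[(1-1/k)^{t}\mid T](C-G_0)$ because $t$ and $C-G_t$ are correlated (and $t$ is not even adapted to the greedy-pick filtration in an obvious way); your conditioning argument is exactly what justifies the factored form that the paper asserts in one line. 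The price you pay is the symmetry claim in your second paragraph: to make it airtight you should note that conditioning on $T$ and on the earlier greedy picks $i_1,\ldots,i_{j-1}$ only fixes information about items already placed in $Z_{s_1},\ldots,Z_{s_{j-1}}$, while $\{t=\ell\}$ is symmetric in the labels of the remaining items of $S^*$, so exchangeability among those remaining items is preserved; this is at the same level of rigor as the paper's own use of Lemma \ref{lem:Zs}, and with that remark your route is, if anything, the more carefully justified of the two.
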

\begin{proof}
Let $\pi_0= (1-\frac{\alpha}{k}) f(S^*) - \Ex[f(S_{1,\ldots, w-1}) | T_{1,\ldots, w-1}=T]$, and for $j \ge 1$,
$$\pi_j:=(1-\frac{\alpha}{k}) f(S^*) - \Ex[f(S_{1,\ldots, w-1}\cup \{i_1, \ldots, i_j\}) | T_{1,\ldots, w-1}=T, i_1,\ldots, i_{j-1}],$$
Then, subtracting and adding $(1-\frac{\alpha}{k}) f(S^*)$ from the left hand side of the previous lemma, and taking expectation conditional on $T_{1,\ldots, w-1}=T, i_1, \ldots, i_{j-2}$, we get
$$ - \Ex[\pi_{j} | T, i_1, \ldots, i_{j-2}] + \pi_{j-1} \ge \frac{1}{k} \pi_{j-1}$$
which implies
$$\Ex[\pi_j|T, i_1, \ldots, i_{j-2}] \le \left(1-\frac{1}{k}\right) \pi_{j-1} \le \left(1-\frac{1}{k}\right)^j \pi_0\ .$$
By martingale stopping theorem, this implies:
$$\Ex[\pi_t|T] \le \Ex\left[\left(1-\frac{1}{k}\right)^t \left| T\right. \right] \pi_0 \le \Ex\left[e^{-t/k}| T\right] \pi_0\ .$$
where stopping time $t=|\tilde{\tau}_w|$. ($t=|\tilde \tau_w| \le \alpha\beta$ is bounded, therefore, martingale stopping theorem can be applied).

\end{proof}

Next, we compare $\gamma(\tilde \tau_w)$ to $S_w=\gamma(\tau^*)$ . Here, $\tau^*$ was defined has the `best' greedy subsequence of length $\alpha$ (refer to \eqref{eq:Sw} and \eqref{eq:taustar}). To compare it with $\tilde \tau_w$, we need a bound on size of $\tilde \tau_w$. 
\begin{lemma}
\label{lem:lengthtau}
For any real $\delta\in (0,1)$, 
and if $k \ge \alpha\beta$, $\alpha \ge 8\log(\beta)$ and $\beta \ge 8$, 
then given any $T_{1,\ldots, w-1}=T$,
$$(1-\delta)\left(1-\frac{4}{\beta}\right)\alpha \le |\tilde \tau_w| \le (1+\delta)\alpha,$$
with probability $1-\exp(-\frac{\delta^2\alpha}{8\beta})$.
\end{lemma}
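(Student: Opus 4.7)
I will condition on $T_{1,\ldots,w-1}=T$ throughout. By Lemma~\ref{lem:Zs}, each item $i\in S^*$ independently lands in at most one set $Z_s$ (for $s$ in window $w$), with $\Pr(i\in Z_s\mid T)=1/(k\beta)$ for each of the $\alpha\beta$ slots (so total probability $\alpha/k$ of being in any slot). This makes $\{Z_s\}_{s\in w}$ a standard balls-into-bins configuration: $k$ balls independently chosen to be either ``absent'' (with probability $1-\alpha/k$) or placed uniformly into one of the $\alpha\beta$ slot-bins (with total probability $\alpha/k$).

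\emph{Upper bound.} Let $N:=\sum_{i\in S^*}\mathbf{1}(i$ appears in some $Z_s)$, a sum of $k$ independent Bernoullis each with mean $\alpha/k$, so $\mathbb{E}[N]=\alpha$. Because each item lies in at most one $Z_s$, $|\tilde\tau_w|\le N$. A standard Chernoff upper tail (Lemma~\ref{lem:Chernoff}) then gives $\Pr(N\ge (1+\delta)\alpha)\le \exp(-\delta^2\alpha/3)\le \exp(-\delta^2\alpha/(8\beta))$ for $\beta\ge 8/3$, which handles the upper half of the claim.

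\emph{Mean and lower-tail concentration.} By linearity, $\mathbb{E}[|\tilde\tau_w|]=\alpha\beta\bigl(1-(1-1/(k\beta))^k\bigr)$, and using $(1-x)^k\le e^{-xk}$ together with $e^{-1/\beta}\le 1-1/\beta+1/(2\beta^2)$ yields $\mathbb{E}[|\tilde\tau_w|]\ge \alpha(1-1/(2\beta))\ge (1-4/\beta)\alpha$. For the lower-tail, observe that the slot-occupancy indicators $X_s:=\mathbf{1}(Z_s\neq\emptyset)$ are negatively associated (the standard property of balls-into-bins with independent balls; the ``absent'' location is simply an additional bin we ignore). Chernoff-type bounds apply to sums of negatively associated Bernoullis, so
\[
\Pr\bigl(|\tilde\tau_w|\le (1-\delta)\mathbb{E}[|\tilde\tau_w|]\bigr)\le \exp\!\bigl(-\delta^2\mathbb{E}[|\tilde\tau_w|]/2\bigr)\le \exp\!\bigl(-\delta^2\alpha/(8\beta)\bigr),
\]
using $\mathbb{E}[|\tilde\tau_w|]\ge \alpha/4$ for $\beta\ge 8$ to absorb the $1/\beta$ factor on the right. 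Since $(1-\delta)\mathbb{E}[|\tilde\tau_w|]\ge (1-\delta)(1-4/\beta)\alpha$, the lower bound follows.

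\emph{Main obstacle.} The nontrivial step is the lower-tail concentration, because the $X_s$ are not independent. A naive McDiarmid applied to the $k$ independent item choices only gives $\exp(-\delta^2\alpha^2/(2k))$, which is weaker than $\exp(-\delta^2\alpha/(8\beta))$ when $k\gg \alpha\beta$, so one cannot avoid exploiting the structure: either invoke the negative association of slot-occupancy indicators under balls-into-bins (cleanest), or first condition on $N$ and combine Chernoff on $N$ with a conditional balls-into-bins concentration bound on the number of non-empty bins.
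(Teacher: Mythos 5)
Your proof is correct in substance but takes a genuinely different route from the paper's, most notably for the lower tail, which is the substantive half. The paper argues slot-by-slot: conditioning on the history $Z_{s'}$, $s'\prec_w s$, it lower-bounds $\Pr(|Z_s|=1)$ by $p\ge(1-\tfrac{4}{\beta})\tfrac{1}{\beta}$ (this is where the hypotheses $\alpha\ge 8\log\beta$ and $k\ge\alpha\beta$ enter, via a Chernoff bound showing at most $2\alpha$ optimal items are already consumed) and then applies Azuma--Hoeffding to the resulting supermartingale; its upper tail is likewise an Azuma argument using $\Pr(Z_s\ne\emptyset\mid\text{history})\le 1/\beta$. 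You instead compute $\Ex[|\tilde\tau_w|]$ directly from the occupancy formula $\alpha\beta\bigl(1-(1-\tfrac{1}{k\beta})^k\bigr)\ge(1-\tfrac{1}{2\beta})\alpha$ and get the lower tail from negative association of the occupancy indicators, and your upper tail via a plain Binomial Chernoff bound on the number of present items dominates the slot count; this is cleaner, gives stronger exponents ($\exp(-\delta^2\alpha/3)$ and $\exp(-\delta^2\alpha/8)$ versus the paper's $\exp(-\delta^2\alpha/(2\beta))$ and $\exp(-\delta^2\alpha/(8\beta))$), and dispenses with the $\alpha\ge 8\log\beta$ hypothesis entirely, and you correctly identify why naive McDiarmid over the $k$ items is too weak. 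The one caveat is that your balls-into-bins/negative-association step needs full mutual independence of the item placements (slot or ``absent'') conditioned on $T_{1,\ldots,w-1}=T$, whereas Lemmas~\ref{lem:Zs} and~\ref{lem:ijindep} literally state only per-slot, pairwise independence; this is not a real gap relative to the paper, since its own proof quietly uses the same strength (it asserts $\Pr(i\in Z_s\mid T, Z_{s'}, s'\prec_w s)$ is $0$ or $\tfrac{1}{k\beta}$ regardless of the history), but if you wanted to be airtight you would state that sequential/mutual independence explicitly before invoking negative association. A second, shared, cosmetic point: combining your two one-sided tails by a union bound gives $2\exp(\cdot)$ rather than the stated $\exp(-\delta^2\alpha/(8\beta))$, the same constant-factor slack present in the paper, and immaterial downstream since Lemma~\ref{cor:lengthtau} uses only the lower bound.
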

\begin{proof}
By definition,
$$|\tilde \tau_w| = |s\in w: Z_s\ne \phi|\ .$$
Again, we use $s'\prec_w s$ to denote all slots before $s$ in window $w$. Then, from Lemma \ref{lem:Zs}, given $T_{1,\ldots, w-1}=T$, for all $i\cap S^*$ and slot $s$ in window $w$, $\Pr[i\in Z_s | Z_{s'}, s'\prec_w s, T]$ is either $0$ or $1/(k\beta)$. Therefore,
$$\Pr[Z_s \ne \phi | T, Z_{s'}, s'\prec_w s]\le \sum_{i\in S^*}  \frac{1}{k\beta} = \frac{1}{\beta}\ .$$
Therefore $X_s=|s'\preceq_w s: Z_{s'}\ne \phi| - \frac{s}{\beta}$ is a super-martingale, with $X_s-X_{s-1}\le 1$. Since there are $\alpha \beta$ slots in window $w$, $X_{\alpha\beta}=|s\in w: Z_{s}\ne \phi| - \alpha$.  Applying Azuma-Hoeffding inequality to  $X_{\alpha\beta}$ (refer to Lemma \ref{lem:azuma}) we get that 
\begin{equation}
\label{eq:upper}
\Pr\left( |s\in w: Z_s \ne \phi| \ge (1+\delta) \alpha |T\right) \le \exp\left(-\frac{\delta^2\alpha}{2\beta}\right)
\end{equation}
which proves the desired upper bound.

For lower bound, first observe that every $i\in S^*$ appears in $\cup_{s\in w} Z_s$ independently with probability $\frac{\alpha}{k}$. Using Chernoff bound for Bernoulli random variables (Lemma \ref{lem:Chernoff}),  for any $\delta\in(0,1)$
\begin{equation}
\label{eq:lower1}
\Pr(||\cup_{s\in w}Z_s| -\alpha| > \delta\alpha) \le \exp(-\delta^2\alpha/3) \ .
\end{equation}

Also, from independence of $i\in Z_s|T$ and  $i'\in Z_s|T$ for any $i,i'\in S^*, i\ne i'$ (refer to Lemma \ref{lem:Zs}),
$$\Pr(i,i'\in Z_s|T, i,i'\notin Z_{s'} \text{ for any } s'\prec_w s) \le  \frac{1}{k^2\beta^2}$$
for any $s\in w$; so that
\begin{equation}
\label{eq:lower10}
\Pr\left(|Z_s|=1|T,  Z_{s'}, s'\prec_w s\right) \ge  \frac{k-|Z_{s'}: s'\prec_w s|}{k\beta}-\frac{1}{\beta^2} \ge \left(1-\frac{2\alpha}{k} \right)\frac{1}{\beta} - \frac{1}{\beta^2}- e^{-\frac{\alpha}{4}} =:p \ .
\end{equation}
where in the last inequality we substituted the upper bound  on $ |Z_{s'}: s'\prec_w s|$ from \eqref{eq:lower1}. 
Specifically, using  \eqref{eq:lower1} with $\delta=3/4$, we obtained  that $|Z_{s'}: s'\prec_w s|\le (1+\frac{3}{4})\alpha \le 2\alpha$ with probability $\exp(-\alpha/4)$. 
Also if $\alpha \ge  8 \log(\beta)$, and $k\ge \alpha\beta$,  we have $ p :=\left(1-\frac{2\alpha}{k} - \frac{1}{\beta}\right)\frac{1}{\beta} - e^{-\frac{\alpha}{4}} \ge  (1-\frac{4}{\beta})\frac{1}{\beta}$.

Now, applying Azuma-Hoeffding inequality (Lemma \ref{lem:azuma}), the total number of slots (out of $\alpha\beta$ slots) for which $|Z_s|=1$ can be lower bounded by:



\begin{equation}
\label{eq:lower2}
\Pr\left(|\{s \in w:|Z_s|=1\}| \ge (1-\delta)p\alpha\beta|T\right) \le \exp\left(-\frac{\delta^2 p^2 \alpha \beta}{2}\right) \ .
\end{equation}
 Substituting $p\ge  (1-\frac{4}{\beta})\frac{1}{\beta}$,
$$\Pr\left(|\{s \in w:|Z_s|=1\}| \ge (1-\delta)(1-\frac{4}{\beta})\alpha|T\right) \le \exp\left(-\frac{\delta^2 (1-4/\beta)^2 \alpha}{2\beta}\right)\ . $$
We further substitute $\beta \ge 8$ in the right hand side of the above inequality, to bound the probability by $\exp(-\delta^2\alpha/8\beta)$.


\end{proof}

\begin{lemma}[Corollary of Lemma \ref{lem:lengthtau}]
\label{cor:lengthtau}
For any real $\delta'\in (0,1)$,  if parameters $k,\alpha, \beta$ satisfy \settinga, then given any $T_{1,\ldots, w-1}=T$, with probability at least $1-\delta' e^{-\alpha/k}$,
$$|\tilde \tau_w| \ge (1-\delta') \alpha\ . $$

\end{lemma}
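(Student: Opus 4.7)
The plan is to apply Lemma~\ref{lem:lengthtau} with an appropriate choice of its free parameter $\delta$, and then verify that the resulting tail bound is at most $\delta' e^{-\alpha/k}$ under the stated hypotheses on $\alpha,\beta,k$.

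First, I choose $\delta := \delta'/2$ in Lemma~\ref{lem:lengthtau}. The lemma's prerequisites $\alpha \ge 8\log\beta$ and $\beta \ge 8$ follow from $\beta \ge 8/(\delta')^2 \ge 8$ and $\alpha \ge 8\beta^2 \log(1/\delta') \ge 8\log\beta$. It yields, with probability at least $1 - \exp(-\delta^2 \alpha/(8\beta)) = 1 - \exp(-(\delta')^2\alpha/(32\beta))$, the bound
\[
|\tilde\tau_w| \;\ge\; (1-\delta)\bigl(1-\tfrac{4}{\beta}\bigr)\,\alpha \;\ge\; \bigl(1 - \delta - \tfrac{4}{\beta}\bigr)\alpha.
\]
Since $\beta \ge 8/(\delta')^2$ implies $4/\beta \le (\delta')^2/2 \le \delta'/2$, I get $\delta + 4/\beta \le \delta'/2 + \delta'/2 = \delta'$, so $|\tilde\tau_w| \ge (1-\delta')\alpha$ on the good event.

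It remains to show that the failure probability $\exp(-(\delta')^2 \alpha/(32\beta))$ is at most $\delta' e^{-\alpha/k}$; equivalently,
\[
\frac{(\delta')^2 \alpha}{32\beta} \;\ge\; \log(1/\delta') + \frac{\alpha}{k}.
\]
I would split this into two halves. For the $\log(1/\delta')$ term, it suffices to show $(\delta')^2 \alpha/(64\beta) \ge \log(1/\delta')$, i.e.\ $\alpha \ge 64 \beta \log(1/\delta')/(\delta')^2$. The hypothesis $\alpha \ge 8\beta^2 \log(1/\delta')$ implies this as soon as $8\beta \ge 64/(\delta')^2$, which is exactly $\beta \ge 8/(\delta')^2$. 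For the $\alpha/k$ term, it suffices to show $(\delta')^2\alpha/(64\beta) \ge \alpha/k$, i.e.\ $k \ge 64\beta/(\delta')^2$; the hypotheses $k \ge \alpha\beta$ and $\alpha \ge 8\beta^2 \log(1/\delta')$ together give $k \ge 8\beta^3\log(1/\delta') \ge 64\beta/(\delta')^2$ using $\beta \ge 8/(\delta')^2$. Adding the two half-inequalities gives the required bound, completing the proof.

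There is no real obstacle here beyond careful constant-chasing; the work is entirely to verify that the one-sided lower-tail portion of Lemma~\ref{lem:lengthtau} can be strengthened from $(1-\delta)(1-4/\beta)\alpha$ to $(1-\delta')\alpha$ (absorbing the $4/\beta$ loss into $\delta'$ via the choice $\beta \ge 8/(\delta')^2$) and that the exponent $(\delta')^2 \alpha/(32\beta)$ dominates both $\log(1/\delta')$ and $\alpha/k$ under the parameter settings, which is where the choices $\alpha \ge 8\beta^2 \log(1/\delta')$ and $k \ge \alpha\beta$ are used.
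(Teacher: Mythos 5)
Your proof is correct and follows essentially the same route as the paper's: apply Lemma~\ref{lem:lengthtau} with $\delta=\delta'/2$, absorb the $(1-4/\beta)$ factor into $\delta'$ via $\beta\ge 8/(\delta')^2$, and split the exponent $(\delta')^2\alpha/(32\beta)$ into two halves, one dominating $\log(1/\delta')$ and one dominating $\alpha/k$, using $\alpha\ge 8\beta^2\log(1/\delta')$ and $k\ge\alpha\beta$. The only caveat is that a couple of your intermediate comparisons (e.g.\ $8\beta^2\log(1/\delta')\ge 8\log\beta$ and $8\beta^3\log(1/\delta')\ge 64\beta/(\delta')^2$) silently need $\log(1/\delta')$ bounded away from $0$, but the paper's own proof has the identical implicit assumption, and in the application $\delta'=\epsilon/4\le 1/4$ so this is harmless.
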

\begin{proof}
We use the previous lemma with $\delta=\delta'/2$ to get lower bound of $(1-\delta')\alpha$ with probability $1-\exp(-(\delta')^2\alpha/32\beta)$. Then, substituting 
$ k\ge \alpha\beta \ge \frac{64\beta}{(\delta')^2} \log(1/\delta')$ so that using $\beta \le \frac{k(\delta')^2}{64 \log(1/\delta')}$ we can bound the violation probability by
$$\exp(-(\delta')^2 \alpha/32\beta)\le \exp(-(\delta')^2 \alpha/64\beta)\exp(-\alpha/k) \le \delta' e^{-\alpha/k}.$$
where the last inequality uses $\alpha\ge 8\beta^2 \log(1/\delta')$ and $\beta \ge 8/(\delta')^2$.
\end{proof}

\begin{lemma}
\label{lem:Sw}
For any real $\delta'\in (0,1)$, if parameters 
$k,\alpha, \beta$ satisfy \settinga, then
$$\Ex\left[\frac{k-\alpha}{k} \OPT -f(S_{1,\ldots, w})|T_{1,\ldots, w-1}\right] \le (1-\delta') e^{-\alpha/k} \left(\frac{k-\alpha}{k} \OPT - f(S_{1,\ldots, w-1})\right)\ .$$
\end{lemma}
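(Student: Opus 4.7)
The plan is to combine the per-step greedy improvement of Lemma~\ref{lem:asGoodas} with the length concentration of Corollary~\ref{cor:lengthtau}, using the defining optimality of $\tau^*$ to pass from the random-length object $\tilde \tau_w$ to the length-$\alpha$ object $S_w=\gamma(\tau^*)$.

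First I would build a length-$\alpha$ ``proxy'' subsequence $\hat \tau_w$ of slots of window $w$ from $\tilde \tau_w$: if $|\tilde \tau_w|\ge \alpha$, let $\hat \tau_w$ be the first $\alpha$ slots of $\tilde \tau_w$; if $|\tilde \tau_w|<\alpha$, pad $\tilde \tau_w$ with arbitrary remaining slots of window $w$ until its length is $\alpha$. Since $\tau^*$ is by definition the argmax of $f(S_{1,\ldots,w-1}\cup \gamma(\cdot))-f(S_{1,\ldots,w-1})$ over length-$\alpha$ subsequences of slots in window $w$, and $|\hat \tau_w|=\alpha$, we obtain the key comparison
$$f(S_{1,\ldots,w-1}\cup S_w)\;\ge\;f(S_{1,\ldots,w-1}\cup \gamma(\hat \tau_w)),$$
so it suffices to prove the lemma with $\gamma(\hat \tau_w)$ in place of $S_w$.

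Next I would apply the per-step greedy bound of Lemma~\ref{lem:asGoodas} along $\hat \tau_w$. Writing $\gamma(\hat \tau_w)=(i_1,\ldots,i_\alpha)$ and
$$\pi_{j}\;:=\;\tfrac{k-\alpha}{k}\OPT-\Ex[f(S_{1,\ldots,w-1}\cup \{i_1,\ldots,i_j\})\mid T_{1,\ldots,w-1},i_1,\ldots,i_{j-1}],$$
for each step $j\le \min(|\tilde \tau_w|,\alpha)$ the $j$-th slot of $\hat \tau_w$ also lies in $\tilde \tau_w$ (so $Z_{s_j}\ne \emptyset$), and Lemma~\ref{lem:asGoodas} together with the same submodular/random-sampling argument used in Corollary~\ref{cor:asGoodas} delivers the recursion $\Ex[\pi_j\mid T_{1,\ldots,w-1},i_1,\ldots,i_{j-2}]\le (1-1/k)\pi_{j-1}$. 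For the remaining (padded) steps with $j>|\tilde \tau_w|$, monotonicity of $f$ gives the trivial bound $\pi_j\le \pi_{j-1}$. Iterating and invoking the martingale stopping argument exactly as in Corollary~\ref{cor:asGoodas} yields
$$\Ex\bigl[\tfrac{k-\alpha}{k}\OPT - f(S_{1,\ldots,w-1}\cup \gamma(\hat \tau_w)) \bigm| T_{1,\ldots,w-1}\bigr]\;\le\;\Ex\bigl[(1-\tfrac{1}{k})^{\min(|\tilde \tau_w|,\alpha)}\bigm| T_{1,\ldots,w-1}\bigr]\cdot\pi_0,$$
with $\pi_0=\tfrac{k-\alpha}{k}\OPT-f(S_{1,\ldots,w-1})$.

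Finally I would bound the decay factor $\Ex[(1-1/k)^{\min(|\tilde\tau_w|,\alpha)}\mid T_{1,\ldots,w-1}]$ by splitting on the event $\{|\tilde \tau_w|\ge (1-\delta')\alpha\}$, which holds with probability at least $1-\delta' e^{-\alpha/k}$ by Corollary~\ref{cor:lengthtau} under the hypothesis \settinga. On this event the factor is at most $(1-1/k)^{(1-\delta')\alpha}\le e^{-(1-\delta')\alpha/k}$; on its complement the factor is at most $1$, contributing at most $\delta' e^{-\alpha/k}$. The main obstacle is this last numerical step: to match the exact coefficient $(1-\delta')e^{-\alpha/k}$ claimed in the lemma, one must exploit that $\alpha/k\le 1/\beta$ is small under the parameter regime $k\ge \alpha\beta$, $\beta\ge 8/(\delta')^2$, and $\alpha\ge 8\beta^2\log(1/\delta')$, so that the $e^{-(1-\delta')\alpha/k}$ term and the $\delta' e^{-\alpha/k}$ tail can be consolidated into $(1-\delta')e^{-\alpha/k}$ after rescaling $\delta'$ by an absolute constant; this numerical bookkeeping, rather than the structural argument above, is the delicate part.
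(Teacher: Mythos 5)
Your route is the same as the paper's: its entire proof of Lemma~\ref{lem:Sw} is the one-line substitution of Lemma~\ref{cor:lengthtau} into Lemma~\ref{cor:asGoodas}, and your explicit treatment of the passage from the random-length $\tilde\tau_w$ to $S_w=\gamma(\tau^*)$ (truncate when $|\tilde\tau_w|\ge\alpha$, pad otherwise, then invoke optimality of $\tau^*$ in \eqref{eq:taustar} over length-$\alpha$ subsequences) fills in a step the paper leaves implicit. One caution there: padding ``with arbitrary remaining slots'' is not automatic, because if the last slot of $\tilde\tau_w$ is the last slot of the window you would have to insert slots earlier in the sequence, and an inserted slot changes all subsequent greedy choices, so $\gamma(\hat\tau_w)$ no longer extends $\gamma(\tilde\tau_w)$ step by step; you should pad only after the last slot of $\tilde\tau_w$ and handle (or absorb into the failure event) the case where no later slots remain.

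The genuine gap is the step you postpone as ``numerical bookkeeping.'' What your argument (and the paper's substitution) actually yields is the contraction factor $e^{-(1-\delta')\alpha/k}+\delta'e^{-\alpha/k}$, and this can never be consolidated into $(1-\delta')e^{-\alpha/k}$, with or without rescaling $\delta'$: since $e^{-(1-\delta')\alpha/k}>e^{-\alpha/k}$, the factor you obtain is strictly larger than $e^{-\alpha/k}$, whereas the printed one is strictly smaller, and in the stated regime $\alpha/k\le 1/\beta\le(\delta')^2/8$ the discrepancy is of order $\delta'$, not removable by any constant rescaling. In fact the coefficient as printed cannot hold for large $k$: taking $f(S)=|S\cap S^*|$ with $|S^*|=k$ and $w=1$, the stated inequality would force $\Ex[f(S_1)]\ge\bigl(1-(1-\delta')e^{-\alpha/k}\bigr)(k-\alpha)\ge\delta'(k-\alpha)$, which is impossible because $|S_1|\le\alpha$ while $k\ge\alpha\beta\ge 8\alpha/(\delta')^2$. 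So the bound should be read with a factor of the form you actually derived (an $e^{-(1-\delta')\alpha/k}$ term plus the $\delta'e^{-\alpha/k}$ failure term); the paper's own one-line proof gives no more than that either. The honest conclusion of your proof is that corrected inequality, and the final consolidation you promise is the one step that would fail, so you should state the factor you obtain rather than attempt to match the printed constant.
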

\begin{proof}
The lemma follows from substituting Lemma \ref{cor:lengthtau} in Lemma \ref{cor:asGoodas}.
\end{proof}

Now, we can deduce the following proposition. 
\begin{prop}
\label{prop:first}
For any real $\delta'\in (0,1)$, if parameters 
$k,\alpha, \beta$ satisfy \settinga, then the set $S_{1,\ldots, W}$ tracked by Algorithm \ref{alg:main} satisfies
$$\mathbb{E}[f(S_{1,\ldots, W})] \ge (1-\delta')^2(1-1/e)  \OPT.$$
\end{prop}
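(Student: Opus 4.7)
The plan is to iterate Lemma \ref{lem:Sw} across the $W = k/\alpha$ windows by setting up a one-step contraction for the quantity
\[
\psi_w \;:=\; \tfrac{k-\alpha}{k}\,\OPT \;-\; \mathbb{E}\!\left[f(S_{1,\ldots,w})\right],
\]
with $\psi_0 = \tfrac{k-\alpha}{k}\,\OPT$ (since $S_0 = \emptyset$ and we may assume WLOG $f(\emptyset)=0$). First I would take the total expectation of the bound in Lemma \ref{lem:Sw} over $T_{1,\ldots,w-1}$; since $S_{1,\ldots,w-1}$ is a deterministic function of $T_{1,\ldots,w-1}$ (Lemma \ref{config}), the tower property yields
\[
\psi_w \;\le\; (1-\delta')\,e^{-\alpha/k}\,\psi_{w-1}.
\]

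Next I would unroll this recursion for $w=1,\ldots,W$. Because $W = k/\alpha$, the exponential terms telescope as $\bigl(e^{-\alpha/k}\bigr)^{W} = e^{-1}$, giving
\[
\psi_W \;\le\; (1-\delta')^{W}\, e^{-1}\cdot \tfrac{k-\alpha}{k}\,\OPT.
\]
Rearranging,
\[
\mathbb{E}[f(S_{1,\ldots,W})] \;\ge\; \tfrac{k-\alpha}{k}\,\OPT\cdot \Bigl(1 - \tfrac{(1-\delta')^{W}}{e}\Bigr).
\]

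The remaining step is to check that the prefactors are each at least $(1-\delta')(1-1/e)$. For the first factor, the parameter hypotheses $k \ge \alpha\beta$ and $\beta \ge 8/(\delta')^2$ imply $\alpha/k \le 1/\beta \le (\delta')^2/8 \le \delta'$, hence $(k-\alpha)/k \ge 1-\delta'$. For the second factor, since $(1-\delta')^{W}\le 1$ we have $1 - (1-\delta')^{W}/e \ge 1 - 1/e \ge (1-\delta')(1-1/e)$. Multiplying these two bounds gives
\[
\mathbb{E}[f(S_{1,\ldots,W})] \;\ge\; (1-\delta')^{2}\,(1-1/e)\,\OPT,
\]
which is the claim.

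I don't expect any real obstacle beyond a careful bookkeeping of the parameter inequalities; the only subtle point is verifying that $(k-\alpha)/k \ge 1-\delta'$ under the stated regime, which is where the hypothesis $k\ge \alpha\beta$ together with $\beta$ being large in $1/\delta'$ is used. The exponent telescoping to $e^{-1}$ is the reason the final bound matches the classical $1-1/e$ greedy guarantee, with the two $(1-\delta')$ losses coming separately from the per-window slippage of Lemma \ref{lem:Sw} and from the fact that $S^*$ is compared against a slightly shrunken value $\tfrac{k-\alpha}{k}\OPT$ (which in turn originates from the $(1-\alpha/k)$ sampling loss in Lemma \ref{lem:asGoodas}).
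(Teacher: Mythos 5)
Your proof is correct and follows essentially the same route as the paper: iterate (``multiply'') the contraction of Lemma \ref{lem:Sw} over the $W=k/\alpha$ windows so the exponential factors telescope to $e^{-1}$, then absorb the $(k-\alpha)/k$ loss via $\alpha/k\le 1/\beta\le\delta'$ from the hypotheses $k\ge\alpha\beta$, $\beta\ge 8/(\delta')^2$. Your bookkeeping of the intermediate bound $\bigl(1-(1-\delta')^{W}/e\bigr)$ versus the paper's $(1-\delta')(1-1/e)$ differs only trivially, and your tower-property justification via Lemma \ref{config} is a detail the paper leaves implicit.
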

\begin{proof}
By multiplying the inequality Lemma \ref{lem:Sw} from $w=1, \ldots, W$, where $W=k/\alpha$, we get 
$$\mathbb{E}[f(S_{1,\ldots, W})] \ge (1-\delta')(1-1/e) (1-\frac{\alpha}{k}) \OPT.$$
Then, using $1-\frac{\alpha}{k}\ge 1-\delta'$ because $k\ge \alpha \beta \ge \frac{\alpha}{\delta'}$, we obtain the desired statement.
\end{proof}



\subsection{Bounding $\Ex[f(A^*)]/\OPT$}
Here, we compare $f(S_{1\ldots, W})$ to $f(A^*)$, where $A^*=S_{1\ldots, W} \cap A$, with $A$ being the shortlist returned by Algorithm \ref{alg:main}. The main difference between the two sets is that in construction of shortlist $A$, Algorithm \ref{alg:SIIImax} is being used to compute the argmax in the definition of $\gamma(\tau)$, in an online manner. This argmax may not be computed exactly, so that some items from $S_{1\ldots, W}$ may not be part of the shortlist $A$. We use the following guarantee for Algorithm~\ref{alg:SIIImax} to bound the probability of this event.



\begin{restatable}{prop}{maxanalysis}
\label{maxanalysis}
For any $\delta\in (0,1)$, and input $I=(a_1,\ldots, a_N)$, Algorithm~\ref{alg:SIIImax}
returns $A^*=\max(a_1,\ldots, a_N)$ with 
probability $(1-\delta)$. 
\end{restatable}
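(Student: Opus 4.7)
}

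The plan is to decompose the failure event $\{A^* \ne \max(a_1,\ldots,a_N)\}$ into two natural pieces and bound each by $\delta/2$. Let $M^*$ denote the maximum value and $i^*$ its (random) position in the uniformly random arrival order, and recall $u = N\delta/2$, $L = 4\ln(2/\delta)$. Inspecting the algorithm, $M^*$ is placed in the shortlist $A$ iff (i) $i^* \ge u$, so that $M^*$ arrives after the initial ignored prefix, and (ii) $|A| < L$ at the moment $M^*$ arrives, so there is still room. Denote these failure events $E_1 = \{i^* < u\}$ and $E_2 = \{\text{shortlist is full before } M^* \text{ arrives}\}$. Since the arrival order is uniformly random, $\Pr(E_1) = u/N = \delta/2$, which is the first half of the budget.

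For $E_2$, the key observation is that for each position $i$, the indicator $Y_i := \mathbf{1}[a_i > \max(a_1,\ldots,a_{i-1})]$ satisfies $\Pr(Y_i = 1) = 1/i$, and moreover $Y_u, Y_{u+1}, \ldots, Y_N$ are mutually independent (this is the standard ``record indicators are independent'' fact about uniform random permutations; it depends only on the relative order of the first $i$ values). The total number of items added to $A$ throughout the run is at most $Y := \sum_{i=u}^N Y_i$ (no item is added before position $u$, and an item is added in the relevant window only if it was a running maximum). In particular, $E_2 \subseteq \{Y \ge L\}$. The expectation is
\[
\mu := \mathbb{E}[Y] \;=\; \sum_{i=u}^N \frac{1}{i} \;\le\; \ln(N/u) + 1 \;=\; \ln(2/\delta) + 1.
\]

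The final step is a Chernoff-type bound on the independent sum $Y$ to deduce $\Pr(Y \ge L) \le \delta/2$. Using $\mathbb{E}[e^{tY}] \le \exp((e^t-1)\mu)$ and optimizing $t = \ln(L/\mu)$ yields
\[
\Pr(Y \ge L) \;\le\; \exp\!\bigl(L - \mu - L\ln(L/\mu)\bigr).
\]
With $L = 4\ln(2/\delta)$ and $\mu \le \ln(2/\delta) + 1$, a direct calculation shows $L - \mu - L\ln(L/\mu) \le -\ln(2/\delta) = \ln(\delta/2)$ for all $\delta \in (0,1)$, so $\Pr(Y \ge L) \le \delta/2$. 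A union bound then gives $\Pr(M^* \notin A) \le \Pr(E_1) + \Pr(E_2) \le \delta$, as claimed.

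The only delicate part is verifying that the constant $4$ in $L = 4\ln(2/\delta)$ really suffices to push the Chernoff bound below $\delta/2$ uniformly in $\delta \in (0,1)$; for small $\delta$ this is easy since $L/\mu \to 4$ and the tail is of order $\delta^{c}$ with $c > 1$, but the regime $\delta$ close to $1$ (where $\mu$ and $L$ are both $O(1)$) needs a separate check. If the constant turns out to be tight, one can either argue this boundary regime by hand or absorb a slightly larger constant in $L$; either way the shortlist size remains $O(\log(1/\delta))$ and Proposition \ref{prop:size} is unaffected.
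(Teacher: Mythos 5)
Your proposal follows essentially the same route as the paper's proof: skip the first $u=N\delta/2$ positions (the maximum lands there with probability at most $\delta/2$), and then show that the number of running-maximum ``records'' after position $u$ stays below the budget $L=4\ln(2/\delta)$ with probability at least $1-\delta/2$, so the maximum is still admitted when it arrives. The paper's appendix argument has the same two-part structure (it controls the record count via a martingale concentration bound, whereas you invoke independence of the record indicators and a Poisson-type Chernoff bound); these are interchangeable, and your reduction $E_2\subseteq\{Y\ge L\}$ with $\Pr(Y_i=1)=1/i$ is sound.

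One concrete caveat: the sentence ``a direct calculation shows $L-\mu-L\ln(L/\mu)\le-\ln(2/\delta)$ for all $\delta\in(0,1)$'' is not literally true once you replace $\mu$ by the crude bound $\ln(2/\delta)+1$. For instance at $\delta=0.9$ one has $x:=\ln(2/\delta)\approx0.80$, $L=4x\approx3.19$, $\mu\le1.80$, and the exponent evaluates to roughly $-0.44>-x$; in fact the inequality with $\mu=x+1$ fails for all $x\lesssim1.25$, i.e.\ $\delta\gtrsim0.57$. This is exactly the boundary regime you flag, and it is repairable rather than fatal: for $\delta\ge1/2$ and $N$ not tiny, $u=N\delta/2$ is bounded below, so the sharper estimate $\mu\le\ln(2/\delta)+1/u$ restores the calculation; for tiny $N$ the event $\{Y\ge L\}$ requires more records than there are positions preceding the maximum and is vacuous; and in the only way the paper ever invokes the proposition, $\delta=\epsilon/2\le1/2$, where your computation (with $\mu\le x+1$, $x\ge\ln4$) does go through. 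So the structure and constants are fine, but that boundary check must actually be written out (or the constant $4$ slightly enlarged) rather than asserted for all $\delta\in(0,1)$.
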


The proof of the above proposition appears in Appendix \ref{app:msubm}. Intuitively, it follows from the observation that if we select every item that improves the maximum of items seen so far, we would have selected $\log(N)$ items in expectation. The exact proof involves showing that on waiting $n\delta/2$ steps and then selecting maximum of every item that improves the maximum of items seen so far, we miss the maximum item with at most $\delta$ probability, and select at most $O(\log(1/\delta))$ items with probability $1-\delta$.
\begin{lemma}
\label{online}
Let $A$ be the shortlist returned by Algorithm \ref{alg:main}, and $\delta$ is the parameter used to call Algorithm \ref{alg:SIIImax} in Algorithm \ref{alg:main}. Then, for given configuration $Y$, for any item $a$, we have $$Pr(a\in A|Y, a\in S_{1,\cdots, w}) \ge 1-\delta\ .$$
\end{lemma}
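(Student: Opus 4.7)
The plan is as follows. Fix a configuration $Y$ and an item $a$ with $a\in S_{1,\ldots,w}$. By Lemma \ref{config}, the sets $R_{1,\ldots,w}$, $S_{1,\ldots,w}$, and $T_{1,\ldots,w}$ are determined by $Y$; the only remaining randomness is the uniform order in which items arrive \emph{within} each slot, which is precisely what drives Algorithm \ref{alg:SIIImax}. I would pinpoint a single designated call to Algorithm \ref{alg:SIIImax} whose success forces $a$ into $A$, and then invoke Proposition \ref{maxanalysis} on that one call.

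First, since $S_{w'}\subseteq R_{w'}$ for every $w'$, we have $a\in R_{1,\ldots,w}$. Let $w_a$ denote the window containing the slot $\sigma_a := Y_a$ in which $a$ arrives. Since $a$ is not present in the stream before window $w_a$, it cannot lie in $R_{1,\ldots,w_a-1}$, so $a\in R_{w_a}$. By \eqref{eq:Rw} there is a length-$\alpha$ subsequence $\tau=(s_1,\ldots,s_\alpha)$ of slots in window $w_a$ and an index $j$ with $a=i_j$, the $j$-th element of $\gamma(\tau)$. Because $a\notin R_{1,\ldots,w_a-1}$, the $\arg\max$ in \eqref{eq:ij} defining $i_j$ must be attained by an item of $s_j$, so $s_j=\sigma_a$ and $a$ maximises $\Delta_f(\,\cdot\mid S_{1,\ldots,w_a-1}\cup\{i_1,\ldots,i_{j-1}\})$ over $\sigma_a\cup R_{1,\ldots,w_a-1}$ (with a fixed tie-breaking rule, e.g.\ by arrival index within a slot).

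Next, when Algorithm \ref{alg:main} processes $\sigma_a$ in window $w_a$, it enumerates \emph{every} subsequence $\tau'\subseteq\{s_1,\ldots,s_{j-1}\}$ with $|\tau'|<\alpha$, and in particular invokes Algorithm \ref{alg:SIIImax} on $\tau':=(s_1,\ldots,s_{j-1})$. The inputs to this specific call are exactly $a_0=\max_{x\in R_{1,\ldots,w_a-1}}\Delta_f(x\mid S_{1,\ldots,w_a-1}\cup\gamma(\tau'))$ together with $a_\ell = \Delta_f(\sigma_a(\ell)\mid S_{1,\ldots,w_a-1}\cup\gamma(\tau'))$ for $\ell\ge 1$; by the previous paragraph the maximum among $(a_0,a_1,\ldots,a_{N-1})$ is attained at the index $\ell$ corresponding to $a$. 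Proposition \ref{maxanalysis} then guarantees that, with probability at least $1-\delta$ over the uniform random order of items within $\sigma_a$, Algorithm \ref{alg:SIIImax} returns this maximum; in that event, $a$ is added to the inner shortlist (it is not the dummy $a_0$) and hence to $A$ on line \ref{li:sube}. Therefore $\Pr(a\in A\mid Y,\,a\in S_{1,\ldots,w})\ge 1-\delta$.

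The main conceptual pitfall to avoid is taking a union bound over the combinatorial number of calls Algorithm \ref{alg:SIIImax} makes in a window; that would give a useless bound of the form $1-\binom{\alpha\beta}{\alpha}\delta$. The point is that only the single designated call identified above must succeed for $a$ to land in $A$, which is why the factor $1-\delta$ does not degrade. Tie-breaking in the $\arg\max$ of \eqref{eq:ij} is a minor technicality, handled by fixing a canonical rule so that the first strict maximum produced by Algorithm \ref{alg:SIIImax} coincides with $a$ on the success event of Proposition \ref{maxanalysis}.
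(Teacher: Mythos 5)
Your proposal is correct and follows essentially the same route as the paper's own proof: identify the arrival window/slot of $a$ (the first window in which $a$ can enter a greedy subsequence), observe that $a$ attains the argmax in \eqref{eq:ij} over $s_j\cup R_{1,\ldots,w_a-1}$ for the prefix subsequence $\tau'$, and apply Proposition \ref{maxanalysis} to that single designated call of Algorithm \ref{alg:SIIImax}, using that conditioning on $Y$ leaves the within-slot order uniformly random (Lemma \ref{config}). Your explicit remarks on avoiding a union bound over calls and on tie-breaking are consistent with (and slightly more careful than) the paper's argument.
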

\begin{proof}
From Lemma~\ref{config} by conditioning on $Y$, the set $S_{1,\cdots, W}$ is determined. Now if $a\in S_{1,\dots, w}$, 
then for some slot $s_j$ in an $\alpha$ length subsequence $\tau$ of some window $w$, we must have 
$$a = \arg \max_{i\in s_j \cup R_{1, \ldots, w-1}} f(S_{1, \ldots,w-1} \cup \gamma(\tau) \cup \{i\}) - f(S_{1, \ldots, w-1} \cup \gamma(\tau)).$$ 
Let $w'$ be the first such window, $\tau', s_{j'}$ be the corresponding subsequence and slot. Then, it must be true that 
$$a = \arg \max_{i\in s_{j'}} f(S_{1, \ldots,w'-1} \cup \gamma(\tau') \cup \{i\}) - f(S_{1, \ldots, w'-1} \cup \gamma(\tau')).$$ 
(Note that the argmax in above is not defined on $R_{1,\cdots, w'-1}$).
The configuration $Y$ only determines the set of items in the items in slot $s_{j'}$, the items in $s_{j'}$ are still randomly ordered (refer to Lemma \ref{config}). Therefore, from Proposition~\ref{maxanalysis}, with probability $1-\delta$, $a$ will be added to the shortlist $A_{j'}(\tau')$ by Algorithm~\ref{alg:SIIImax}. Thus $a\in A \supseteq A_{j'}(\tau')$ with probability at least $1-\delta$.

\toRemove{If $a\in R_{1,\cdots, w-1}$ then $a$ has appeared in a window before $w$ for the first time, say $w'$. Since $a\in R_{1,\cdots, w-1}$ there is $\tau'$, such that
$a := \arg \max_{i\in s_{\ell}} f(S_{1, \ldots,w'-1} \cup \gamma(\tau') \cup \{i\}) - f(S_{1, \ldots, w'-1} \cup \gamma(\tau'))$ 
(Note that the argmax in above is not defined on $R_{1,\cdots, w'-1}$).
The configuration $Y$ only determines the set of items in the items in slot $s_{j'}$, the items in $s_{j'}$ are still randomly ordered (refer to Lemma \ref{..}).
The permutation of elements in $s_{\ell}$ defines whether or not the online Algorithm~\ref{alg:SIIImax} selects $a$ or not. 
Therefore, from Theorem~\ref{maxanalysis}, with probability $1-\delta$, $a$ will be added to the shortlist $A_{j'}(\tau')$ by Algorithm~\ref{alg:SIIImax}. Thus $a\in A \supseteq A_{j'}(\tau')$ with probability at least $1-\delta$.
Thus $a\in H_{1,\cdots, w'-1}$ with probability at least $1-\delta$.
Now If $a \notin R_{1,\cdots, w-1}$ and $a\in s_j$, then the permutation of elements in $s_{j}$ defines whether or not the online Algorithm~\ref{alg:SIIImax} selects $a$ or not.
Therefore again from theorem~\ref{maxanalysis}, 
$a\in H_{1,\cdots, w'-1}$ with probability at least  $1-\delta$.}
\end{proof}
\begin{prop}
\label{prop:online}
$$\Ex[f(A^*)] := \mathbb{E}[f(S_{1,\cdots, W} \cap A)] \ge (1-\frac{\epsilon}{2})\mathbb{E}[f(S_{1,\cdots, W})]$$
where $A^*:= S_{1,\cdots, W} \cap A$ is the size $k$ subset of shortlist $A$ returned by Algorithm \ref{alg:main}.
\end{prop}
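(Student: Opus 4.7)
}
The plan is to condition on the configuration $Y$ (the vector of slot assignments), apply Lemma \ref{online} pointwise to every element of $S_{1,\ldots,W}$, and then invoke Lemma \ref{sample} to turn a per-element inclusion guarantee into a bound on the submodular value.

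First, by Lemma \ref{config}, the set $S_{1,\ldots,W}$ is completely determined by the configuration $Y$, so after conditioning on $Y$ we may treat $S_{1,\ldots,W}$ as a fixed set. The remaining randomness comes only from the internal orderings of items within each slot, which are independent of $Y$ (again by Lemma \ref{config}). Lemma \ref{online}, applied with $\delta = \epsilon/2$ (the parameter we pass to Algorithm \ref{alg:SIIImax} inside Algorithm \ref{alg:main}), then gives that for every item $a \in S_{1,\ldots,W}$,
\[
\Pr\!\bigl(a \in A \,\big|\, Y,\ a \in S_{1,\ldots,W}\bigr) \;\ge\; 1 - \tfrac{\epsilon}{2}.
\]
Thus $S_{1,\ldots,W} \cap A$, viewed as a random subset of the fixed set $S_{1,\ldots,W}$ (randomness over within-slot orderings given $Y$), contains each element of $S_{1,\ldots,W}$ with probability at least $1-\epsilon/2$. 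The per-element events need not be independent, but Lemma \ref{sample} is stated precisely for this setting.

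Applying Lemma \ref{sample} with $p = 1 - \epsilon/2$ conditional on $Y$, and using $f(\emptyset) \ge 0$ (which follows from monotonicity together with the standard normalization of submodular functions), gives
\[
\Ex\!\bigl[f(S_{1,\ldots,W} \cap A) \,\big|\, Y\bigr] \;\ge\; \bigl(1-\tfrac{\epsilon}{2}\bigr) f(S_{1,\ldots,W}) + \tfrac{\epsilon}{2} f(\emptyset) \;\ge\; \bigl(1-\tfrac{\epsilon}{2}\bigr) f(S_{1,\ldots,W}).
\]
Taking expectation over $Y$ and using the tower property yields the claimed bound $\Ex[f(A^*)] \ge (1-\epsilon/2)\,\Ex[f(S_{1,\ldots,W})]$.

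The only delicate point is the conditioning: one must verify that Lemma \ref{online}'s probability bound is indeed with respect to the randomness that remains after fixing $Y$, namely the within-slot permutations on which Algorithm \ref{alg:SIIImax} operates. Once this is in place, there is no serious obstacle — the argument is a clean combination of the two cited lemmas, and the factor $\epsilon/2$ (rather than $\epsilon$) arises directly from the choice of $\delta$ in the invocation of Algorithm \ref{alg:SIIImax} inside Algorithm \ref{alg:main}.
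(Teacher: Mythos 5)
Your proposal is correct and follows essentially the same route as the paper's own proof: condition on the configuration $Y$ (so that $S_{1,\ldots,W}$ is fixed by Lemma \ref{config}), apply Lemma \ref{online} with $\delta=\epsilon/2$ to get the per-element inclusion guarantee, and then invoke Lemma \ref{sample} before averaging over $Y$. Your write-up is simply a more explicit version of the paper's argument, spelling out the conditioning and the $f(\emptyset)\ge 0$ normalization that the paper leaves implicit.
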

\begin{proof}
From the previous lemma, given any configuration $Y$, we have that each item of $S_{1,\cdots, W}$ is in $A$ with probability at least $1-\delta$, where $\delta=\epsilon/2$ in Algorithm \ref{alg:main}.  
Therefore using Lemma~\ref{sample}, the expected value of $f(S_{1,\cdots, W}\cap A)$ 
is at least $(1-\delta)\mathbb{E}[F(S_{1,\cdots, W})]$.
\end{proof}
\paragraph{Proof of Theorem \ref{opttheorem}.} Now, we can show that Algorithm \ref{alg:main} provides the results claimed in Theorem \ref{opttheorem} for appropriate settings of $\alpha, \beta$ in terms of $\epsilon$. 
Specifically for $\delta'=\epsilon/4$, set $\alpha,\beta$ as smallest integers satisfying  \settingb. Then, using Proposition \ref{prop:first} and Proposition \ref{prop:online}, for $k\ge \alpha\beta$ we obtain:
$$\Ex[f(A^*)] \ge (1-\frac{\epsilon}{2})(1-\delta')^2 (1-1/e) \OPT \ge (1-\epsilon)(1-1/e) \OPT.$$
This implies a lower bound of $1-\epsilon - 1/e - \alpha\beta/k = 1-\epsilon-1/e - O(1/k)$ on the competitive ratio.

The $O(k)$ bound on the size of the shortlist was  demonstrated in Proposition \ref{prop:size}.


\section{Streaming (Proof of Theorem \ref{thm:streaming})}
\label{sec:streaming}
In this section, we show that Algorithm \ref{alg:main} can be implemented in a way that it uses a memory buffer of size at most $\eta(k)=O(k)$; and the number of objective function evaluations for each arriving item is $O(1+\frac{k^2}{n})$. This will allow us to obtain  Theorem \ref{thm:streaming} (restated below) as a corollary of Theorem \ref{opttheorem}.
\thmStreaming*

In the current description of Algorithm \ref{alg:main}, there are several steps in which the algorithm potentially needs to store $O(n)$ previously seen items in order to compute the relevant quantities. 
First, in Step \ref{li:subb}, in order to be able to compute $\gamma(\tau)$ for all less than $\alpha$ length subsequences $\tau$ of slots $s_1, \ldots, s_{j-1}$, the algorithm should have stored all the items that arrived in the slots $s_1, \ldots, s_{j-1}$. However, this memory requirement can be reduced by a small modification of the algorithm, so that at the end of iteration $j-1$, the algorithm has already computed  $\gamma(\tau)$ for all such $\tau$, and stored them to be used in iteration $j$. In fact, this can be implemented in a memory efficient manner, in the following way. For every subsequence $\tau$ of slots $s_1, \ldots, s_{j-1}$ of length $<\alpha$, consider prefix $\tau'=\tau\backslash s_{j-1}$. Assume $\gamma(\tau')$ is available from iteration $j-2$. 
If $\tau'=\tau$, then $\gamma(\tau)=\gamma(\tau')$. Otherwise, in Step 6 of iteration $j-1$, the algorithm must have considered the subsequence $\tau'$ while going through all subsequences of length less than $\alpha$ of slots $s_1, \ldots, s_{j-2}$. Now, modify the implementation of Step 6 so that  the algorithm also tracks the (true) maximum $M_{j-1}(\tau')$ of $a_0, a_1, \ldots, a_N$ for each $\tau'$. Then, $\gamma(\tau)$ can be obtained by extending $\gamma(\tau')$ by $M_{j-1}(\tau')$, i.e.,  $\gamma(\tau)=\{\gamma(\tau'), M_{j-1}(\tau')\}$. Thus, at the end of iteration $j-1$, $\gamma(\tau)$ would have been computed for all subsequences $\tau$ relevant for iteration $j$, and so on. In order to store these $\gamma(\tau)$ for every subsequence $\tau$ (of  at most $\alpha$ slots from $\alpha \beta$ slots), we require a memory buffer of size at most $\alpha^2{\alpha \beta \choose \alpha} = O(1)$.  

Secondly, across windows and slots, the algorithm keeps track of $R_w, S_w, w=1,\ldots, k/\alpha$ where $W=k/\alpha$. In the current description of Algorithm \ref{alg:main}, these sets are computed after seeing all the items in window $w$ in Step~\ref{li:Rw}. Thus, all the items arriving in that window would be needed to be stored in order to compute them, requiring $O(n)$ memory buffer. However, the alternate implementation discussed in the previous paragraph reduces this memory requirement to $O(k)$ as well. Using the above implementation, at the end of iteration $\alpha \beta$ for the last slot $s_{\alpha\beta}$ in window $w$, we would have computed and stored $\gamma(\tau)$ for all the subsequences  $\tau$ of length $\alpha$ of slots $s_1,\ldots, s_{\alpha\beta}$. 
$R_w$ is simply defined as union of all items in  $\gamma(\tau)$ over all such $\tau$ (refer to \eqref{eq:Rw}). And, $S_w = \gamma(\tau^*)$  for the best subsequence $\tau^*$ among these subsequences (refer to \eqref{eq:Sw}). 
Thus, computing $R_w$ and $S_w$ does not require any additional memory buffer. Storing $R_w$ and $S_w$ for all windows requires a buffer of size at most $\sum_w |R_w| + |S_w| = \frac{k}{\alpha} \times \alpha {\alpha \beta \choose \alpha}+ k = O(k).$
Therefore, the total buffer required to implement Algorithm \ref{alg:main} is of size $ O(k)$. 


Finally, let's bound the number of objective function evaluations for each arriving item. Each arriving item is processed in Step 6, where objective function is evaluated twice for each $\tau$ to compute the corresponding $a_i$. Since there are atmost ${\alpha \beta \choose \alpha}$ subsequences $\tau$ for which this quantity is computed, the total number of times 
this computation is performed is bounded by $2 {\alpha \beta \choose \alpha}=O(1)$. However, for each $\tau$, we also compute $a_0$ in the beginning of the slot. Computing $a_0$ for each $\tau$ involves taking max over all items in $R_{1,\ldots, w-1}$, and requires $2|R_{1,\ldots,w-1}|\le 2k {\alpha \beta \choose \alpha}$ evaluations of the objective function. Due to this computation, in the worst-case, the update time for an item can be $ 2k {\alpha \beta \choose \alpha}^2 + 2 {\alpha \beta \choose \alpha}= O(k)$. However, since $a_0$ is computed {\it once} in the beginning of the slot for each $\tau$, the  total update time over all items is bounded by $2k {\alpha \beta \choose \alpha}^2 \times k\beta + {\alpha \beta \choose \alpha} \times n = O(k^2+n)$. Therefore the amortized update time for each item is $O(1+\frac{k^2}{n})$.
\scomment{replaced by above:Also each subroutine call on a slot $s_j$ goes over all elements in $s_j$ and also $R_{1,\cdots, w-1}$ to find the maximum element. 
When we pass elements of $s_j\cup R_{1,\cdots, w-1}$ to the online subroutine~\ref{alg:SIIImax} one by one, for each element $e\in  s_j$ we can also pass $\frac{|R_{1,\cdots, w-1}|}{|s_j|}$ elements from $R_{1,\cdots, w-1}$. 
Therefore the amortized update time
would be $O(1+\frac{k^2}{n})$.
Also note that the worst case update time can be $O(k)$.}
This concludes the proof of Theorem \ref{thm:streaming}.



\section{Impossibility Result (Proof of Theorem \ref{hardness})}
\label{sec:hardness}
 

In this section we provide an upper bound showing the following:

\hardness*


In the following proof, for simplicity of notation, we prove the desired bound for submodular $(k+1)$-secretary problem. 
For any given $n, k$, we construct a set of instances of the  \nameOfProblemSLplus~such that any online algorithm that uses a shortlist of size $\eta(k+1)$ will have competitive ratio of at most $\frac{7}{8} + \frac{\eta(k+1)}{2n}$ on a randomly selected instance from this set. 

First, we define a monotone submodular function $f$ as follows. 
The ground set consists of $\frac{n}{2k}+n-1$ items. There are two types of items, $C$ and  $D$, with $L:=n/2k$ items of type $C$ and
$n-1$ items of type $D$. We define $f(\phi):=0$, $f(\{c\}):=k$ for $c\in C$, and $f(\{d\}):=1$ for all $d\in D$. 
Also there is a collection of $L$ {\it disjoint} sets $T_{\ell}=\{c^{\ell}, d_1^{\ell},\cdots, d^{\ell}_{k}\}$, $\ell=1,2, \ldots L$, such that $c^{\ell}\in C$ and $d_j^{\ell}\in D$. We define $f(T_{\ell}):=2k$ for all $\ell=1,\ldots, L$.
Now, let
$$g(t):=k+\frac{k}{2}+\cdots+ \frac{k}{2^{i-1}} + \frac{(t-ik)}{2^i},$$
where $i=\floor{t/k}$. It is easy to see that $g$ is a monotone submodular function. 

Now, define $f$ on the remaining subsets of the ground set as follows. For all $S$ with $|S|\ge 1$, 
\begin{itemize}
\item $|S\cap C| \ge 2 \implies f(S):=2k+1$
\item $|S\cap C|=0 \implies f(S):=1+g(|S|-1)$
\item $|S\cap C|=1$ $\implies S\cap C=\{c^\ell\}$ for some $\ell\in [L]$   $\implies$
$$f(S):= \min\{2k+1, k+\frac{1}{2} g(|S|-1)+\frac{k'}{2^{i+1}}\},$$ 
where $k'=|S\cap\{d^{\ell}_1,\cdots, d_k^{\ell}\} |$, $i=\lfloor(|S|-1)/k\rfloor$.
\end{itemize}

Observe that since $g(k)=k$, for any subset $S$ of size at most $k+1$, we have $f(S)\le k+\frac{k}{2} + \frac{k}{2} = 2k$.

\begin{lemma}
$f$ is a monotone submodular function.
\end{lemma}
\begin{proof}
We have to show that for any item $x$ and subsets $S\subseteq T$, $\Delta_f(x|S) \ge \Delta_f(x|T)$.
We consider the following cases:
\begin{itemize}
\item if $|T\cap C| \ge 2 \implies \Delta_f(x|T)=0$, so it is trivial.
\item if $|T\cap C| =0 \implies |S\cap C|=0 \implies \Delta_f(x|S) \ge \Delta_f(x|T)$ because of submodularity of $g$.
\item if $|T\cap C|=1 \implies |S\cap C|\le 1$
\begin{itemize}
\item if $|S\cap C|=1 $ then $S\cap C = T\cap C = \{c^{\ell}\}$ for some $\ell$:
\begin{itemize}
\item $x\in \{d_1^{\ell},\cdots, d_k^{\ell}\} \implies \Delta_f(x|S)=1/2^{i+1}+1/2^{i+1}$ for $i=\lfloor (|S|-1)/k\rfloor$, and $\Delta_f(x|T)=1/2^j+1/2^j$ for some $j=\lfloor (|T|-1)/k\rfloor$ and $j\ge i+1$.
\item $x\notin \{d_1^{\ell},\cdots, d_k^{\ell}\} \implies \Delta_f(x|S)=1/2^{i+1}$ for $i=\lfloor (|S|-1)/k\rfloor$ and $\Delta_f(x|T)=1/2^j$ for some $j\ge i+1$.
\end{itemize} 

\item if $|S\cap C|=0 \implies \Delta_f(x|T) \le 1/2^{j+1}+1/2^{j+1}$ for $j=\lfloor (|T|-1)/k\rfloor$ and $\Delta_f(x|S)= 1/2^i$ for some $i\le j$. 
\end{itemize}
Thus $\Delta_f(x|S) \ge \Delta_f(x|T)$.
\end{itemize} 
Monotonicity follows trivially from the definition of $f$.
\end{proof}

\toRemove{
Also for any subset of $S$ with at least two elements of $C$, $f(S)=2k$. 
For subset $S$ with no element in $C$, if $i=\floor*{\frac{|S|}{k}}$, 
$f(S)$ is defined as 
\[
f(S)= k+k/2+ \dots +(|S|-ik)/2^{i}.
\]
For subset $S$ with exactly one element $c\in C$ such that $c\neq c_1$, if if $i=\floor*{\frac{|S|-1}{k}}$, 
$f(S)$ is defined as 
\[
f(S)= k+ k/2+ \dots +(|S|-1-ik)/2^{i+1}.
\]
Thus if $|S|=k+1$ and it contains exactly one element of $c\in C$ such that $c\neq c_1$ then $f(S)=k+k/2$.
Now suppose $S$  contains $c'$. 
For $S=T$, $f(S)=2k$.  Suppose $S$ contains $k'<k$ elements of $\{b_1, \dots, b_k\}$  and $\ell$ other elements of $D$. 
Let's $t=k'+\ell$, and $i=\floor*{\frac{t}{k}}$, then
\[
f(S)= \min\{2k, k'/2+(k+ k/2+ \dots +(t-ik)/2^{i+1})\}
\]
}


Now, denote $D^{\ell} := T^\ell \cap D = \{d_1^{\ell},\cdots, d_k^{\ell}\}$ for $\ell=1,2, \ldots, L$. Also, let 
$D'=D\setminus (\bigcup_{\ell=1}^L D^{\ell})$. 
Now define $L$ input instances $\{I_{\ell}\}_{\ell =1,\ldots, L}$, each of size $n$, as follows. For
any arbitrary subset $\tilde{D}\subseteq D'$ of size $n-Lk-1$, 
define $I_{\ell}=\bigcup_{i=1,\ldots, L}  D^{i} \cup \tilde{D} \cup \{c^{\ell}\}$, for  $\ell=1,\ldots, L$.
Thus, for instance $I_{\ell}$, the the optimal $k+1$ subset is $T^\ell$ with value $f(T^{\ell})=2k$. 

Now consider any algorithm for the submodular \nameSecSL\ and cardinality constraint $k+1$. We denote by $Alg$ the set of $\eta(k+1)$ items selected by the algorithm as part of the shortlist. 
Let $\bar I$ denote an instance chosen uniformly at random from $I_\ell, \ell=1,\ldots, L$.  
Let $\pi$ denote a random ordering of $n$ items in $\bar I$. 
We denote by random variable $(\bar{I},\pi)$ the randomly ordered input instance to the algorithm. Also we denote by $\bar{T}, \bar{D}$ and $\bar c$, the corresponding $T^{\ell}$, $D^{\ell}$ and $c^\ell$.

Now we claim 


\begin{lemma}
$\mathbb{E}_{(\bar{I},\pi)}[|Alg\cap \bar{D}|] \le k/2+\eta(k+1)/L$.
\end{lemma}
\begin{proof}
Suppose $(e_1,\cdots, e_n)$ indicates the ordered input according to random ordering $\pi$ on $\bar{I}$. 
Now let $t$ be the random variable indicating the index of $c^{\ell}$ in $(e_1,\cdots, e_n)$, i.e., $e_t=c^{\ell}$. 
Then, due to random ordering, and random choice of $\bar I$ from $I_1, \ldots, I_\ell$, we have
$$\mathbb{E}_{(\bar I, \pi)}[|Alg\cap\{e_1,\cdots,e_{t-1}\}\cap D^1|]= \cdots = \mathbb{E}[|Alg\cap\{e_1,\cdots, e_{t-1}\}\cap D^{L}|]\ .$$
Also, since $D^\ell, \ell=1,\ldots, L$ are disjoint,
$$\sum_{\ell=1}^L \mathbb{E}[|Alg\cap\{e_1,\cdots,e_{t-1}\}\cap D^{\ell}|] \le \eta(k+1)\ .$$
Since $\bar D=D^\ell$ with probability $1/L$,  we have
$$H:=\mathbb{E}[|Alg\cap\{e_1,\cdots, e_{t-1}\}\cap \bar{D}|]=  \frac{1}{L} \sum_{\ell=1}^L \mathbb{E}[|Alg\cap\{e_1,\cdots,e_{t-1}\}\cap D^{\ell}|] \le \frac{1}{L} \eta(k+1)\ . $$ 
Now define
$G:=\mathbb{E}[|Alg\cap\{e_t,\cdots, e_{n}\}\cap \bar{D}|]$.
We have
$$G\le \mathbb{E} [|\bar{D}\cap\{e_t,\cdots, e_n\}|]  \le k/2\ .$$
Thus $$\mathbb{E}[|Alg\cap \bar{D}|] \le G+H\le k/2+\eta(k+1)/L\ .$$
\end{proof}
Now on input $\bar{I}$, if the algorithm doesn't select $\bar c$ as part of shortlist $Alg$, then by definition of $f$ for sets that do not contain any item of type $C$, we have 
$$f(A^*):=\max_{S \subseteq Alg:|S|\le k+1} f(S) \le 1+ g(k) = k+ \frac{k}{2}$$ 

Otherwise, if algorithm selects $\bar c$ then by definition of $f$ 
$$f(A^*):=\max_{S \subseteq Alg:|S|\le k+1} f(S) \le \max_{S \subseteq Alg \backslash (\bar D \cup \{\bar c\}):|S|\le k-|Alg \cap \bar D|} f(S \cup \bar D \cup \{\bar c\}) 
= k+ \frac{k}{2} +\frac{1}{2} |Alg\cap \bar{D}|$$ 
therefore $$\mathbb{E}[f(A^*)] \le k+\frac k 2+ \frac k 4+ \frac{\eta(k+1)}{2L} = \frac{7k}{4} + \frac{k\eta(k+1)}{n} \ .$$

Since the optimal is equal to $\Ex[f(\bar T)]=2k$, the competitive ratio is upper bounded by
$$ 
\frac{7}{8} + \frac{\eta(k+1)}{2n}$$

This proves competitive ratio upper bound of $\frac{7}{8} + o(1)$ when $\eta(k+1)=o(n)$, to complete the proof of Theorem \ref{hardness}.


 


\bibliographystyle{plainnat}
\bibliography{mybib}

\section{Appendix}
\subsection{Some useful properties of $(\alpha,  \beta)$  windows}
\label{app:windows}
Lemma \ref{lem:indep} is a corollary of the following lemma.
\begin{lemma}\label{eqprob}
For each $y\in [k\beta]^n$, $Pr\{Y=y\}=(\frac{1}{k\beta})^n$.
\end{lemma}
\begin{proof}
Consider pair $(\pi, \psi)$, where $\pi:I\to [n]$  defines the random order on $I$. 
Throw $n$ balls uniformly into $k\beta$ bins. Let $\psi_j$ be the  bin that $j$-th ball goes into. 
Note that $\psi$ and $\pi$ are independent. Now consider
\[
(\frac{1}{k\beta} s_1 + \cdots + \frac{1}{k\beta} s_{k\beta})^n = 
\left(\frac{1}{k\beta}\right)^n \sum_{t_1,\cdots, t_{k\beta}} Q_{t_1,\cdots, t_{k\beta}}  {s_1}^{t_1} \cdots  {s_{k\beta}}^{t_{k\beta}}\ .
\]
For a given $y\in  [k\beta]^I$,  suppose $t_i$ is the number of elements in slot $s_i$.
Then from above expansion the probability that $\psi$ divides input into slots of size $t_1,\cdots, t_{k\beta}$ is
\[
\left(\frac{1}{k\beta}\right)^n Q_{t_1,\cdots, t_{k\beta}}  =  \left(\frac{1}{k\beta}\right)^n {n \choose t_1, t_2, \cdots, t_{k\beta}}\ .
\]
Now for such a $\psi$, the probability that permutations $\pi$ satisfy $Y=y$ is 
\[
\frac{t_1! \cdots t_{k\beta}! }{n!}\ .
\]
Thus the probability that $Y=y$ is 
\[
\left(\frac{1}{k\beta}\right)^n {n \choose t_1, t_2, \cdots, t_{k\beta}} \frac{t_1! \cdots t_{k\beta}! }{n!} = \left(\frac{1}{k\beta}\right)^n\ .
\]

\end{proof}

\subsection{m-submodular functions}
\label{app:msubm}
\begin{definition}
We call a function $f:2^{A}\rightarrow \mathbb{R}, $ m-submodular if it is submodular and there exists a submodular function $F$ such that:
\[ 
f(S)= \max_{T\subseteq S, |T|\le m} F(T)\ .
\]
\end{definition}
Note that  maximum node weighted bipartite matching and  maximum edge weighted bipartite matching defined on $G=(X\times Y)$ with $|Y|=m$ are m-submodular. 
(the assignments will be done at the end of algorithm after all the selections are made )

\begin{remark}
$f(S)=\max_{a\in S} a$ is a $1$-submodular function.
\end{remark}

Now consider the following simple greedy algorithm:

\begin{algorithm*}[ht]
  \caption{~\bf{Select-If-it-Improves}($f,I , u$)}
  \label{alg:SIII} 
\begin{algorithmic}[1]
\State $R\leftarrow \emptyset$
\For {i=0 to n}
\If {$f(R\cup \{a_i\}) > f(R) $} 
\State $R\leftarrow R\cup \{a_i\}$
\EndIf
\EndFor
\State return $S\leftarrow R\setminus \{a_1,\cdots, a_u\}$
\end{algorithmic}
\end{algorithm*}

\scomment{the remark below is incorrect. We have changed Algorithm \ref{alg:SIIImax} to start at $u$, and also end once the shortlist is of size $L$.}


\scomment{please use "theorem" very sparingly. Theorem should be used only for the main results of the paper. We should not have any theorems beyond the 4 results in "Our results" section. I have changed your "theorems" to "lemma"}
\begin{lemma}
Suppose $R$ is the set of elements selected in the above algorithm on the input $I=\{a_1,\cdots, a_n\}$ then $f(R) = f(I)$.
\end{lemma}
\begin{proof}
Suppose $R_i$ is the subset selected at iteration $i$. Since $f$ is submodular, if $f(R_i\cup \{a_i\}) \le f(R_i)$ 
then $f(R\cup\{a_i\}) \le f(R)$. Therefore every $e\in I\setminus R$ has marginal value 0 with respect to $R$, i.e., $f(R)=f(I)$.
\end{proof}


\begin{lemma}
$E[|S|] = m\ln(n/u)$.
\end{lemma}
\begin{proof}
Suppose $f(R_i) = F(T)$, where $|T|=m$. If $a_i \notin T$ then it is not selected. 
Because if $a_i\notin T$ and is selected then it should have positive $f$ marginal value, which means $f(R_i) = f(R_{i-1} \cup \{a_i\} ) > f(R_{i-1}) =  F(T) $, it is a contradiction. 
Thus only elements in $T$ will be selected at position $i$.  

If you consider all permutations of $R_i$, an element will be selected at position $i$ if it is subset of $T$, the probability is $|T|/i= m/i$. Therefore the total expected number of selections $\mathbb{E}[|R|]$, will be at most $\sum_{i=1}^{n} \frac{m}{i}  = m \ln n $. Similarly 
$\mathbb{E}[|S|] \le \sum_{i=u}^{n} \frac{m}{i}=m\ln(n/u)$.
\end{proof}

\toRemove{
\begin{lemma}
$E[|S|] = m\log n$
\end{lemma}
\begin{proof}
Suppose $f(S_i) = F(T)$, where $|T|=m$. If $a_i \notin T$ then it is not selected. 
Because if $a_i\notin T$ and is selected then it should have positive $f$ marginal value, which means $f(S_i) = f(S_{i-1} \cup \{a_i\} ) > f(S_{i-1}) =  F(T) $, it is a contradiction. 
Thus only elements in $T$ will be selected at position $i$.  

If you consider all permutations of $S_i$, an element will be selected at position $i$ if it is subset of $T$, the probability is $|T|/i= m/i$. Therefore the total expected number of selections will be at most $\sum_{i=1}^{n} \frac{m}{i}  = m \ln n $
\end{proof}
}

In the rest we will make the following assumption:\\
\textbf{Assumption. } 
There is a unique optimal solution OPT.


\begin{lemma} \label{hprob}
Algorithm~\ref{alg:SIII}, with parameter $u=n\epsilon$,  selects a set $S$ with 
$$|S|<m\ln (1/\epsilon)+\ln(1/\delta)+\sqrt{\ln^2{1/\delta}+2m\ln(1/\delta)\ln(1/\epsilon)}$$ 
and $E[f(S)]=(1-\epsilon-\delta)OPT$.
\end{lemma}
\begin{proof}
We use Freedman's inequality.
 If $\{a_1,\cdots, a_i\}$ has a unique maximum subset of size $m$, define $Y_i$ to be a random variable indicating whether the algorithm has selected $a_i$  or not, where $Y_i=1-\frac{m}{i}$ if $a_i$ is selected and $Y_i=-\frac{m}{i}$ otherwise. 
 If it has not unique solution define $Y_i=0$. ($a_i$ will not be selected)
 Also define $\mathcal{f}_i=\{Y_{n},Y_{n-1}, \cdots, Y_{n-i+1}\}$.

Let $X_i=\sum_{j=n-i+1}^{n} Y_j$,
then $\{X_i\}$ is a martingle, 
because $E[X_{i+1}|\mathcal{f}_{i}] = X_i+E[Y_{n-i}|\mathcal{f}_i]$.
 If $\{a_1,\cdots, a_i\}$ has a unique maximum subset of size $m$, $E[Y_{n-i}|\mathcal{f}_i]=(m/i)(1-m/i)+(1-m/i)(-m/i)=0$,
 otherwise $E[Y_{n-i}|\mathcal{f}_i]=0$. So in both cases $E[X_{i+1}|\mathcal{f}_{i}] =X_i$.
As in the Freedman's inequality, let $L=\sum_{i=n\epsilon}^{n} Var(Y_i| f_{i-1})$. 
\begin{align*}
L  =  \sum_{i=n\epsilon}^{n} \frac{m}{i}  (1-\frac{m}{i})^2 + (1-\frac{m}{i}) (\frac{m}{i})^2 
< \sum_{i=n\epsilon}^{n} \frac{m}{i} =m\ln (1/\epsilon) \ .
\end{align*}
Therefore,
\[
Pr(X_{n-n\epsilon}\ge \alpha \text{ and }  L\le m\ln (1/\epsilon) ) \le exp(-\frac{\alpha^2}{2m\ln (1/\epsilon)+ 2\alpha })   < \delta \ .
\]
Thus we get $\alpha > \ln(1/\delta)+\sqrt{\ln^2{1/\delta}+2m\ln(1/\delta)\ln(1/\epsilon)}$.
Also $|S| = X_{n-n\epsilon} + m\ln(1/\epsilon)$. Therefore 
\[
Pr(|S| \ge  m\ln (1/\epsilon)+\ln(1/\delta)+\sqrt{\ln^2{1/\delta}+2m\ln(1/\delta)\ln(1/\epsilon)}  )  \le \delta\ .
\]
So with probability $(1-\delta)$, $|S| \le m\ln (1/\epsilon)+\ln(1/\delta)+\sqrt{\ln^2{1/\delta}+2m\ln(1/\delta)\ln(1/\epsilon)} $.  
Since $F$ is submodular, $E[F(OPT\cap \{a_{n\epsilon},\cdots, a_n \} )] = (1-\epsilon)OPT$.
Therefore $E[f(S)] \ge (1-\epsilon)OPT-\delta OPT$. 
\end{proof}
\scomment{Considering that Remark 1 is incorrect,  proof of Proposition \ref{maxanalysis} does not directly follow from above. We need to explicitly add the proof of Proposition \ref{maxanalysis}. Or else you need modify Algorithm 3 so that Algorithm~\ref{alg:SIIImax} is a special case.}



\maxanalysis*
\begin{proof}
Set $u=n\delta/2$ and $\epsilon=\delta/2$, and $f(T):=\max_{a\in T} a$.
 The set $S$ returned by the Algorithm~\ref{alg:SIII} is the same as the set $A$ selected by Algorithm~\ref{alg:SIIImax},  when $|S|<L$.
From lemma~\ref{hprob} with probability $(1-\delta)$, $|S|<(3+\sqrt{2})\ln(2/\delta)<L$. 
Also $f(S)=A^*$.
Therefore w.p. $(1-\delta)$ Algorithm~\ref{alg:SIIImax} returns $A^*$.
\end{proof}

\scomment{What is the purpose of the results below? Are you trying to prove some lower bound? I am removing it.}
\toRemove{
\begin{lemma}
Any online algorithm needs to select at least $\frac{1}{2}\log(1/\epsilon)-\frac{1}{2}$ elements, in expectation, to select the maximum element  with probability at least $(1-\epsilon)$ in a random permutation. (we assume $n> 1/\epsilon$)
\end{lemma}
\begin{proof}
Let $I_i=\{a_1,\cdots, a_{n/2^{i-1}}\}$, $T_i=\{a_{n/2^i+1}, \cdots, a_{n/2^{i-1}}\}$, and $R_i=I_1\setminus I_i$, for $i=1, \cdots, \log(1/\epsilon)$. Suppose $M_i$ is the maximum element in $I_i$. Let $S$  be the set of selected elements by algorithm at the end of execution.
Suppose $\epsilon_i= E[M_i\notin S| M_i\in T_i]$,
then $E[|S\cap T_i|] \ge \frac{1}{2} (1-\epsilon_i)$.
Therefore $E[|S|] \ge \sum_{i=1}^{\log(1/\epsilon)} \frac{1}{2} (1-\epsilon_i) $. Also w.p. $\frac{1}{2^{i}}$, $M_1 \in T_i$, 
thus $\sum_{i=1}^{\log(1/\epsilon)} \frac{1}{2^{i}} \epsilon_i  \le \epsilon$. 
(Note that we use the fact $ E[M_i\notin S| M_i\in T_i \text{ and }  M_i=M_1] \le \epsilon_i$, i.e, if algorithm selects one element it will select 
it even if we increase its value and keep the rest untouched) 
Now $E[|S|]$ is minimized under above constraint if $\frac{1}{2^{\log(1/\epsilon)}}\epsilon_{\log(1/\epsilon)} = \epsilon$ and the rest are zero.
Hence $E[|S|] \ge \frac{1}{2}\log(1/\epsilon)-\frac{1}{2}$.
\end{proof}
\begin{prop}
For a m-submodular function $F$, any online algorithm needs to select at least $\frac{m}{2}\log(m/\epsilon)-\frac{m}{2}$ elements, in expectation, to select a set $S$, with $|S|\le m$  such that  $E[F(S)] \ge (1-\epsilon)OPT$, in a random permutation. 
\end{prop}
\begin{proof}
Apply previous theorem on m separate 1 to n matching.
\end{proof}
}
\toRemove{
\subsection{A special family of monotone submodular functions}
We find some assumption for sumbmodular functions under which the Kleinberg's algorithm gives asymptotic optimal solution.
We consider special submodular functions that are defined on real value elements. In other words, submodular function $f$ defined on a ground set $X\subseteq \mathbb{R}$. i.e., $f:2^{X} \rightarrow \mathbb{R}$. 
With some reasonable assumptions 
we can represent $f$ in a simpler way.
Let's define $f(x_1,\cdots, x_k) = f(\{x_1, \cdots, x_k\})$, where $k$ is the number of items we are allowed to select from the input.
The value of $k$ elements $a_1, \cdots, a_k$ selected from the ground set is $f(\{a_1, \cdots, a_k\})$.
We can also represent it by symmetric function $f(a_1,\cdots, a_k)$. 

Now we focus on properties of $f:\mathbb{R}^k \rightarrow \mathbb{R}$. ( Note the domain of  $f:2^X \rightarrow \mathbb{R}$)
By making the following assumptions about $f$, we will show Kleinberg's algorithm asymptotically approaches optimal solution.

\begin{enumerate}
\item $f$ is a monotone submodular function and $f(a_1,\cdots, a_i+\alpha,\cdots, a_k)  \ge  f(a_1, \cdots, a_i, \cdots, a_k  )$, for $\alpha >0$ and $1\le i \le k$.

\end{enumerate}


An immediate consequence of this assumption is that the optimal offline solution is the set of $k$ largest elements in the input (not necessarily unique).
Now we claim that the Kleinberg algorithm works under this assumption which means that the competitive ratio of the algortihm assymptotically approaches to 1.


\subsubsection{Analysis of Kleinberg's Algorithm }

Suppose $X=\{x_1 \ge x_2 \ge \cdots \ge x_n\}$.  We assume the optimal offline solution is $\{x_1,\cdots, x_k\}$ (not necessarily unique).
i.e., $f(\{x_1, \cdots, x_k\}) \ge f(\{x_{j_1}, \cdots, x_{j_k}\})$.
The sketch of our approach is to show that the Kleinberg algorithm in fact selects $(1-5/\sqrt{k}).k$ many elements from top $k$ elements of input, i.e., $x_1, x_2, \cdots, x_k$. 
Suppose we have two subsets $S,U\subseteq X$. We say $S\ge U$ if $S=\{p_1\ge p_2 \ge \cdots \ge p_r\}$ and $U=\{q_1 \ge \cdots \ge q_r\}$, and $p_1 \ge q_1, \cdots, p_r>q_r $. Note that because of property (1), $f(S)>f(U)$.
We denote by $P_S$ the probability that the algorithm selects all the elements of $S$ from the input. For $S,U \subseteq X$ and $S \ge U$  we show that $P_S \ge P_U$.
Therefore the expected value of items selected by the algorithm is at least as much as when we select $(1-5/\sqrt{k}).k$  many items uniformly at random, in which case because of monotonicity and submodularity we can say that  its expected value is at least $(1-5/\sqrt{k}).f(x_1,\cdots, x_k)$.

Now we remind the Kleinberg algorithm. 
Suppose the input sequence is $a_1, \cdots, a_n$, and we want to irrevocably select $k$ elements in an online manner.
The algorithm recursively divides the input into two halves:
It draws a random variable $m$ from binomial distribution $m=B(n, 1/2)$. 
Recursively select $\ell =\lfloor k/2 \rfloor$ elements from $a_1, \cdots, a_{m}$. 
Suppose $y_1> y_2 > \cdots > y_m$ are the elements in the first half.
After observing $a_m$, select every element which exceeds $y_{\ell}$, until we have selected $k$ items or have seen all elements of $S$. 

Let $T\subseteq S$ denote the $k$ largest elements of $S$. In the analysis they set to 0 every element which is not among the top $k$ elements, and call it the modified value of that element, ie.,
modified value of an element $x \in S$ is equal to its value if $x \in T$ and zero otherwise; 

\begin{theorem}
Let $S$ be any set of $n$ non-negative real numbers. Let $T$ be the $k$ largest elements of $S$, and $OPT=f(T)$.
The expected number of elements of $T$ selected by the algorithm is at least $(1 -5/\sqrt{k})k$.
\end{theorem}
\begin{proof}
The proof is by induction on $k$. 
Let $y_1 >y_2 >\cdots >y_m$ be the first $m$ samples, and let $z_1 > z_2 > . . . > z_{n-m}$ be the remaining samples; denote these sets by $Y$ and $Z$ respectively. 
Conditional on the event $|Y \cap T | = r$, 
the expected number of the top $\ell = k/2$ elements of $Y\cap T$ is bounded below by

\[
\sum_{r=1}^{k} Pr(|Y \cap T|=r) . (\min(r,l)/k) k \ge (1-\frac{1}{2\sqrt{k}}) \frac{k}{2}.
\]

Thus the expected number of elements selected from $Y\cap T$ is
at least $(1-5/ \sqrt{k/2})·(1-1/2\sqrt{k}).(k/2)$, by the induction hypothesis.

Similar to Kleinberg define the random variable $q$ which counts the number of elements of $Z$ exceeding $y_{\ell}$. Let $q_i$ be the number of elements of $Z$ whose value lies between $y_i$ and $y_{i-1}$. The $q_i$ are stochastically dominated by i.i.d. geometrically distributed random variables each having mean 1 and variance 2.
Thus their sum $q =  \sum_{i=1}^{\ell} q_i$ satisfies $E[|q-\ell|] \le \sqrt{k}$.
Let $r=|q-\ell|$. The expected number of elements algorithm selects from $Z\cap T$ is at least $ \ell-r$ (if $y_{\ell} \in T$ the argument is similar to Kleinberg if $y_{\ell} \notin T$ Then all the elements of $Z \cap T$ will be selected  ). 
Removing the conditioning on $r$ and recalling that $E(r) \le \sqrt{k}$, the algorithm selects a subset of $Z$ with expected size $\ell-\sqrt{k}$. Combining this with the above paragraph will show that the expected number of elements selected from $Y\cup Z$ is $(1-5/\sqrt{k})k$.

\end{proof}

If the algorithm could  select  these subsets  uniformly at random then because of submodularity the  expected value of function $f$ that the algorithm selects from $T$ will  be $(1-5/\sqrt{k})OPT$.
Next lemma will prove the expected value of selected elements by the algorithm is at least as much as uniform case.

\begin{lemma}
If $T=\{a_1,a_2, \cdots, a_k\}$ and $a_1>a_2> \cdots > a_k$, the probability that the algorithm selects $a_i$ is larger than the probability it selects $a_j$ for $i<j$.
\end{lemma}
\begin{proof}
Let's fix the set $Y$ and $Z$ but not the ordering of elements in $Z$.
the algorithm selects all the elements of $Z$ greater than $y_{\ell}$ until it selects $\ell$ elements.
So among all different possible permutations for $Z$, the algorithm will select the first $\ell$ elements of $Z$ greater than $y_{\ell}$.
If the total number of these elements, $t$, is less than or equal to $\ell$ then regardless of permutation of elements in $Z$ we select the same subset of $Z$ (the $t$  largest elements of $Z$).  

We should see what elements of $T\cap Z$ will be missed by the algorithm, and show that the smaller an element is the larger the probability of missing that item is. Suppose $a\in T\cap Z$, it will be missed by the algorithm if either $a\le y_{\ell} $ or $a \ge y_{\ell}$ but there are $\ell$ elements larger than $y_{\ell}$ appearing before $a$ (after selecting $\ell$ elements the rest are truncated).

If $a,b\in Z$  and $a>b$ the probability that we miss $a$ for the first reason is less than the probability that we miss $b$ for the first reason.
Also if both $a$ and $b$ pass the condition of second case, i.e., $a> y_{\ell}$ and $b> y_{\ell}$ then they are both equally likely to be missed by algorithm (only depend on their position in the the input). 
So $b$ is more likely to be missed by the algorithm.  

If $a,b \in Y$ then by induction you can show the probability that $a$ is selected is larger than the probability that $b$ is selected.

Now consider the case that one of $a$ or $b$ is in $Y$ and the other is in $Z$. The probability that $a\in Y, b\in Z$ is the same as the probability $b\in Y, a\in Z$. 
By switching the place of $a$ and $b$ and fixing the rest, the probability that $a$ is selected in $Y$ is larger than the probability $b$ is selected because if $b$ is larger than threshold, $a$ is too. If $a$ is missed for truncation, $b$ will also be truncated if we replace it in the same place as $a$ is in $Y$.
For the $Z$ part, if $a\in Y, b\in Z$, the threshold $y_{\ell}$ is larger than or equal the case $b\in Y, a\in Z$. Thus if $b$  is not missed in $Z$ with the larger threshold, $a$ which is larger than $b$ will not be missed with the smaller threshold. 

Hence the probability that the algorithm misses $a_i$ is less than or equal the probability that it misses $a_{i+1}$.

\end{proof}

\begin{lemma}
Assuming the expected number of elements that the algorithm selects is $(1-5/\sqrt{k}).k$, and $p_1>p_2>\cdots> p_k$ are respectively the probability that  each element of $T=\{a_1>\cdots>a_k\}$ is selected, then $E[f(S)]>(1-5/\sqrt{k})OPT$, where $S$ is the subset of $T$ selected by the algorithm.
\end{lemma}
\begin{proof}
First by induction on $k$, we prove that the probability distribution $\Pi$ over subsets of $T=\{a_1,\cdots, a_k\}$ that minimizes $E_{\Pi, S\in T} f(S)$, is the following distribution: $\Pi(\{a_1,\cdots, a_t\})=(p_t-p_{t+1})$,  $1 \le t\le k$. Suppose $p_{k+1} = 0$, and $P(\emptyset) = 1-p_1$.

For $k=1$,  $\Pi(\{a_1\})=p_1$ and $\Pi(\emptyset)=1-p_1$. Hence $E_{\Pi}[f(S)]=p_1f(\{a_1\})=p_1.OPT$, which is the only option.

Now we want to show that the above distribution is a minimizer for $T$. By projecting $\Pi$ to subsets of $T'=\{a_1,\cdots, a_{k-1}\}$, say $\Pi'$ , we have $\Pi'(S)= \Pi(S)+\Pi(S\cup\{a_k\})$, $\forall S\subseteq T'$. The marginal probabilities of elements in $T'$ are $p_1>\cdots > p_{k-1}$. By induction the $\Pi'$ is the minimizer of $E_{\Pi', S\in T'}[f(S)]$. Now we show $\Pi$ is minimizer of for $T$. Consider a different distribution $\Theta$. Suppose $\Theta'$ is its projection to $T'$. 
\[
E_{\Theta, S \in T} f(S) = \sum_{S\subseteq T} \Theta(S) f(S)= \sum_{S\subseteq T'} (\Theta(S) f(S)+ \Theta(S\cup\{a_k\}) f(S\cup\{a_k\}) )  = 
\]
\[
E_{\Theta', S\in T'} f(S) +  \sum_{S\subseteq T'} \Theta(S\cup \{a_k\})  (f(S\cup\{a_k\}) - f(S)) 
\]
\[
\ge E_{\Pi', S\in T'} f(S) + p_k  (f(T) - f(T'))  = E_{\Pi, S\in T} f(S).
\]

Therefore  $\Pi$ is minimizer for $T$. (the last inequality is because of submodularity and induction hypothesis)

Now we lowerbound $E_{\Pi,  S\in T} f(S)$
\[
E_{\Pi,  S\in T} f(S) = \sum_{S\subseteq T} \Pi(S) f(S)  = \sum_{t=1}^{k} f(\{a_1,\cdots, a_t\}) (p_t-p_{t+1}) 
\]
\[
\ge \sum_{t=1}^{k} \frac{t}{k}OPT (p_{t} - p_{t+1})  = \frac{OPT}{k} \sum_{t=1}^{k} p_i 
\]

Thus The expected output of Kleinberg algorithm is at least $\frac{OPT}{k}(\sum_{i=1}^{k}) p_i = (1-5/\sqrt{k}) OPT$


\end{proof}
}

\end{document}